\documentclass[10pt,journal,compsoc]{IEEEtran}

\usepackage{color, soul}


\usepackage[utf8]{inputenc}
\usepackage[T1]{fontenc}
\usepackage{textcomp}     
\usepackage{mathptmx}  
\usepackage{fancyhdr}
\usepackage{times}
\usepackage{listings}
\usepackage{amssymb}
\usepackage{amsfonts}
\usepackage{amsmath}
\usepackage{array}
\usepackage{url}
\usepackage{pgfplots}

\usepackage{flushend}

\usepackage{epstopdf}
\usepackage{tikz}

\usepackage{multirow}
\usepackage{color, colortbl}

\usepackage{booktabs}
\usepackage{graphicx}
\usepackage{subcaption}
\usepackage{verbatim}
\usepackage[font=small,labelfont=bf]{caption}
\usepackage{ifsym}
\usepackage[algo2e,vlined,linesnumbered,ruled,noend]{algorithm2e}   
\usepackage{relsize}
\usetikzlibrary{shapes,arrows}
\usetikzlibrary{patterns}
\newcommand{\removelatexerror}{\let\@latex@error\@gobble}

\usepackage{pifont}
\newcommand{\cmark}{\ding{51}}%
\newcommand{\xmark}{\ding{55}}%

\usetikzlibrary{shapes,positioning,fit,calc}
\usetikzlibrary{snakes}
\usetikzlibrary{trees}
\usetikzlibrary{arrows}

\tikzset{
  treenode/.style = {align=center, inner sep=0pt, text centered,
  font=\sffamily},
  arn_n/.style = {treenode, circle,black, draw=black, text width=1.3em},%
  level distance = 0.5cm
}

\def\bsq#1{\textquotesingle #1\textquotesingle}

\DeclareCaptionType{copyrightbox}

\newcommand{\kat}[1]{ {\mbox{\tiny #1}} }

\newcommand{\LJoin}{\tiny\textifsym{d|><|}}

\newcolumntype{M}[1]{>{\centering\arraybackslash}m{#1}}
\newcommand\VRule[1][\arrayrulewidth]{\vrule width #1}
\newcolumntype{C}[1]{>{\centering\let\newline\\\arraybackslash\hspace{0pt}}m{#1}}

\definecolor{c1}{RGB}{75,115,75}
\definecolor{c2}{RGB}{60,116,131}
\definecolor{c3}{RGB}{157,50,50}
\definecolor{c4}{RGB}{0,177,179}
\definecolor{c5}{RGB}{212,175,55}
\definecolor{c6}{RGB}{218, 139, 139}
\definecolor{grey}{RGB}{169,169,169}

\newtheorem{definition}{Definition}
\newtheorem{theorem}{Theorem}
\newtheorem{proof}{Proof}

\newtheorem{example}{Example}
\newtheorem{proposition}{Proposition}
\newtheorem{corollary}{Corollary}

\usepackage{todonotes}

\usepackage[inline]{enumitem}

 \newcommand{\squishlist}{
 \begin{list}{$\bullet$}
  { \setlength{\itemsep}{0pt}
     \setlength{\parsep}{1pt}
     \setlength{\topsep}{1pt}
     \setlength{\partopsep}{0pt}
     \setlength{\leftmargin}{1.5em}
     \setlength{\labelwidth}{1em}
     \setlength{\labelsep}{0.5em} } }
 \newcommand{\squishend}{\end{list}}

\title{Lineage-Aware Temporal Windows: Supporting Set Operations in Temporal-Probabilistic Databases}

\selectfont\ttfamily\upshape

\author{\IEEEauthorblockN{Katerina Papaioannou\IEEEauthorrefmark{1},
Martin Theobald\IEEEauthorrefmark{2},
Michael H. B{\"o}hlen\IEEEauthorrefmark{1}} \\
\vspace*{0.15cm}
\IEEEauthorblockA{\IEEEauthorrefmark{1} \em Department of Computer Science, 
University of Zurich \\ $\mathtt{\{papaioannou,boehlen\}@ifi.uzh.ch}$}\\
\vspace*{0.15cm}
\IEEEauthorblockA{\IEEEauthorrefmark{2} \em Computer Science \& Communications Research Unit, University of Luxembourg $\mathtt{martin.theobald@uni.lu}$ }}



\begin{document}

\maketitle

\begin{abstract}


In temporal-probabilistic (TP) databases, the combination of the temporal and the probabilistic dimension adds significant overhead to the computation of set operations. Although set queries are guaranteed to yield linearly sized output relations, existing solutions exhibit quadratic runtime complexity. They suffer from redundant interval comparisons and additional joins for the formation of lineage expressions. In this paper, we formally define the semantics of set operations in TP databases and study their properties. For their efficient computation, we introduce the {\em lineage-aware temporal window}, a mechanism that directly binds intervals with lineage expressions.  We suggest the {\em lineage-aware window advancer} (LAWA) for producing the windows of two TP relations in linearithmic time, and we implement all TP set operations based on LAWA. By exploiting the flexibility of lineage-aware temporal windows, we perform direct filtering of irrelevant intervals and finalization of output lineage expressions and thus guarantee that no additional computational cost or buffer space is needed. A series of experiments over both synthetic and real-world datasets show that (a) our approach has predictable performance, depending only on the input size and not on the number of time intervals per fact or their overlap, and that (b) it outperforms state-of-the-art approaches in both temporal and probabilistic databases.

\end{abstract}

\section{Introduction}
\label{sec:intro}

The need to manage large, temporal-probabilistic (TP) datasets appears
in a wide range of applications, such as temporal predictions (e.g.,
weather) as well as in sensor (e.g., RFID) and other forms of scientific data, which are inherently temporal and frequently contain
erroneous measurements. The combination of the
temporal and the probabilistic dimension in a relational database setting requires that
the result of the relational algebraic operators complies with the
semantics of each dimension. To this end, probabilistic databases rely on the {\em
possible-worlds semantics} to define for which instances of the
probabilistic database an answer tuple is valid.  Conversely, temporal 
databases use the {\em sequenced semantics} to define at which time
points (i.e., snapshots of the temporal database) an answer tuple is
valid. The possible-worlds and the sequenced semantics very nicely
complement each other, since they both employ the notion of {\em data
lineage} to guarantee a closed and complete representation model for
temporal, uncertain data.

In this paper, we introduce a {\em sequenced} TP data model and,
under this model, we define and implement the three principle TP set operations, {\em intersection} ($\cap^\kat{Tp}$), {\em union} ($\cup^\kat{Tp}$) and {\em difference} ($-^\kat{Tp}$)\footnote{Note that, although in a relational setting {\em intersection} is a dependent operation which can be expressed in terms of two {\em difference} operations, we show that considering {\em intersection} as a separate operator has significant performance advantages in a TP setting.}. In the
following example, we illustrate the usefulness of TP set operators in an application involving temporal-probabilistic predictions.

\begin{figure*}[!ht]\centering
  \setlength{\tabcolsep}{5pt}
  \def\arraystretch{1.2}
  \begin{subfigure}[b]{0.75\linewidth} \centering
    \scalebox{0.95}{
      \begin{tabular}{c|c|c|c}
        \multicolumn{4}{l}{$\mathbf{a}$ \textbf{(productsBought)}} \\
        \hline
        \textit{Product} & $\lambda$ & $T$ & $p$ \\ \hline
        \bsq{milk} & {$a_1$} & {[2,10)} & {0.3} \\
        \bsq{chips}  & {$a_2$} & {[4,7)}  & {0.8} \\
        \bsq{dates}  & {$a_3$} & {[1,3)} & {0.6} \\ \hline

      \end{tabular}}
%
    \qquad
    \scalebox{0.95}{
      \begin{tabular}{c|c|c|c}
        \multicolumn{4}{l}{$\mathbf{b}$ \textbf{(productsOrdered)}}\\
        \hline
        \textit{Product} & $\lambda$ & $T$ & $p$ \\ \hline
        \bsq{milk} & $b_1$ & {[5,9)} & {0.6} \\
        \bsq{chips} & $b_2$  & {[3,6)}  & {0.9} \\ \hline
      \end{tabular}}
%
    \qquad
    \scalebox{0.95}{
      \begin{tabular}{c|c|c|c}
        \multicolumn{4}{l}{$\mathbf{c}$ \textbf{(productsInStock)}} \\
        \hline
        \textit{Product} & $\lambda$ & $T$ & $p$ \\ \hline
        \bsq{milk} & {$c_1$} & {[1,4)} & {0.6} \\
        \bsq{milk} & {$c_2$} & {[6,8)} & {0.7} \\
        \bsq{chips} & $c_3$ & [4,5) & {0.7} \\
        \bsq{chips} & $c_4$ & [7,9) & {0.8} \\ \hline
      \end{tabular}}
    \vspace*{0.1cm}
    \caption{Input Relations}
    \label{fig:tpdbRelations}
  \end{subfigure}
\qquad
    \begin{subfigure}[b]{0.2\linewidth}\centering
      \tikzstyle{bag} = [text width=5em, text centered]
      \scalebox{0.8}{
        \begin{tikzpicture}[sloped]
          \tikzstyle{level 1}=[level distance=1cm, sibling distance=1.7cm]
          \tikzstyle{level 2}=[level distance=1cm, sibling distance=1.5cm]
          \tikzstyle{level 3}=[level distance=1cm, sibling distance=1cm]
          \tikzstyle{level 3}=[level distance=1cm, sibling distance=2cm]
          \node[bag] {$\mathbf{-}^\kat{Tp}$}
	    child {
              node[bag] {\large $\mathbf{c}$}
            }
            child {
              node[bag] {\large{$\mathbf{\cup}^\kat{Tp}$}}
                child {
                  node[bag] {\large $\mathbf{a}$}
                }
                child {
                  node[bag] {\large $\mathbf{b}$}
                }
            };
          \end{tikzpicture}}
      \vspace*{3mm}
      \caption{Query}
      \label{fig:tpdbQueryPlan}
    \end{subfigure}

    \vspace*{0.8cm}
    
  \center
    \begin{subfigure}[b]{0.28\linewidth} \centering
      \scalebox{0.95}{
        \begin{tabular}{c|c|c|c}
          \hline 
          \textit{F}  & $\lambda_r$ & $\lambda_s$  & $T$   \\ \hline
          \bsq{milk}  & {$a_1$} & $\mathtt{null}$  & [2,5) \\
          \bsq{milk}  & {$a_1$} & {$b_1$}		   & [5,9) \\
          \bsq{milk}  & {$a_1$} & $\mathtt{null}$  & [9,10) \\
          \bsq{chips} & $\mathtt{null}$ & {$b_2$}  & [3,4) \\
          \bsq{chips} & {$a_2$} & {$b_2$}          & [4,6) \\
          \bsq{chips} & {$a_2$} & $\mathtt{null}$  & [6,7) \\ 
          \bsq{dates} & {$a_3$} & $\mathtt{null}$  & [1,3) \\ [0.1cm] 
          \hline
        \end{tabular}}
      \vspace*{3mm}
      \caption{${\bf W}({\bf a}, {\bf b})$}
      \label{fig:latwAB}
    \end{subfigure}
\qquad
    \begin{subfigure}[b]{0.28\linewidth} \centering
      \scalebox{0.95}{
        \begin{tabular}{c|c|c|c}
          \hline 
          \textit{F}  & $\lambda_r$ & $\lambda_s$  & $T$   \\ \hline
          \bsq{milk}  & {$c_1$} & $\mathtt{null}$  & [1,2) \\
          \bsq{milk}  & {$c_1$} & {$a_1$}		   & [2,4) \\
\rowcolor{c6} \bsq{milk}  & $\mathtt{null}$ & {$a_1$}  & [4,5) \\
\rowcolor{c6} \bsq{milk}  & $\mathtt{null}$ & {$a_1 \lor b_1$} & [5,6) \\          
          \bsq{milk}  & {$c_2$} & {$a_1 \lor b_1$} & [6,8) \\
\rowcolor{c6} \bsq{milk}  & $\mathtt{null}$ & {$a_1 \lor b_1$} & [8,9) \\                    
          \bsq{chips} & {$c_3$} & {$a_2 \lor b_2$} & [4,5) \\
\rowcolor{c6} \bsq{milk}  & $\mathtt{null}$ & {$a_2 \lor b_2$} & [5,6) \\                    
\rowcolor{c6} \bsq{chips} & $\mathtt{null}$ & {$a_2$} & [6,7) \\ 
          \bsq{chips} & {$c_4$} & $\mathtt{null}$  & [7,9) \\ 
          \bsq{dates} & {$a_3$} & $\mathtt{null}$  & [1,3) \\ [0.1cm] 
          \hline
        \end{tabular}}
      \vspace*{3mm}
      \caption{${\bf W}({\bf c}, {\bf a \ \cup^\kat{Tp}\ \bf b})$}
      \label{fig:latwABC}
    \end{subfigure}
\qquad
    \begin{subfigure}[b]{0.33\linewidth} \centering
      \scalebox{0.95}{
        \begin{tabular}{c|c|c|c}
          \hline 
          \textit{Product} & $\lambda$ & $T$ & $p$ \\ \hline
          \bsq{milk} & {$c_1$} & [1,2) & {0.6} \\
          \bsq{milk} & {$c_1 \land \neg a_1$} & [2,4) & {0.42} \\
          \bsq{milk} & {$c_2 \land \neg (a_1 \lor b_1) $} & [6,8) & {0.196} \\
          \bsq{chips} & {$c_3 \land \neg (a_2 \lor b_2) $} & [4,5) & {0.014} \\
          \bsq{chips} & $c_4$ & [7,9) & {0.8} \\ [0.1cm] 
          \hline
        \end{tabular}}
      \vspace*{3mm}
      \caption{Query Result}
      \label{fig:tpdbQueryResult}
    \end{subfigure}
  \caption{The Supermarket Application Scenario}
  \label{fig:tpdb}
\end{figure*}

\begin{example} {
  Consider the supermarket application of Figure~\ref{fig:tpdb}.  The
  supermarket records data related to purchases of clients
  ($\mathbf{a}$), online shopping carts ($\mathbf{b}$), and inventory
  ($\mathbf{c}$).  At each time point (e.g., a day), the supermarket
  aims at predicting the products that clients want to buy or order
  versus those that it has in stock.  For example, the tuple
  \emph{(\bsq{milk}, $a_1$, [2,10), 0.3)} captures that, at each day
  from the $2^{nd}$ to the $10^{th}$ of the month, ``milk is bought"
  with probability {\it 0.3}.  There is a single prediction for each
  fact at each time point and thus, there is no other tuple in
  $\mathbf{a}$ that predicts the probability of buying \bsq{milk} over
  an interval overlapping with $[2,10)$. 
  
  In order to have an overview of its supply and demand, the supermarket
  wants to determine, at each time point, the probability that a
  product is in stock but no client wants to order or buy this
  product.  The corresponding query is
  $Q = \mathbf{c}-^\kat{Tp} (\mathbf{a} \cup^\kat{Tp} \mathbf{b})$,
  i.e., the union of relations $\mathbf{a}$ and $\mathbf{b}$, followed
  by a difference with relation $\mathbf{c}$ (see
  Fig.~\ref{fig:tpdbQueryPlan}). Answer tuple \emph{(\bsq{milk},
    $c_1 \land \neg a_1$, [2,4), 0.42)} (see
  Fig.~\ref{fig:tpdbQueryResult}) expresses that, with probability
  {\it 0.42}, \emph{\bsq{milk}} is in stock but is not ordered or
  bought during interval $[2,4)$. The lineage expression used for the
  computation of the interval and the probability of this tuple is formed
  based on the tuples of the input relations which are valid at each
  time point ($c_1$ and $a_1$) and the semantics of the operation to
  be computed ($\cup^\kat{Tp}$ and $-^\kat{Tp}$).  }
\end{example}

TP set operations are interesting because of the overhead added in
their computation when combining the temporal and probabilistic
dimension. They are however a class of operations that have received
little attention so far: they have not been explicitly defined in
existing TP approaches~\cite{DyllaMT13}, with TP set difference not
being supported at all. Existing temporal techniques suffer from two
main drawbacks. First, approaches used for the computation of temporal
set operations \cite{DignosTODS16, DignosBG12} replicate input tuples
with adjusted intervals before the actual algebraic operations are
applied. They rely on joins with inequality conditions that have
quadratic complexity due to unproductive tuple comparisons.  Second,
stitching lineage expressions to the output tuples in a relational
manner requires additional joins in comparison to the set operations
that are available in current temporal database implementations.
Existing probabilistic approaches \cite{FinkOR11}, on the other hand,
reduce set operations to joins, since their computation not only requires
the comparison of relational attributes among the input tuples, but also
the combination of their lineage expressions. However, the computation
of TP set operations under a sequenced TP data model requires more
sophisticated solutions for the computation of output intervals than
the use of temporal predicates in joins.

In this paper, we introduce the concept of a {\em lineage-aware
temporal window} as a means to combine the computation of the output
intervals and the computation of the input lineage expressions that
will contribute to an output tuple.  The set of all windows of two
TP relations constitutes a common core based on which we can produce
the result of any TP set operation by using appropriate filter and
concatenation functions. Based on this approach, we develop efficient
algorithms for the computation of windows, and we eliminate
redundancies in the steps that existing approaches need to rely on
to identify the input tuples contributing to an output tuple.

\begin{example}
  { In order to compute the query of Fig.~\ref{fig:tpdbQueryPlan},
    we need to first compute the {set of lineage-aware temporal windows}
    ${\bf W}({\bf a}, {\bf b})$ of relations $\mathbf{a}$ and $\mathbf{b}$
    (Fig.~\ref{fig:latwAB}) to compute their union. Each window spans a
    maximal interval over which a set of non-temporal attributes, called
    a ``fact", is included in the same input tuples. The window $w$ = 
    (\bsq{milk}, $a_1$, $b_1$, [5,9)) indicates that at each time point
    in [5,9), the fact \bsq{milk} is included in the tuple of $\mathbf{a}$
    with lineage $a_1$ and in the tuple of $\mathbf{b}$ with lineage $b_1$.
    In the result of a TP union, an output tuple is created when at least
    one of the input tuples is valid, and the windows of $\mathbf{a}$ and
    $\mathbf{b}$ form output tuples by using a disjunction of the input
    lineages. Thus, window $w$ is transformed into output tuple
    (\bsq{milk}, $a_1 \lor b_1$, [5,9), 0.72). For the computation of the
    set difference $\mathbf{c}-^\kat{Tp} (\mathbf{a} \cup^\kat{Tp}
    \mathbf{b})$, the {lineage-aware temporal windows} of relations
    $\mathbf{c}$ and $\mathbf{a} \cup^\kat{Tp} \mathbf{b}$ are computed as
    shown in Fig.~\ref{fig:latwABC}.  The window (\bsq{milk}, $c_2$,
    $a_1\lor b_1$, [6,8)) indicates that at each time point in [6,8), the
    fact \bsq{milk} is included in the tuple with lineage $c_2$ from input
    relation $\mathbf{c}$, while the tuple with lineage $a_1\lor b_1$ is
    included from $\mathbf{a} \cup^\kat{Tp} \mathbf{b}$, respectively. Note
    that the windows of Fig.~\ref{fig:latwABC} which are highlighted in red
    are not included in the final output of the TP set difference, since
    there is no valid tuple in the left input relation.  An output tuple is
    created for each of the remaining {lineage-aware temporal windows} by
    concatenating the lineage expressions $\lambda_r$ and $\lambda_s$ to
    $\lambda_r \land \lnot \lambda_s$.  }
\end{example}

\smallskip
\noindent\textbf{Outline \& Contributions.}
\begin{itemize}
\item We propose a {\em sequenced temporal-probabilistic data
    model} that complies with both the sequenced semantics from
  temporal databases \cite{DignosBG12, BohJen2009} and the
  possible-worlds semantics from probabilistic databases
  \cite{Suciu2009, SuciuKochOlteanuReBook}.
  
\item We formally define the semantics of {\em TP set operations} and study the
  properties of TP set queries under this model.  TP set queries have
  not previously been investigated under a sequenced
  temporal-probabilistic model.
  
\item We introduce the concept of {\em lineage-aware temporal windows}, a
    mechanism that binds an interval with the lineages of the tuples
    that are valid during the interval. We show that each output tuple of a TP
    set operation maps to exactly one window, and we reduce the
    computation of a TP set operation between two TP relations to the
    application of conventional selection and projection operations over their sets
    of {\em lineage-aware temporal windows}.

\item We introduce the {\em lineage-aware window advancer} (LAWA), a
  window-sweeping algorithm that computes all {\em lineage-aware
    temporal windows} of two TP relations and guarantees
  $O(n \, \log n)$ worst-case complexity. Exploiting the flexibility
  of the windows, we are able to finalize lineages and filter out
  irrelevant intervals directly at the time of their creation. No
  additional costs are involved and thus the computation of a TP set
  operation has linearithmic complexity, improving over existing
  implementations with quadratic complexity.
  
\item We experimentally demonstrate that LAWA is the only approach
  that {\em does not deteriorate in performance} as the data history
  grows. In contrast to existing techniques, our solution does not
  depend on the characteristics of the dataset (such as the number of
  intervals per fact, or the overlap among intervals), but only on the
  size of the input relations.
\end{itemize}

This paper is an extension of our ICDE paper~\cite{papaioannou2018} and
it is organized as follows. Section \ref{sec:related-work} provides an
overview of related works on temporal and probabilistic databases with
a focus on set operations. Section \ref{sec:preliminaries} introduces
our TP data model, while Section \ref{sec:querySemantics} defines the
model's query semantics. Section \ref{sec:tpsetOps} defines TP set
operations over duplicate-free input relations.
Section\ref{sec:windows} introduces lineage-aware temporal windows.
Section \ref{sec:sweepingIUE} introduces an algorithm for the
computation of lineage-aware temporal windows, and
Section \ref{sec:basicIUE} includes our implementation of TP set
operations. Section \ref{sec:exper-eval} presents a comprehensive
performance study that compares our implementation of TP set
operations with existing timestamp-adjustment and lineage-computation
approaches. Section \ref{sec:conclusion} concludes the paper.

\section{Related Work}
\label{sec:related-work}

We next review related approaches from both temporal and probabilistic
databases and explain their limitations in terms of supporting TP set
operations. Set difference, for example, has received little attention
in temporal databases and can only be computed using the generic
normalization operator \cite{DignosBG12}.  Under a combined temporal
and probabilistic data model, there is currently no solution that
supports set difference.

\smallskip \noindent\textbf{Temporal Set Operations.}  In temporal
databases, the result of a temporal set operation $op^T$ is defined as
the result of applying $op$ over a sequence of atemporal instances
(the so-called snapshots) of the input relations---a key concept in
temporal databases termed {\em snapshot reducibility}
\cite{AlKatebGCBCP13,lorentzos1997sql,Viqueira2007}.  Maximal
intervals are produced by merging consecutive time points to which the
same input tuples have contributed ({\em change preservation}).
Dign\"os et al.~\cite{DignosBG12, DignosTODS16} use {\em data lineage}
to guarantee change preservation for all relational operations under a
sequenced semantics.  They adapt the {\em Normalization} operator,
introduced by Toman et al.~\cite{toman1998point}, to compute temporal
set queries.  Intuitively, the normalization $N({\bf r}, {\bf s})$ of
a relation ${\bf r}$ based on another relation ${\bf s}$ replicates
the tuples of ${\bf r}$ and assigns new time intervals to them.  The
new intervals are obtained by splitting the original intervals based
on tuples of ${\bf s}$ with which they overlap.  Normalization is a
generic operator that subsequently requires an outer join of {\bf r}
and {\bf s} with quadratic complexity.  Since it is not symmetric, it
has to be computed once for each of the two input relations
\cite{DignosBG12,DignosTODS16} for the computation of temporal
set-operations (cf.\ Fig.~\ref{fig:temporalDBs}).

\begin{figure}[!h]
\tikzset{%
  algebra/.style    = {draw, thick, rectangle, minimum height = 2em,
    minimum width = 2em},
}
\tikzset{%
  lineage/.style    = {draw, thick, rectangle, minimum height = 2em,
    minimum width = 2.3cm},
}
\tikzset{%
  filter/.style    = {draw, thick, rectangle, minimum height = 2em,
    minimum width = 2.75cm},
}
\tikzset{%
  block/.style    = {draw, thick, rectangle, minimum height = 2em,
    minimum width = 2em},
}
\centering
\scalebox{0.65}{
\centering
\begin{tikzpicture}[auto, thick, node distance=2cm, >=triangle 45]
\draw
	node at (0,0)[right=-3mm] (relR) {\large $\mathbf{r}$}
	node [right= 0.5cm of relR] (relR_line1) {}

	node [above right = 0.3cm and 0.8cm of relR_line1, block] (relNR)
		 {\large {\bf N}($\mathbf{r}$, $\mathbf{s}$)}
 	node [above = 0.05cm of relNR.east] (relNR1) {}
	node [below = 0.05cm of relNR.east] (relNR2) {}
 	node [above = 0.05cm of relNR.west] (relNR3) {}
	node [below = 0.05cm of relNR.west] (relNR4) {}
	
	node [below = 0.2cm of relR] (relS) {\large $\mathbf{s}$}
	node [right= 0.25cm of relS] (relS_line1) {}

	node [below right = 0.3cm and 1.05cm of relS_line1, block] (relNS) 
		 {\large {\bf N}($\mathbf{s}$, $\mathbf{r}$)}
 	node [above = 0.05cm of relNS.east] (relNS1) {}
	node [below = 0.05cm of relNS.east] (relNS2) {}
 	node [above = 0.05cm of relNS.west] (relNS3) {}
	node [below = 0.05cm of relNS.west] (relNS4) {}

	node [above right = 0.5cm and 3cm of relNR.north, algebra] (setD)
	{\large \color{c2}{$-$}}
	node [above = 0.05cm of setD.west] (setD1) {}
	node [below left = 0.05cm and 1.2cm of setD.west] (setD11) {}
	node [below = 0.05cm of setD.west] (setD2) {}

	node [below = 1.6cm of setD, algebra] (setI)
	{\large \color{c1}{$\cap$}}
	node [above left = 0.05cm and 0.5cm of setI.west] (setI11) {}
	node [below left = 0.05cm and 0.5cm of setI.west] (setI22) {}
	node [above = 0.05cm of setI.west] (setI1) {}
	node [below = 0.05cm of setI.west] (setI2) {}

	node [below = 1.6cm of setI, algebra] (setU) {\large \color{c3}{$\cup$}}
	node [above = 0.05cm of setU.west] (setU1) {}
	node [left = 0.75cm of setU1] (setU11) {}
	node [below = 0.05cm of setU.west] (setU2) {}
	
	node [right =1.5cm of setU] (resU)
	{\large \color{c3}{$ \mathbf{r} \cup^\kat{T} \mathbf{s}$}}
	node [right =1.5cm of setI] (resI)
	{\large \color{c1}{$ \mathbf{r} \cap^\kat{T} \mathbf{s}$}}
	node [right =1.5cm of setD] (resD)
	{\large \color{c2}{$ \mathbf{r} -^\kat{T} \mathbf{s}$}};

\draw[->] (relR.north) |- (relNR3.center);
\draw[-] (relR.east) -| (relS_line1.center);
\draw[->] (relS_line1.center) |- (relNS3.center);

\draw[->] (relS.south) |- (relNS4.center);
\draw[-] (relS.east) -| (relR_line1.center);
\draw[->] (relR_line1.center) |- (relNR4.center);

\draw[->] (relNR.north) |- (setD1.center);
\draw[-] (relNS1.center) -| (setD11.center);
\draw[->] (setD11.center) |- (setD2.center);

\draw[->] (relNS.south) |- (setU2.center);

\draw[-] (relNS2.center) -| (setI22.south);
\draw[->] (setI22.south) |- (setI2.center);
\draw[-] (relNR1.center) -| (setI11.north);
\draw[->] (setI11.north) |- (setI1.center);

\draw[-] (relNR2.center) -| (setU11.center);
\draw[->] (setU11.center) |- (setU1.center);

\draw[->] (setI) -- (resI);
\draw[->] (setD) -- (resD);
\draw[->] (setU) -- (resU);

\end{tikzpicture}}
\caption{Temporal set operations using Normalize \textbf{N}.}
\label{fig:temporalDBs}
\end{figure}
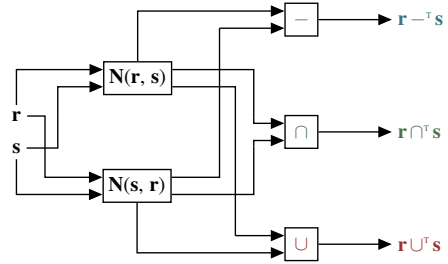

{\em Temporal joins} can be used for the computation of TP set
intersection.  Efficient solutions for temporal joins have been widely
discussed in the literature~\cite{KaufmannTI, DignosBG14,
PlatovICDE16, CafagnaB17}.  Specific solutions either partition the
data~\cite{CafagnaB17} in ways that are not beneficial for our case,
since TP relations are duplicate-free (see
Section~\ref{sec:preliminaries}), or they require fixed-length input
schemas~\cite{PlatovICDE16}. \emph{Timeline Index} (TI) is a data
structure introduced by Kaufmann et al.~\cite{KaufmannTI,
KaufmannVFKF13} to efficiently compute temporal aggregation, join and
time-travel operations. TI of relation $\mathbf{r}$ maps each start or
end point in $\mathbf{r}$ to a list of ids of tuples that start or end
at this time point.  \emph{Timeline Join} (TJ) is applied on the
indexes created for the input relations and implements a combination
of a merge- and a hash-join.  The performance of TJ suffers because
the original tuples need to be fetched both for the application of a
filtering condition and for the creation of the output tuples.

{\em Overlap Interval Partitioning} (OIP) by Dign\"os et al.\
\cite{DignosBG14} is designed to compute a join
$\mathbf{r} \bowtie^T \mathbf{s}$ among tuples with overlapping time
intervals. Initially, OIP splits the time domain into $k$ granules of
equal size.  Adjacent granules are combined to form the partitions of
an input relation $\mathbf{r}$ so that each tuple in $\mathbf{r}$ is
assigned to the smallest partition into which it fits. In order to
compute the overlap join, the overlapping partitions of $\mathbf{r}$
and $\mathbf{s}$ are identified (fast), and then a nested loop is
performed to join the tuples of these partitions (slow).  This
approach finds all pairs of tuples ($r$, $s$), for $r \in \mathbf{r}$
and $s \in \mathbf{s}$, with overlapping time intervals. Although OIP
can be extended to apply additional filtering conditions, e.g.,
equality conditions on the atemporal attributes of the tuples that are
joined, its performance deteriorates when the condition has low
selectivity (see Section \ref{sec:exper-eval}).

{\em Sweeping-based approaches}, finally, have been widely used for
the computation of overlap joins~\cite{PlatovICDE16, Arge1998} in
temporal settings.  A sweepline moves over all start and end points of
tuples, and determines, for each time point, the tuples of both input
relations that are valid.  These approaches cannot directly be applied
for the computation of TP set operations.  First, they generally do
not consider join conditions on the non-temporal attributes. Second,
they support set intersection but cannot produce all output tuples
needed for set difference and union. The creation of output intervals
through the tuples that the sweepline intersects is not sufficient for
these two set operations.

\smallskip\noindent\textbf{Probabilistic Set Operations.} In
probabilistic databases, the result of a probabilistic set operation
$op^p$ is defined as the result of applying $op$ over the set of all
possible instances of the input relations. The Trio system
\cite{SarmaTW08} was among the first to recognize {\em data lineage},
in the form of a Boolean formula, as a means to capture the possible
instances at which an output tuple is valid.  In an effort to provide a
{\em closed and complete} representation model for uncertain
relational data, they introduced {\em Uncertainty and Lineage
  Databases} (ULDBs) \cite{DBLP:journals/vldb/BenjellounSHTW08}.  The
algebraic operators are modified to compute the lineage of the result
tuples in a ULDB, thus capturing all information needed for computing
query answers and their probabilities.  Recently, Fink et
al.~\cite{FinkOR11,FinkO16} reduced the computation of probabilistic
algebraic operations to conventional operations (cf.\
  Fig.~\ref{fig:probabilisticDB}) so that these can be performed
using a DBMS, rather than by an application layer built on top of it.

\begin{figure}[!h]
\tikzset{%
  algebra/.style    = {draw, thick, rectangle, minimum height = 2em,
    minimum width = 2em},
}
\tikzset{%
  lineage/.style    = {draw, thick, rectangle, minimum height = 2em,
    minimum width = 2.3cm},
}
\tikzset{%
  filter/.style    = {draw, thick, rectangle, minimum height = 2em,
    minimum width = 2.75cm},
}
\tikzset{%
  block/.style    = {draw, thick, rectangle, minimum height = 2em,
    minimum width = 2em},
}
\centering
\scalebox{0.65}{
\begin{tikzpicture}[auto, thick, node distance=2cm, >=triangle 45]
\draw
	node at (0,0)[right=-3mm] (relR) {\large $\mathbf{r}$}
 	node [above = 0.05cm of relR.east] (relR1) {}
	node [below = 0.05cm of relR.east] (relR2) {}
	node [right= 0.5cm of relR] (relR_line1) {}
	
	node [below = 0.4cm of relR] (relS) {\large $\mathbf{s}$}
 	node [above = 0.05cm of relS.east] (relS1) {}
	node [below = 0.05cm of relS.east] (relS2) {}
	node [right= 0.25cm of relS] (relS_line1) {}

	node [above right = 0.5cm and 3cm of relR.north, algebra] (setD)
	{\large \color{c2}{$\LJoin$}}
	node [above = 0.05cm of setD.west] (setD1) {}
	node [below left = 0.05cm and 1.2cm of setD.west] (setD11) {}
	node [below = 0.05cm of setD.west] (setD2) {}
	
	node [below = 0.9cm of setD, algebra] (setI)
	{\large \color{c1}{$\bowtie$}}
	node [above left = 0.05cm and 0.5cm of setI.west] (setI11) {}
	node [below left = 0.05cm and 0.5cm of setI.west] (setI22) {}
	node [above = 0.05cm of setI.west] (setI1) {}
	node [below = 0.05cm of setI.west] (setI2) {}

	node [below = 0.9cm of setI, algebra] (setU) {\large \color{c3}{$\cup$}}
	node [above = 0.05cm of setU.west] (setU1) {}
	node [left = 0.75cm of setU1] (setU11) {}
	node [below = 0.05cm of setU.west] (setU2) {}

	node [right = 0.85cm of setI, lineage] (setIL) {\large \color{c1}{and($\lambda_r$, $\lambda_s$)}}
	node [right = 0.85cm of setD, lineage] (setDL) {\large \color{c2}{andNot($\lambda_r$, $\lambda_s$)}}
	node [right = 0.85cm of setU, lineage,white] (setUL) {}
	
	node [right = 0.6cm of setUL, lineage] (dupU) {\large $\vartheta$\color{c3}{or($\lambda$)}}

	node [right =0.8cm of dupU] (resU) {\large \color{c3}{$ \mathbf{r} \cup^\kat{p} \mathbf{s}$}}
	node [right =3.7cm of setIL] (resI) {\large \color{c1}{$ \mathbf{r} \cap^\kat{p} \mathbf{s}$}}
	node [right =3.1cm of setDL] (resD) {\large \color{c2}{$ \mathbf{r} -^\kat{p} \mathbf{s}$}};


%
%
%
\draw[->] (relR.north) |- (setD1.center);
\draw[-] (relS1.center) -| (setD11.center);
\draw[->] (setD11.center) |- (setD2.center);

\draw[->] (relS.south) |- (setU2.center);

\draw[-] (relS2.center) -| (setI22.south);
\draw[->] (setI22.south) |- (setI2.center);
\draw[-] (relR1.center) -| (setI11.north);
\draw[->] (setI11.north) |- (setI1.center);

\draw[-] (relR2.center) -| (setU11.center);
\draw[->] (setU11.center) |- (setU1.center);

\draw[->] (setI)  -- (setIL);
\draw[->] (setD) -- (setDL);
\draw[->] (setU) -- (dupU);

\draw[->] (setIL) -- (resI);
\draw[->] (setDL) -- (resD);
\draw[->] (dupU) -- (resU);

\end{tikzpicture}}
\vspace*{0.1cm}

\caption{Probabilistic set operations. The joins filter out the facts
that are not needed for the result and they add the input lineages in 
the same schema, so that output lineages can be formed using
lineage-concatenating functions.}
\vspace{0.2cm}
\label{fig:probabilisticDB}
\end{figure}
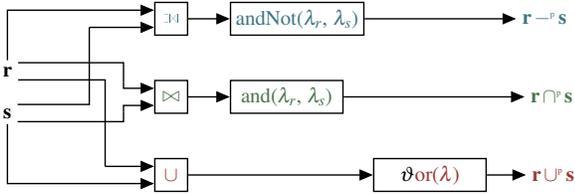

\smallskip \noindent\textbf{Temporal-Probabilistic Set Operations.}
  A temporal-probabilistic model was introduced in the work of
  Dekhtyar et al.~\cite{Dekhtyar2001}.  Each tuple includes a TP part
  consisting of two temporal conditions, corresponding to sets of
  potential starting and ending points, and a pair of probability
  values, corresponding to the minimum and the maximum probability of
  the fact being true.  Conceptually, TP relations are converted into
  {\em annotated relations}, i.e., relations with tuples at a
  time-point granularity, and they are queried using annotated
  operators.  The result is converted back to the initial compact
  representation, using probability combination functions.  The use of
  these functions instead of lineage information has two implications.
  Firstly, change preservation \cite{DignosBG12}, a property of the
  temporal domain is not satisfied, since lineage is not used as a
  criteria to merge the results of consecutive time points into
  maximal intervals.  Secondly, the closure property \cite{
    Imielinski1984,SuciuKochOlteanuReBook} of the probabilistic
  domain is not satisfied, since we lose track of the input tuples
  used for computing the probability of an output tuple, thus making
  the final result non-compositional.

Dylla et al.\ \cite{DyllaMT13} introduced a closed and complete TP
database model, coined TPDB, based on existing temporal and
probabilistic concepts.  Query processing is performed in two steps
(cf.\ Fig.~\ref{fig:tpdbReduction}). The first step, grounding,
evaluates a chosen deduction rule (formulated in Datalog with
additional time variables and temporal predicates) and computes the
lineage expressions of the deduced tuples. The second step,
deduplication, removes the duplicates that could occur in the
grounding step by adjusting their intervals. Although the TPDB data
model is generic, the grounding step cannot cover operations whose
results include subintervals that are only present in one of the two
input relations. As explained in Section~\ref{sec:tpsetOps}, 
sequenced TP set difference is one of these operations and is not
supported by TPDB.

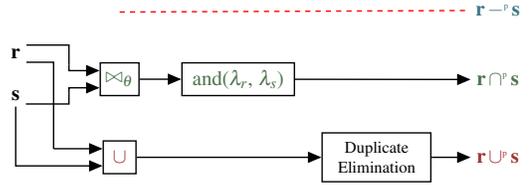
\begin{figure}[!h]
\tikzset{%
  algebra/.style    = {draw, thick, rectangle, minimum height = 2em,
    minimum width = 2em},
}
\tikzset{%
  lineage/.style    = {draw, thick, rectangle, minimum height = 2em,
    minimum width = 2.3cm},
}
\tikzset{%
  filter/.style    = {draw, thick, rectangle, minimum height = 2em,
    minimum width = 2.75cm},
}
\tikzset{%
  block/.style    = {draw, thick, rectangle, minimum height = 2em,
    minimum width = 2em},
}
\centering
\scalebox{0.65}{
\centering
\begin{tikzpicture}[auto, thick, node distance=2cm, >=triangle 45,
  block1/.style = {draw, thick, rectangle, minimum height = 1cm,
    minimum width = 2em}]
\draw
	node at (0,0)[right=-3mm] (relR) {\large $\mathbf{r}$}
 	node [above = 0.05cm of relR.east] (relR1) {}
	node [below = 0.05cm of relR.east] (relR2) {}
	node [right= 0.5cm of relR] (relR_line1) {}
	
	node [below = 0.4cm of relR] (relS) {\large $\mathbf{s}$}
 	node [above = 0.05cm of relS.east] (relS1) {}
	node [below = 0.05cm of relS.east] (relS2) {}
	node [right= 0.25cm of relS] (relS_line1) {}

	node [above right = 0.5cm and 2cm of relR.north] (setD) {}
	node [above = 0.05cm of setD.west] (setD1) {}
	node [below left = 0.05cm and 1.2cm of setD.west] (setD11) {}
	node [below = 0.05cm of setD.west] (setD2) {}
	
	node [below = 0.9cm of setD, algebra] (setI)
	{\large \color{c1}{$\bowtie_{\theta}$}}
	node [above left = 0.05cm and 0.5cm of setI.west] (setI11) {}
	node [below left = 0.05cm and 0.5cm of setI.west] (setI22) {}
	node [above = 0.05cm of setI.west] (setI1) {}
	node [below = 0.05cm of setI.west] (setI2) {}

	node [below = 0.9cm of setI, algebra] (setU) {\large \color{c3}{$\cup$}}
	node [above = 0.05cm of setU.west] (setU1) {}
	node [left = 0.75cm of setU1] (setU11) {}
	node [below = 0.05cm of setU.west] (setU2) {}

	node [right = 0.85cm of setI, lineage] (setIL) {\large \color{c1}{and($\lambda_r$, $\lambda_s$)}}
	node [right = 0.85cm of setU, lineage,white] (setUL) {}
	
	node [right = 0.6cm of setUL, block1, text width=2cm,align=center] (dupU) {Duplicate Elimination}

	node [right =0.8cm of dupU] (resU) {\large \color{c3}{$ \mathbf{r} \cup^\kat{p} \mathbf{s}$}}
	node [above =1.1cm of resU] (resI) {\large \color{c1}{$ \mathbf{r} \cap^\kat{p} \mathbf{s}$}}
	node [above =0.9cm of resI] (resD) {\large \color{c2}{$ \mathbf{r} -^\kat{p} \mathbf{s}$}};

\draw[-, dashed, red] (setD.center) -- (resD.west);

\draw[->] (relS.south) |- (setU2.center);

\draw[-] (relS2.center) -| (setI22.south);
\draw[->] (setI22.south) |- (setI2.center);
\draw[-] (relR1.center) -| (setI11.north);
\draw[->] (setI11.north) |- (setI1.center);

\draw[-] (relR2.center) -| (setU11.center);
\draw[->] (setU11.center) |- (setU1.center);

\draw[->] (setI)  -- (setIL);
\draw[->] (setU) -- (dupU);

\draw[->] (setIL) -- (resI);
\draw[->] (dupU) -- (resU);
\end{tikzpicture}}

\vspace*{0.1cm}
\caption{TP set operations in TPDB. Condition $\theta$ includes
    temporal predicates and duplicate elimination forms output
    intervals.}
\label{fig:tpdbReduction}
\end{figure}

\section{Data Model \& Notation}
\label{sec:preliminaries}

We denote a {\bf temporal-probabilistic schema} by $R^\kat{Tp}$($F$,
$\lambda$, $T$, $p$), where $F$ = ($A_1$, $A_2$, $\ldots$, $A_m$) is
an ordered set of attributes, and each attribute $A_i$ is assigned to
a fixed domain $\Omega_i$.  $\lambda$ is a Boolean formula
corresponding to a lineage expression. $T$ is a {\em temporal
  attribute} with domain $\Omega^T \times \Omega^T$, where $\Omega^T$
is a finite and ordered set of {\em time points}. $p$ is a {\em
  probabilistic attribute} with domain
$\Omega^p = (0,1] \subset {\rm I\!R}$. A {\bf temporal-probabilistic
  relation $\mathbf{r}$} over $R^\kat{Tp}$ is a finite set of
tuples. Each tuple $r \in \mathbf{r}$ is an ordered set of values in
the appropriate domains.  The value of attribute $A_i$ of $r$ is
denoted by $r.A_i$.  The conventional attributes $F$ = ($A_1$, $A_2$,
$\ldots$, $A_m$) of tuple $r$ form a so-called {\em fact}, and we
write $r.F$ to denote the fact $f$ captured by tuple $r$. For example,
the tuple (\bsq{milk}, $a_1$, $[2,10)$, $0.3$) of relation
$\mathbf{a}$ (see Fig.~\ref{fig:tpdbRelations}) includes the fact
$a_1.F$ = (\bsq{milk}), the lineage expression $a_1.\lambda = a_1$,
the time interval $a_1.T = [2,10)$, and the probability value
$a_1.p = 0.3$.  The temporal-probabilistic annotations of the schema
express that
\begin{enumerate*}
\item[(i)] $a_1 = \mathit{true}$ with probability $a_1.p$ for every
  time point in $a_1.T$,
\item[(ii)] $a_1 = \mathit{false}$ with probability $1-a_1.p$ for
  every time point in $a_1.T$, \item[(iii)] and $a_1$ is always
  $\mathit{false}$ outside $a_1.T$.
\end{enumerate*}

By following conventions from \cite{DyllaMT13,DignosTODS16,DignosBG12,
OlteanuHK09}, we assume duplicate-free input and output relations. 
Formally, a temporal-probabilistic relation $\mathbf{r}$ is {\bf
duplicate-free} iff $\forall r, r' \in \mathbf{r} (r \neq r'
\Rightarrow r.F \neq r'.F \vee r.T \cap r'.T = \emptyset))$.  In other
words, the intervals of any two tuples of $\mathbf{r}$ with the same
fact $f$ do not overlap.

A {\bf lineage expression} $\lambda$ is a Boolean formula, consisting
of tuple identifiers and the three Boolean connectives $\neg$
(``not"), $\land$ (``and") and $\lor$ (``or").  Tuple identifiers
represent Boolean random variables among which we assume independence
\cite{DyllaMT13, OlteanuHK09, DalviS07}).  For a base tuple $r$,
$r.\lambda$ is an atomic expression consisting of just $r$ itself. For
a result tuple $\tilde{r}$ derived from one or more TP operations,
$\tilde{r}.\lambda$ is a Boolean expression as defined above. For a
result tuple, lineage is determined by the tem\-poral-probabilistic
operators (formally defined in Section~\ref{sec:querySemantics}) that
were applied to derive that tuple from the base tuples. The
probability of a result tuple is computed via a probabilistic
valuation of the tuple's lineage expression, using either exact (see,
e.g., \cite{DalviS07, DalviS12, olteanu2008using}) or approximate
(see, e.g., \cite{FinkHO13,FinkO11, gatterbauer2014oblivious,
  GatterbauerS15, OlteanuHK10}) algorithms.  For example, in the
result relation of Fig.~\ref{fig:tpdbQueryResult}, the lineage
$c_1 \land \lnot a_1$ yields a marginal probability of $0.6 \cdot 
(1-0.3) = 0.42$ by assuming independence among the base tuples $c_1$
and $a_1$ (see Fig.~\ref{fig:tpdbRelations}). 

  Finally, we write $\lambda^{\mathbf{r}, f}_{t}$ as an
  abbreviation for:

\begin{equation}
   \lambda^{\mathbf{r}, f}_{t} =\begin{cases}
    r.\lambda & \text{iff \ $r \in {\bf r}\ \land r.F = f\ \land\ t\in r.T$} \\
    \mathtt{null} & \text{iff $\nexists\ r \in {\bf r}\ (r.F = f\ \land\ t\in r.T)$}.
  \end{cases}
\end{equation}

Thus, $\lambda^{\mathbf{r}, f}_{t}$ refers to the lineage expression
of a tuple in relation $\mathbf{r}$ with fact $f$ that is valid at
time point $t$.  If there are no tuples in $\mathbf{r}$ with fact $f$
at time point $t$, we write
$\lambda^{\mathbf{r}, f}_{t} = \mathtt{null}$.

\section{Query Semantics}
\label{sec:querySemantics}
For our query semantics, we adopt both the \emph{sequenced
  semantics}~\cite{BohJen2009}, widely used for the temporal
dimension, and the \emph{possible-worlds
  semantics}~\cite{SuciuKochOlteanuReBook}, commonly used for the
probabilistic dimension. The sequenced semantics is consistent with
viewing a temporal database as a sequence of atemporal databases (the
``snapshots''), one for each time point $t$ in $\Omega^T$.
Conceptually, query evaluation then resolves to evaluating a query
against each of these snapshots and producing maximal output intervals
according to time points with equivalent {\em data lineage}. Thus, an
output interval consists of time points, in which the corresponding
fact has been derived based on the same input tuples. The
possible-worlds semantics defines a probabilistic database as a
probability distribution over a finite set of possible states (aka.\
``worlds'') in which the probabilistic database could
be. Conceptually, a query is evaluated against each of the possible
worlds. The marginal probability of an answer tuple then is defined as
the sum of the possible-worlds probabilities, for which the answer
tuple exists. Data
lineage~\cite{DBLP:journals/vldb/BenjellounSHTW08,SarmaTW08}, in the
form of a Boolean expression, serves as a concise condition that is
satisfied over the possible worlds in which each answer tuple exists.

The query semantics of our sequenced TP data model is based on an
intriguing analogy between the temporal and probabilistic semantics:
rather than iterating over snapshots or possible worlds, they both use
the notion of data lineage to define their operational
semantics. Given a TP relation $\mathbf{r}$, a tuple
$r \in \mathbf{r}$ is valid at every time point $t$ included in its
time interval $r.T$ with probability $r.p$.  Thus, all tuples of a TP
relation $\mathbf{r}$ that are valid at time point $t$ with a given
probability are included in the \emph{probabilistic snapshot} of
$\mathbf{r}$ at $t$. Specifically, we obtain the probabilistic
snapshot of a TP relation $\mathbf{r}$ with schema $R^\kat{Tp}$ =
($F$, $\lambda$, $T$, $p$) at time point $t$ by applying the {\em
  timeslice operator} $\tau^\kat{p}_t$, which is defined as:

\[
  \tau^\kat{p}_t ({\bf r}^\kat{Tp}) = \{ (r.F, r.\lambda, [t,t+1), r.p)
  \,|\, r \in \mathbf{r} \land t \in r.T\}
\]

In Fig.~\ref{fig:probSnaps}, we illustrate the probabilistic
snapshots of the relations $\mathbf{a}$ and $\mathbf{c}$ of
Fig.~\ref{fig:tpdbRelations} at time point $t=2$. The probabilistic
snapshot of relation $\mathbf{b}$ at this time point is
$\mathtt{null}$ since there is no tuple of $\mathbf{b}$ valid.

\begin{figure}[!ht]\centering
  \setlength{\tabcolsep}{5pt}
  \def\arraystretch{1.2}
    \scalebox{0.85}{
      \begin{tabular}{c|c|c|c}
        \multicolumn{4}{l}{$\mathbf{a}$ {\bf (productsBought)}} \\
        \hline
        \textit{Product} & $\lambda$ & $T$ & $p$ \\ \hline
        \bsq{milk} & {$a_1$} & {[2,3)} & {0.3} \\
        \bsq{dates}  & {$a_3$} & {[2,3)} & {0.6} \\ \hline
      \end{tabular}}
    \qquad
    \scalebox{0.85}{
      \begin{tabular}{c|c|c|c}
        \multicolumn{4}{l}{$\mathbf{c}$ {\bf (productsInStock)}} \\
        \hline
        \textit{Product} & $\lambda$ & $T$ & $p$ \\ \hline
        \bsq{milk} & {$c_1$} & {[2,3)} & {0.6} \\ \hline
      \end{tabular}}
    \vspace*{0.1cm}
    \caption{Probabilistic Snapshots $\tau^\kat{p}_2({\bf a})$
    		 	  and $\tau^\kat{p}_2({\bf c})$}
    \label{fig:probSnaps}
\end{figure}

\begin{definition} {\it{\bf(TP Snapshot Reducibility)}}
{\em Let $\mathbf{r}_1, \ldots, \mathbf{r}_m$ be a set of TP
relations, let $op^\kat{Tp}$ be an $m$-ary temporal-probabilistic
operator, let $op^\kat{p}$ be the corresponding probabilistic
operator, let $\Omega^T$ be the time domain, and let
$\tau^\kat{p}_t ({\bf r})$ be the timeslice operator. The operator
$op^\kat{Tp}$ is {\em snapshot reducible} to $op^\kat{p}$ iff, for
all $t\in\Omega^T$, it holds that:}

\[
  \tau^\kat{p}_t (op^\kat{Tp}(\mathbf{r}_1, \ldots, \mathbf{r}_m))
  \equiv op^\kat{p}(\tau^\kat{p}_t (\mathbf{r}_1), \ldots,
  \tau^\kat{p}_t (\mathbf{r}_m))
\]
\label{def:snapshot}
\end{definition}

\vspace*{-0.2cm} {Snapshot reducibility} states that a probabilistic
snapshot of the result of an $m$-ary TP operation
$op^\kat{Tp}(\mathbf{r}_1, \ldots, \mathbf{r}_m)$ at any time point
$t$ is equivalent to the result derived from the corresponding
probabilistic operation $op^\kat{p}$ on the probabilistic snapshots 
of the input relations at $t$. Applying an atemporal operation over
all probabilistic snapshots thus is consistent with snapshot
reducibility in temporal databases and implies that the result at
any time point $t$, both in terms of probability values and facts,
is determined only by the input tuples that are valid at $t$. The
application of $op^\kat{p}$ guarantees that the computations at each
time point will yield Boolean lineage expressions that are
consistent with the possible-worlds
semantics~\cite{SarmaTW08,DBLP:journals/vldb/BenjellounSHTW08}.

As example, consider the query of Fig.~\ref{fig:tpdbQueryPlan} over
the relations of Fig.~\ref{fig:tpdbRelations}. According to the
lineage expression of tuple (\bsq{milk}, [2,4), $c_1\land \neg a_1$,
0.42), at $t=2$, the fact \bsq{milk} has been derived from the input
tuples $a_1$ and $c_1$, i.e., the only input tuples of the 
probabilistic snapshot at $t=2$ (Fig.~\ref{fig:probSnaps} that include
the fact \bsq{milk}. Since the probability of \bsq{milk} at $t=2$ is
only affected by the probabilities of $a_1$ and $c_1$, it can be
computed based on the lineage expression $c_1\land \neg a_1$. 

\begin{definition} {\it{\bf(TP Change Preservation)}} {\em Let
$\mathbf{r}_1, \ldots, \mathbf{r}_m$ be a set of TP relations,
let $op^\kat{Tp}$ be an $m$-ary temporal-probabilistic operator,
and let $u.T_s$, $u.T_e$ denote the start and end points of an
interval associated with a tuple $u$. For each tuple $u \in 
\mathbf{u}$, where ${\bf u} = op^\kat{Tp}(\mathbf{r}_1, \ldots,
\mathbf{r}_m)$, it holds that:}
  \begin{align*}
    &\forall t,t' \in u.T (\lambda^{\mathbf{u},
      u.F}_{t}  \equiv \lambda^{\mathbf{u}, u.F}_{t'}) \ \land\ \\
    &\nexists u'\in{\bf u}((u'.T_e = u.T_s \vee u'.T_s =  u.T_e) \land
    (u'.\lambda \equiv u.\lambda))
  \end{align*}
  \label{def:change}
\end{definition}

Intuitively, change preservation ensures that only consecutive time
points of tuples with equivalent lineage expressions are grouped into
intervals. For example, the output tuples (\bsq{milk}, [1,2), $c_1$,
0.6) and (\bsq{milk}, [2,4), $c_1 \land \neg a_1$, 0.42) are not
merged into the interval $[1,4)$, since they do not have equivalent
lineages. Change preservation guarantees that a fact is valid over the
same possible worlds with maximal intervals. The first line of
Def.~\ref{def:change} ensures that the lineage expression at all time
points in the interval of a result tuple is the same. The second line
ensures that the time intervals produced by coalescing time points
with the equivalent lineage expressions are maximal.\footnote{Rather
  than performing logical equivalence checks among Boolean formulas,
  which are co-NP-complete, we resort to a syntactic comparison of the
  lineage sets in our implementation.}

\section{TP Set Operations \& Queries}
\label{sec:tpsetOps}
\subsection{TP Set Operations}

In TP databases, the result of a \emph{TP set union} includes, at each
time point $t \in \Omega^T$, the facts for which there is a non-zero probability to be
in $\mathbf{r}$ or in $\mathbf{s}$; the result of a \emph{TP set
  intersection} includes, at each time point, the facts for which
there is a non-zero probability to be in $\mathbf{r}$ and in
$\mathbf{s}$; and the result of a \emph{TP set difference} between two
TP relations $\mathbf{r}$ and $\mathbf{s}$ includes, at each time
point, the facts for which there is a non-zero probability to be in
$\mathbf{r}$ and not in $\mathbf{s}$.

\begin{definition} \label{def:tpSetOpDef} {\it{\bf(TP Set Operations)}}
  {\em Let ${\bf r}$ and ${\bf s}$ be tem\-poral-probabilistic
    relations with schema ($F$, $\lambda$, $T$, $p$), and let
    $\lambda^{\mathbf{r},f}_{t}$ denote the lineage expression of the
    tuple in relation $\mathbf{r}$ that includes fact $f$ and is valid
    at time point $t$.  Given a result tuple $\tilde{r}$ and the
    lineage-concatenation functions depicted in
    Table~\ref{tab:lineageFunctions}, we define the three TP set
    operations $\mathbf{r} \cup^\kat{Tp} \mathbf{s}$,
    $\mathbf{r} \cap^\kat{Tp} \mathbf{s}$ and
    $\mathbf{r} -^\kat{Tp} \mathbf{s}$ as follows:}
  \begin{align*}
    \tilde{r} \in \mathbf{r} \cup^\kat{Tp} \mathbf{s}
    \Longleftrightarrow\ 
    & \forall t \in \tilde{r}.T ( (\lambda^{\mathbf{r}, \tilde{r}.F}_{t}
      \neq \mathtt{null} \
      \vee \ \lambda^{\mathbf{s}, \tilde{r}.F}_{t} \neq \mathtt{null}) \ \wedge \\
    & \hspace*{1.5cm}
      \tilde{r}.\lambda \equiv \textbf{or}(\lambda^{\mathbf{r}, \tilde{r}.F}_{t},
      \lambda^{\mathbf{s}, \tilde{r}.F}_{t})) \ \wedge \\
    & \forall t' \notin \tilde{r}.T ( \tilde{r}.\lambda \not\equiv
      \textbf{or}(\lambda^{\mathbf{r}, \tilde{r}.F}_{t'},
      \lambda^{\mathbf{s}, \tilde{r}.F}_{t'}))
    \\[10pt]
    \tilde{r} \in \mathbf{r} \cap^\kat{Tp} \mathbf{s}
    \Longleftrightarrow\ 
    & \forall t \in \tilde{r}.T (
      \lambda^{\mathbf{r}, \tilde{r}.F}_{t} \neq \mathtt{null} \wedge
      \lambda^{\mathbf{s}, \tilde{r}.F}_{t}  \neq \mathtt{null}\ \wedge\\
    & \hspace*{1.5cm}
      \tilde{r}.\lambda \equiv \textbf{and}(\lambda^{\mathbf{r},
      \tilde{r}.F}_{t}, \lambda^{\mathbf{s}, \tilde{r}.F}_{t})) \ \wedge\\
    & \forall t' \notin \tilde{r}.T (
      \tilde{r}.\lambda \not\equiv \textbf{and}(\lambda^{\mathbf{r}, \tilde{r}.F}_{t'},
      \lambda^{\mathbf{s}, \tilde{r}.F}_{t'}))
    \\[10pt]
    \tilde{r} \in \mathbf{r} -^\kat{Tp} \mathbf{s}
    \Longleftrightarrow\ 
    & \forall t \in \tilde{r}.T ( 
      \lambda^{\mathbf{r}, \tilde{r}.F}_{t} \neq \mathtt{null} \ \wedge \\
    & \hspace*{1.5cm}
      \tilde{r}.\lambda \equiv \textbf{andNot}(\lambda^{\mathbf{r}, \tilde{r}.F}_{t},
      \lambda^{\mathbf{s}, \tilde{r}.F}_{t})) \ \wedge
    \\
    & \forall t' \notin \tilde{r}.T  ( 
      \tilde{r}.\lambda  \not\equiv \textbf{andNot}(\lambda^{\mathbf{r},
      \tilde{r}.F}_{t'}, \lambda^{\mathbf{s}, \tilde{r}.F}_{t'}))
  \end{align*}
\end{definition}

\begin{table}[!h]\centering
  \caption{Definition of lineage-concatenation functions.}
  \label{tab:lineageFunctions}
  \begin{tabular}{M{0.6in} M{2.1in}  @{}m{0pt}@{}}
      \hline
      $\textit{\textbf{and}}(\lambda_1,\lambda_2)$ &
        \hspace*{-1.12in}$~=~ (\lambda_1) \land (\lambda_2)$  &\\ [0.4cm]
      \hline
      $\textit{\textbf{andNot}}(\lambda_1,\lambda_2)$ &
                                                        $~~=~\left \{ 
                    \begin{array}{l l}
                     (\lambda_1) & \quad \text{if $\lambda_2 = \mathtt{null}$}\\
                     (\lambda_1) \land \neg (\lambda_2) & \quad \text{otherwise}\\
                    \end{array} \right.$ &\\[0.8cm]
          \hline
          $\textit{\textbf{or}}(\lambda_1,\lambda_2)$ & $~=~ \left \{ 
          \begin{array}{l l}
            (\lambda_1) & \quad \text{if $\lambda_2 = \mathtt{null}$}\\
            (\lambda_2) & \quad \text{if $\lambda_1 = \mathtt{null}$}\\
            (\lambda_1) \lor (\lambda_2) & \quad \text{otherwise}\\
          \end{array} \right.$ &\\ [1.1cm]
          \hline                 
    \end{tabular}
\end{table}

The above definition of TP set operations specifies the intervals and
lineage expressions of a result tuple $\tilde{r}$.  The first line of
the definition of each operation relates to Def.~\ref{def:snapshot}.
It states that, at any time point $t \in \tilde{r}.T$, fact
$\tilde{r}.F$ must be included in the corresponding input tuples from
$\mathbf{r}$ and $\mathbf{s}$.  Consequently, the lineage expression
of the output tuple $\tilde{r}$ at each time point $t \in \tilde{r}.T$
(cf.\ second line) is computed based on the same input tuples, according to
the lineage-concatenating functions of
Table~\ref{tab:lineageFunctions}.  In the case of set union, there
must exist at least one tuple in either one of the two input relations
that also includes $\tilde{r}.F$ over $\tilde{r}.T$.  For set
intersection, there must exist corresponding tuples in both input
relations.  For set difference, an output tuple is produced at all
time points $t$, at which there exists a tuple of the left relation
$r$ that is valid at $t$ in $r.T$.  This happens in two cases: (a) if
a fact $f$ is included in a tuple of $\mathbf{r}$ but in no tuple in
$\mathbf{s}$, and (b) if a fact $f$ is included in a tuple of
$\mathbf{r}$ but, with a probability of less than 1, also in a tuple
of $\mathbf{s}$. %
The first case resembles the definition of temporal set difference,
where, at each time point in the output, there exist facts that are
included in tuples of $\mathbf{r}$ and not in tuples of $\mathbf{s}$.
The second case occurs due to the probabilistic dimension. The result
of a probabilistic set difference between $\mathbf{r}$ and
$\mathbf{s}$ includes all facts, which have a non-zero
probability to be in $\mathbf{r}$ and not in $\mathbf{s}$.

\begin{example}
  Figure~\ref{fig:defExample} shows the relations $\mathbf{a}$ and
  $\mathbf{c}$ of Fig.~\ref{fig:tpdbRelations} as well as selected
  output tuples of $\mathbf{a} - ^\kat{Tp} \mathbf{c}$. Different
  colors are used for different facts: green is used for \bsq{milk},
  blue for \bsq{dates} and red for \bsq{chips}. Output tuples are
  drawn below the time axis. For example, the output tuple (\bsq{milk},
  $a_1 \land \lnot c_2$, $[6,8)$, $0.09$) satisfies
  Def.~\ref{def:tpSetOpDef}: for all time points in $[6,8)$, it holds
  that
  $\lambda^{\mathbf{a}, \kat{\bsq{milk}}}_{t}= {a_1} \neq
  \mathtt{null}$ and
  $\lambda^{\mathbf{c}, \kat{\bsq{milk}}}_{t}= {c_2}$. Thus,
  $\forall t \in [6,8)$,
  $\textbf{andNot}(\lambda^{\mathbf{a}, \kat{\bsq{milk}}}_{t},
  \lambda^{\mathbf{c}, \kat{\bsq{milk}}}_{t}) \equiv a_1 \land \lnot c_2$.

  \begin{figure}[ht]\centering
    \scalebox{0.58} {
      \begin{tikzpicture}
        \draw (0,0) [->, line width = 1pt]-- coordinate (x axis mid) (14.2,0);
  
        \pgfmathsetmacro{\shift}{0.5}
        \foreach \j in {1,...,10}{
          \pgfmathsetmacro{\divRes}{int(\j/2)}
          \pgfmathsetmacro{\modRes}{1-(\j-\divRes*2)}
          \pgfmathsetmacro{\xPos}{\j-1+(\divRes-\modRes*\shift)}
          \draw (\xPos , 1pt)  --  (\xPos ,-3pt)
            node[anchor=north,font=\relsize{2}] {};
        }

        \pgfmathsetmacro{\shift}{0.5}
        \foreach \j in {1,...,9}{
          \pgfmathsetmacro{\divRes}{int(\j/2)}
          \pgfmathsetmacro{\modRes}{1-(\j-\divRes*2)}
          \pgfmathsetmacro{\xPos}{\j-1+(\divRes-\modRes*\shift)}
          \draw (\xPos+0.75 , -3pt)
            node[anchor=north,font=\relsize{1}] {\bf \j};
        }

        \pgfmathsetmacro{\xAxisOne}{-0.2}
        \pgfmathsetmacro{\xAxisTwo}{13.7}
        \pgfmathsetmacro{\windowHeight}{1.5}
 
        \pgfmathsetmacro{\yAxis}{2.1}
        \draw [line width=1.5,color=c1] (1.5, 2.7) -- (13.5, 2.7)
          node[pos=.2,above=-1pt]{\large $a_1$}; 
        \draw [line width=1.5,color=c3] (4.5, 2.1) -- (8.95, 2.1)
          node[pos=.2,above=-1pt]{\large $a_2$}; 
        \draw [line width=1.5,color=c4] (0, 2.1) -- (3, 2.1)
          node[pos=.2,above=-1pt]{\large $a_3$}; 
        \node at (\xAxisOne-0.3, \yAxis+0.4) {\large $\mathbf{a}$}; 
        \pgfmathsetmacro{\yAxis}{\yAxis-0.2}
        \draw [dashed] (\xAxisOne,\yAxis) --
                       (\xAxisOne,\yAxis + \windowHeight - 0.2) --
                       (\xAxisTwo,\yAxis + \windowHeight - 0.2) --
                       (\xAxisTwo, \yAxis + 0) --
                       cycle;

    \pgfmathsetmacro{\yAxis}{0.4}
    \draw [line width=1.5,color=c1] (0, 1) -- (4.5,1)
    node[pos=.15,above=-1pt]{\large $c_1$}; 
    \draw [line width=1.5,color=c1] (7.5, 1) -- (10.5,1)
    node[pos=.2,above=-1pt]{\large  $c_2$}; 
    \draw [line width=1.5,color=c3] (4.5, 0.4) -- (6,0.4)
    node[pos=.2,above=-1pt]{\large $c_3$}; 
    \draw [line width=1.5,color=c3] (9, 0.4) -- (12,0.4)
    node[pos=.2,above=-1pt]{\large $c_4$}; 
    \node at (\xAxisOne-0.3, \yAxis+0.4) {\large $\mathbf{c}$}; 
    \pgfmathsetmacro{\yAxis}{\yAxis-0.2}
 	\draw [dashed]
	(\xAxisOne,\yAxis) -- (\xAxisOne,\yAxis + \windowHeight - 0.2) -- (\xAxisTwo,\yAxis + 	\windowHeight - 0.2) -- (\xAxisTwo, \yAxis + 0)  -- cycle;
    \pgfmathsetmacro{\yAxis}{3.5}
    \draw[color=gray,dashed] (0, \yAxis)  --  (0,-1.5) ;    
   	\draw[color=gray,dashed] (3, \yAxis)  --  (3,-1.5) ;

    \draw[color=gray,dashed] (4.5, \yAxis)  --  (4.5,-1.5) ;
	\draw[color=gray,dashed] (6, \yAxis)  --  (6,-1.5) ;
	
	\draw[color=gray,dashed] (7.5, \yAxis)  --  (7.5,-1.5) ;
    \draw[color=gray,dashed] (10.5, \yAxis)  --  (10.5,-1.5) ;
%
    \draw [line width=1.5,color=c4] (0,-1.2) -- (3,-1.2)
    node[pos=.5,above=-1pt]{\large $(a_3, 0.6)$};
    \draw [line width=1.5,color=c3] (4.5,-1.2) -- (6,-1.2)
    node[pos=.5,above=-1pt]{\large $(a_2 \land \neg \ c_3, 0.24)$};
    \draw [line width=1.5,color=c1] (7.5,-1.2) -- (10.5,-1.2)
    node[pos=.5,above=-1pt]{\large $(a_1 \land \neg \  c_2, 0.09)$};
  \end{tikzpicture}}

  \caption{Selected output tuples of $\mathbf{a} - ^\kat{Tp} \mathbf{c}$.}
  \label{fig:defExample}
\end{figure}
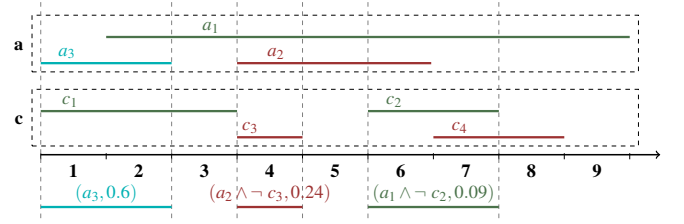
\end{example} 

The third line of the definition of each TP set operator is a direct
consequence of Def.~\ref{def:change}.  It guarantees that, when
merging consecutive time points into an interval, we consider only the
ones for which the condition in the first line is satisfied.  In other
words, a new interval is created whenever there is a change in the
validity of a tuple from either $\mathbf{r}$ or $\mathbf{s}$ at the
currently considered time point. In Example~\ref{fig:defExample}, at
time points $t=5$ and $t=8$,
$\lambda^{\mathbf{a}, \kat{\bsq{milk}}}_{t} = {a_1}$ and
$\lambda^{\mathbf{c}, \kat{\bsq{milk}}}_{t} = \mathtt{null}$. Thus,
outside the interval $[6,8)$ of tuple (\bsq{milk}, [6,8),
$a_1 \land \neg c_2$, 0.09), there are no time points for which
$\textbf{andNot}(\lambda^{\mathbf{a}, \kat{\bsq{milk}}}_{t},
\lambda^{\mathbf{c}, \kat{\bsq{milk}}}_{t}) \equiv a_1 \land \neg c_2$.
Fig.~\ref{fig:tpSetOps} shows the result of all TP set operations
between relations $\mathbf{a}$ and $\mathbf{c}$ in
Fig.~\ref{fig:tpdbRelations}.

\begin{figure}[ht] \centering 
  \fontsize{8pt}{10pt}\selectfont
  \setlength{\tabcolsep}{2pt}
  \begin{minipage}{0.45\linewidth} \centering
    \begin{tabular}{c|c|l|l}
      \multicolumn{4}{l}{$\mathbf{a} \cup^\kat{Tp} \mathbf{c}$}\\
      \hline
      \textit{Product}  & $\lambda$ & T &  p\\
      \hline
      \bsq{milk}   & $c_1$               & [1,2)  & 0.6  \\
      \bsq{milk}   & $a_1 \lor c_1$  & [2,4)  & 0.72 \\
      \bsq{milk}   & $a_1$               & [4,6)  & 0.3  \\
      \bsq{milk}   & $a_1 \lor c_2$  & [6,8)  & 0.79 \\
      \bsq{milk}   & $a_1$               & [8,10) & 0.3  \\
      \bsq{chips}  & $a_2 \lor c_3$ & [4,5)  & 0.94 \\
      \bsq{chips}  & $a_2$              & [5,7)  & 0.8  \\
      \bsq{chips}  & $c_4$              & [7,9)  & 0.8  \\
      \bsq{dates}  & $a_3$             & [1,3)  & 0.6  \\ 
      \hline
    \end{tabular}
  \end{minipage}
  \quad
  \begin{minipage}{0.4\linewidth} \centering
    \begin{tabular}{c|c|l|l}
      \multicolumn{4}{l}{$\mathbf{a} -^\kat{Tp} \mathbf{c}$}\\
      \hline
      \textit{Product}  & $\lambda$ & T &  p\\
      \hline
      \bsq{milk}   & $a_1 \land \neg c_1$  & [2,4)  & 0.12 \\ 
      \bsq{milk}   & $a_1$                         & [4,6)  & 0.3 \\ 
      \bsq{milk}   & $a_1 \land \neg c_2$  & [6,8)  & 0.09 \\ 
      \bsq{milk}   & $a_1$                         & [8,10) & 0.3 \\ 
      \bsq{chips}  & $a_2 \land \neg c_3$  & [4,5)  & 0.24 \\ 
      \bsq{chips}  & $a_2$                         & [5,7)  & 0.8 \\ 
      \bsq{dates}  & $a_3$                        & [1,3)  & 0.6 \\ 
      \hline
\end{tabular}

\end{minipage}

\vspace*{0.5cm}
\begin{tabular}{c|c|l|l}
      \multicolumn{4}{l}{$\mathbf{a} \cap^\kat{Tp} \mathbf{c}$}\\
      \hline
      \textit{Product}  & $\lambda$ & T &  p\\
      \hline
      \bsq{milk}  & $a_1 \land c_1$ &  [2,4) & 0.18\\
      \bsq{milk}  & $a_1 \land c_2$ &  [6,8) & 0.21\\
      \bsq{chips} & $a_2 \land c_3$ &  [4,5) & 0.56\\
      \hline
\end{tabular}
\medskip

\caption{TP set operations computed for the relations of
  Fig.~\ref{fig:tpdbRelations}.}
 \vspace{-0.3cm}
\label{fig:tpSetOps}
\end{figure}

\subsection{TP Set Queries \& Complexity}
\label{sec:complexity}
Having defined TP set operations, we now move on to TP set queries,
which are expressions of TP set operations over TP relations.

\begin{definition}
  \label{def:tpSetQueryDef} {\it\bf{(TP Set Query)}} {\em Let
    $\mathbf{r}_1,\ldots, \mathbf{r}_m$ be dup\-licate-free TP
    relations. 
    A {\em TP set query} $Q$ is any expression of TP set operators
    that adheres to the following grammar:}
  $$ Q ::= \mathbf{r}_i \mid Q \cup^\kat{Tp} Q \mid Q \cap^\kat{Tp} Q
           \mid Q -^\kat{Tp} Q \mid (Q) $$
\end{definition}

The following theorem and corollary establish an interesting
relationship between {\em safe queries} \cite{DalviS07,DalviS12} in
probabilistic databases and tractable queries in our TP setting. The
theorem is based on the observation that repeated applications of TP
set operations create regular lineage expressions, which are in {\em
one-occurrence form} (1OF) \cite{SuciuKochOlteanuReBook} if none of
the input relations occurs more than once in a TP set query. Formally,
a formula is in 1OF iff no tuple identifier occurs more than once in
the formula. Correspondingly, we call a TP set query $Q$ {\em
  non-repeating} iff every input relation $\mathbf{r}_i$ occurs at
most once in $Q$. \\

\begin{theorem}
  \label{thm:readOnce} {\em Any non-repeating TP set query $Q$ over
    duplicate-free TP relations yields lineage formulas in 1OF.}
\end{theorem}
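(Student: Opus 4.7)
The plan is to prove the theorem by structural induction on the TP set query $Q$, exploiting two structural facts: (i) duplicate-freeness guarantees that for any intermediate TP relation $\mathbf{u}$, any fact $f$, and any time point $t$, the expression $\lambda^{\mathbf{u},f}_t$ (if not $\mathtt{null}$) refers to the lineage of a unique tuple; (ii) the non-repeating restriction on $Q$ ensures that the sets of base relations appearing under the two children of any internal operator node are disjoint. The goal is to show that the lineage $\tilde{r}.\lambda$ attached to any tuple $\tilde{r}$ of the result $Q$ is a Boolean formula in which no base-tuple identifier occurs more than once.

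In the base case $Q = \mathbf{r}_i$, every tuple $r \in \mathbf{r}_i$ carries the atomic lineage $r.\lambda = r$, which is trivially in 1OF. For the inductive step I would consider $Q = Q_1 \mathbin{op^\kat{Tp}} Q_2$ for $op^\kat{Tp} \in \{\cup^\kat{Tp},\cap^\kat{Tp},-^\kat{Tp}\}$. Let $\mathrm{Base}(Q_j)$ denote the set of base relations appearing in $Q_j$. Because $Q$ is non-repeating, $\mathrm{Base}(Q_1)\cap \mathrm{Base}(Q_2)=\emptyset$, and since every identifier occurring in a lineage of $Q_j$ originates from some relation in $\mathrm{Base}(Q_j)$, the variable sets of lineages produced by $Q_1$ and $Q_2$ are disjoint. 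By the induction hypothesis, for any fact $f$ and time point $t$, both $\lambda^{Q_1,f}_t$ and $\lambda^{Q_2,f}_t$ are either $\mathtt{null}$ or formulas in 1OF.

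For any output tuple $\tilde{r}$ of $Q$ and any $t \in \tilde{r}.T$, Definition~\ref{def:tpSetOpDef} dictates that $\tilde{r}.\lambda$ is obtained by applying one of $\textbf{and}$, $\textbf{or}$, or $\textbf{andNot}$ to $\lambda^{Q_1,\tilde{r}.F}_t$ and $\lambda^{Q_2,\tilde{r}.F}_t$ (and change preservation, Def.~\ref{def:change}, ensures the choice is the same for every $t \in \tilde{r}.T$, so it is well-defined). Inspecting Table~\ref{tab:lineageFunctions}, each of these three functions plugs its two arguments into the resulting formula at most once, connected by a single Boolean connective and possibly a negation. Therefore every identifier of $\lambda^{Q_1,\tilde{r}.F}_t$ appears exactly as often in $\tilde{r}.\lambda$ as in $\lambda^{Q_1,\tilde{r}.F}_t$ itself (namely once, by IH), and similarly for identifiers of $\lambda^{Q_2,\tilde{r}.F}_t$; because the two identifier sets are disjoint, no identifier appears more than once in $\tilde{r}.\lambda$, which is exactly the 1OF property.

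The main obstacle I anticipate is making the variable-disjointness argument watertight. It rests on the auxiliary claim that every identifier occurring in a lineage produced by a subquery $Q_j$ belongs to a relation in $\mathrm{Base}(Q_j)$, which itself deserves a short inductive justification running in parallel to the main induction and using the fact that the three lineage-concatenation functions never introduce fresh identifiers. A second small subtlety is that the uniqueness of $\lambda^{Q_j,f}_t$ requires intermediate results to remain duplicate-free; this follows from the assumed closure of TP set operations on duplicate-free relations, and I would simply invoke it rather than re-prove it here.
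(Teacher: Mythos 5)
Your proof is correct and follows essentially the same route as the paper's: an induction over the structure of the query in which duplicate-freeness guarantees at most one contributing tuple per fact and time point from each operand, and the non-repeating condition guarantees that the two operands contribute disjoint sets of base-tuple identifiers, so a single application of \textbf{and}, \textbf{or}, or \textbf{andNot} cannot duplicate an identifier. If anything, your version is tighter --- the paper's proof spends most of its effort counting output intervals ($2n-1$) and arguing that change preservation never merges intervals with distinct lineages, while only parenthetically invoking the non-repeating condition, whereas you state and use the base-relation disjointness explicitly (and correctly flag the two auxiliary facts it rests on).
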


\begin{proof}
  Consider a TP set operation over two TP relations $\mathbf{r}$ and
  $\mathbf{s}$, both having schema ($F$, $\lambda$, $T$, $p$). Since
  $\mathbf{r}$ and $\mathbf{s}$ are duplicate-free, we cannot have two
  tuples in either $\mathbf{r}$ or $\mathbf{s}$ that share the same fact at
  overlapping time intervals. Assume we have $n_1$ tuples in
  $\mathbf{r}$ and $n_2$ tuples in $\mathbf{s}$ with the same fact
  $f$, but each with non-overlapping time intervals.  Then, for
  $n = n_1 + n_2$ input intervals, we can at most obtain $2\,n-1$
  output intervals. According to change preservation
  (Def.~\ref{def:change}), we create the same amount of output tuples,
  one for each output interval and each with a different combination
  of tuple identifiers in their lineage (Def.~\ref{def:tpSetOpDef}).
  Next, inductively, during any further application of a TP set operation
  (over non-repeating subgoals), change preservation will only merge
  two consecutive time intervals iff their lineages are
  equivalent. This cannot occur, since all of the lineages that are
  created by an individual TP set operator are different. That is, for
  a non-repeating TP set query, each tuple identifier can occur at
  most once in the lineage of a result tuple, which means that the
  lineages are in 1OF.
\end{proof}

\begin{corollary}
\label{cor:ptime}
{\em Any non-repeating TP set query $Q$ over dup\-licate-free TP relations has PTIME data complexity.}
\end{corollary}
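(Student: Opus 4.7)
The plan is to combine Theorem~\ref{thm:readOnce} with the well-known tractability of read-once Boolean formulas. Under data complexity, $|Q|$ is fixed and we only care about polynomial dependence on the total input size $n = \sum_i |\mathbf{r}_i|$.

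First, I would establish a polynomial bound on the size of the result. The proof of Theorem~\ref{thm:readOnce} already shows that a single TP set operation on inputs of sizes $n_1$ and $n_2$ produces at most $2(n_1+n_2)-1$ output tuples per fact, hence $O(n_1+n_2)$ tuples overall. Arguing inductively over the (constant-size) expression tree of $Q$ and observing that each TP set operation produces duplicate-free output (from change preservation, Def.~\ref{def:change}, together with Def.~\ref{def:tpSetOpDef}), the total number of intermediate and final tuples remains polynomial in $n$. The lineages attached to these tuples can be built in PTIME by the lineage-concatenation functions of Table~\ref{tab:lineageFunctions}, and since $Q$ is non-repeating, each base tuple identifier contributes to the lineage of an output tuple at most once; hence each lineage expression has size $O(n)$.

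Second, the probability of every output tuple can be evaluated in PTIME. By Theorem~\ref{thm:readOnce}, the lineage of every tuple produced by $Q$ is in 1OF. For read-once Boolean formulas over independent random variables, the marginal probability is computable in time linear in the size of the formula by a single bottom-up traversal, handling $\land$ as a product, $\lor$ as $1-\prod(1-p_i)$, and $\neg$ as $1-p$ (see \cite{SuciuKochOlteanuReBook}). Applying this to each of the polynomially many output tuples yields an overall polynomial running time.

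The main obstacle will be to make the induction over the expression tree airtight: I would need to verify that each intermediate relation produced by a TP set operator is itself duplicate-free and that the non-repeating property of $Q$ is inherited by every subexpression, so that Theorem~\ref{thm:readOnce} applies at every node of the tree and not only at the root. Once this invariant is in place, the PTIME bound on output size, lineage construction, and read-once probability evaluation compose immediately into a polynomial data-complexity bound for the whole query.
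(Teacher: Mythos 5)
Your proof is correct and takes essentially the same route as the paper's: invoke Theorem~\ref{thm:readOnce} to get lineages in 1OF, use the linear-time evaluation of read-once formulas over independent random variables from \cite{SuciuKochOlteanuReBook}, and observe that the set-operation algorithms themselves (and hence the output sizes) are polynomial. The inductive invariant you flag---that intermediate relations stay duplicate-free and subexpressions stay non-repeating so that Theorem~\ref{thm:readOnce} applies at every node---is left implicit in the paper but does not change the argument.
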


The proof of the corollary follows directly from
Theorem~\ref{thm:readOnce}, since computing the marginal probability
of a Boolean formula in 1OF can be done in linear time in the size of
the formula for independent random variables
\cite{SuciuKochOlteanuReBook}.  Also, all temporal alignment
operations are of polynomial complexity (see
\cite{DignosTODS16,DignosBG12} as well as the algorithms in
Section~\ref{sec:sweepingIUE} and Section~\ref{sec:basicIUE}).

The above class of non-repeating TP set queries over duplicate-free TP
relations nicely complements the dichotomy theorem
\cite{DalviS07,DalviS12} established for unions of conjunctive queries
(UCQs) in probabilistic databases.  Each individual TP set operation
over two compatible relation schemas resolves to (a union of) at most
two conjunctive queries, in which no intermediate duplicates due to a
projection onto a subset of attributes in $F$ may arise. Although
repeated applications of TP set operations in a query do not
necessarily form UCQs, the overall query remains hierarchical
\cite{SuciuKochOlteanuReBook}, since all attributes in $F$ are
propagated through the operations. Change preservation, on the other
hand, which is required for a sequenced temporal semantics, preserves
these complexity considerations by merging only intervals with
equivalent lineage expressions into a single output interval. TP set
queries with repeating subgoals however remain \#P-hard as shown in
\cite{DBLP:journals/pvldb/KhannaRT11} (consider, e.g., the query
$\left(\mathbf{r}_1 \cup^\kat{Tp} \mathbf{r}_2\right) -^\kat{Tp}
\left(\mathbf{r}_1 \cap^\kat{Tp} \mathbf{r}_3\right)$).

\section{Lineage-Aware Temporal Windows}
\label{sec:windows}
 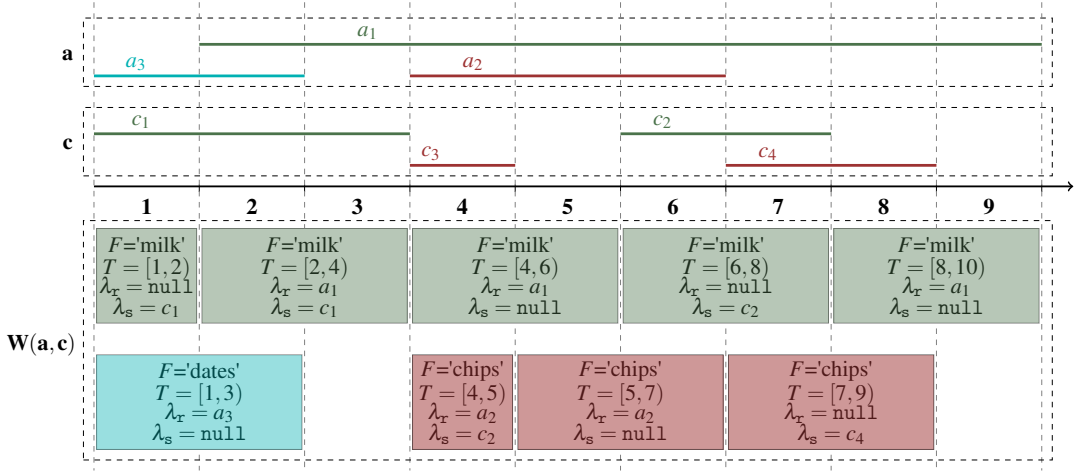
\begin{figure*}[ht]\centering
    \scalebox{0.7} {
      \begin{tikzpicture}
        \draw (0,0) [->, line width = 1pt]-- coordinate (x axis mid) (18.6,0);
  
        \pgfmathsetmacro{\shift}{1}
        \foreach \j in {1,...,9}{
          \draw (\j*2 , 1pt)  --  (\j*2 ,-3pt)
            node[anchor=north,font=\relsize{2}] {};
        }

        \foreach \j in {1,...,9}{
          \pgfmathsetmacro{\divRes}{int(\j/2)}
          \pgfmathsetmacro{\modRes}{1-(\j-\divRes*2)}
          \pgfmathsetmacro{\xPos}{2*(\j-1)}
          \draw (\xPos+1 , -3pt)
            node[anchor=north,font=\relsize{1}] {\bf \j};
        }

        \pgfmathsetmacro{\xAxisOne}{-0.2}
        \pgfmathsetmacro{\xAxisTwo}{18.2}
        \pgfmathsetmacro{\windowHeight}{1.5}

        \pgfmathsetmacro{\yAxis}{2.1}
        \draw [line width=1.5,color=c1] (2*2-2, 2.7) -- (10*2-2, 2.7)
          node[pos=.2,above=-1pt]{\large $a_1$}; 
        \draw [line width=1.5,color=c3] (4*2-2, 2.1) -- (7*2-2, 2.1)
          node[pos=.2,above=-1pt]{\large $a_2$}; 
        \draw [line width=1.5,color=c4] (1*2-2, 2.1) -- (3*2-2, 2.1)
          node[pos=.2,above=-1pt]{\large $a_3$}; 
        \node at (\xAxisOne-0.3, \yAxis+0.4) {\large $\mathbf{a}$}; 
        \pgfmathsetmacro{\yAxis}{\yAxis-0.2}
        \draw [dashed] (\xAxisOne,\yAxis) --
                       (\xAxisOne,\yAxis + \windowHeight - 0.2) --
                       (\xAxisTwo,\yAxis + \windowHeight - 0.2) --
                       (\xAxisTwo, \yAxis + 0) --
                       cycle;

    \pgfmathsetmacro{\yAxis}{0.4}
    \draw [line width=1.5,color=c1] (1*2-2, 1) -- (4*2-2,1)
    node[pos=.15,above=-1pt]{\large $c_1$}; 
    \draw [line width=1.5,color=c1] (6*2-2, 1) -- (8*2-2,1)
    node[pos=.2,above=-1pt]{\large  $c_2$}; 
    \draw [line width=1.5,color=c3] (4*2-2, 0.4) -- (5*2-2,0.4)
    node[pos=.2,above=-1pt]{\large $c_3$}; 
    \draw [line width=1.5,color=c3] (7*2-2, 0.4) -- (9*2-2,0.4)
    node[pos=.2,above=-1pt]{\large $c_4$}; 
    \node at (\xAxisOne-0.3, \yAxis+0.4) {\large $\mathbf{c}$}; 
    \pgfmathsetmacro{\yAxis}{\yAxis-0.2}
 	\draw [dashed]
	(\xAxisOne,\yAxis) -- (\xAxisOne,\yAxis + \windowHeight - 0.2) -- (\xAxisTwo,\yAxis + 	\windowHeight - 0.2) -- (\xAxisTwo, \yAxis + 0)  -- cycle;
    \pgfmathsetmacro{\yAxis}{3.5}
%
%
%
    \pgfmathsetmacro{\yPosB}{-0.7}
    \pgfmathsetmacro{\yPos}{\yPosB-0.4}
    \node at (1, \yPos) {\large $F$=\bsq{milk}}; 
    \node at (1, \yPos-0.45) {\large $T = [1,2)$}; 
    \node at (1, \yPos-0.8) {\large $\mathtt{\lambda_r} = \mathtt{null}$}; 
    \node at (1, \yPos-1.2) {\large $\mathtt{\lambda_s} = c_1$}; 
  	\draw [fill=c1, opacity = 0.4] (0.05,\yPos+0.3) -- (0.05,\yPos - \windowHeight) -- 
  	(1.95,\yPos - \windowHeight) -- (1.95, \yPos + 0.3)  -- cycle;

    \node at (4, \yPos) {\large $F$=\bsq{milk}}; 
    \node at (4, \yPos-0.45)  {\large $T = [2,4)$}; 
    \node at (4, \yPos-0.8) {\large $\mathtt{\lambda_r} = a_1$}; 
    \node at (4, \yPos-1.2) {\large $\mathtt{\lambda_s} = c_1$}; 
   	\draw [fill=c1, opacity = 0.4] (2+0.05,\yPos+0.3) -- (2+0.05,\yPos - \windowHeight) --
   	(6-0.05,\yPos - 	\windowHeight) -- (6-0.05, \yPos + 0.3)  -- cycle;

    \node at (8, \yPos) {\large $F$=\bsq{milk}}; 
    \node at (8, \yPos-0.45) {\large $T = [4,6)$}; 
    \node at (8, \yPos-0.8) {\large $\mathtt{\lambda_r} = a_1$}; 
    \node at (8, \yPos-1.2) {\large $\mathtt{\lambda_s} = \mathtt{null}$}; 
   	\draw [fill=c1, opacity = 0.4](6+0.05,\yPos+0.3) -- (6+0.05,\yPos - \windowHeight) --
   	(10-0.05,\yPos - 	\windowHeight) -- (10-0.05, \yPos + 0.3)  -- cycle;

    \node at (12, \yPos) {\large $F$=\bsq{milk}};    
    \node at (12, \yPos-0.45) {\large $T = [6,8)$}; 
    \node at (12, \yPos-0.8)  {\large $\mathtt{\lambda_r} = \mathtt{null}$}; 
    \node at (12, \yPos-1.2) {\large $\mathtt{\lambda_s} = c_2$};     
   	\draw [fill=c1, opacity = 0.4] (10+0.05,\yPos+0.3) -- (10+0.05,\yPos - \windowHeight) --
   	(14-0.05,\yPos - 	\windowHeight) -- (14-0.05, \yPos + 0.3)  -- cycle;

    \node at (16, \yPos) {\large $F$=\bsq{milk}}; 
    \node at (16, \yPos-0.45)  {\large $T = [8,10)$}; 
    \node at (16, \yPos-0.8) {\large $\mathtt{\lambda_r} = a_1$}; 
    \node at (16, \yPos-1.2) {\large $\mathtt{\lambda_s} = \mathtt{null}$};     
   	\draw [fill=c1, opacity = 0.4] (14+0.05,\yPos+0.3) -- (14+0.05,\yPos - \windowHeight) --
   	(18-0.05,\yPos - \windowHeight) -- (18-0.05, \yPos + 0.3)  -- cycle;

	\pgfmathsetmacro{\yAxis}{3.5}    
	\pgfmathsetmacro{\temp}{\yPosB}   
	\pgfmathsetmacro{\yPosA}{\yPos}
	\pgfmathsetmacro{\yPosB}{\yPos-4.3}
	\pgfmathsetmacro{\yPos}{\temp-2.8}

    \node at (2, \yPos) {\large $F$=\bsq{dates}}; 
    \node at (2, \yPos-0.45) {\large $T = [1,3)$}; 
    \node at (2, \yPos-0.8) {\large $\mathtt{\lambda_r} = a_3$}; 
    \node at (2, \yPos-1.2) {\large $\mathtt{\lambda_s} = \mathtt{null}$}; 
   	\draw [fill=c4, opacity = 0.4] (0+0.05,\yPos+0.3) -- (0+0.05,\yPos - \windowHeight) --
   	(4-0.05,\yPos - 	\windowHeight) -- (4-0.05, \yPos + 0.3)  -- cycle;

    \node at (7, \yPos) {\large $F$=\bsq{chips}}; 
    \node at (7, \yPos-0.45) {\large $T = [4,5)$}; 
    \node at (7, \yPos-0.8) {\large $\mathtt{\lambda_r} = a_2$}; 
    \node at (7, \yPos-1.2) {\large $\mathtt{\lambda_s} = c_2$}; 
   	\draw [fill=c3, opacity = 0.5] (6+0.05,\yPos+0.3) -- (6+0.05,\yPos - \windowHeight) --
   	(8-0.05,\yPos - \windowHeight) -- (8-0.05, \yPos + 0.3)  -- cycle;
   	
    \node at (10, \yPos) {\large $F$=\bsq{chips}};    
    \node at (10, \yPos-0.45) {\large $T = [5,7)$}; 
    \node at (10, \yPos-0.8)  {\large $\mathtt{\lambda_r} = a_2$}; 
    \node at (10, \yPos-1.2) {\large $\mathtt{\lambda_s} = \mathtt{null}$};     
   	\draw [fill=c3, opacity = 0.5] (8+0.05,\yPos+0.3) -- (8+0.05,\yPos - \windowHeight) --
   	(12-0.05,\yPos - \windowHeight) -- (12-0.05, \yPos + 0.3)  -- cycle;

    \node at (14, \yPos) {\large $F$=\bsq{chips}}; 
    \node at (14, \yPos-0.45)  {\large $T = [7,9)$}; 
     \node at (14, \yPos-0.8) {\large $\mathtt{\lambda_r} = \mathtt{null}$}; 
     \node at (14, \yPos-1.2) {\large $\mathtt{\lambda_s} = c_4$};     
   	\draw [fill=c3, opacity = 0.5] (12+0.05,\yPos+0.3) -- (12+0.05,\yPos - \windowHeight) --
   	(16-0.05,\yPos - \windowHeight) -- (16-0.05, \yPos + 0.3)  -- cycle;

	\draw[color=gray,dashed] (0,\yAxis) -- (0,\yPosB);

    \draw[color=gray,dashed] (2,\yAxis)   -- (2,\yPos+0.3);
    \draw[color=gray,dashed] (2,\yPos-\windowHeight-0.1)   -- (2,\yPosB);

    \draw[color=gray,dashed] (4,\yAxis)   -- (4,\yPosA+0.3);
    \draw[color=gray,dashed] (4,\yPosA-\windowHeight-0.1)   -- (4,\yPosB);

	\draw[color=gray,dashed] (6,\yAxis) -- (6,\yPosB);

    \draw[color=gray,dashed] (8,\yAxis)   -- (8,\yPosA+0.3);
    \draw[color=gray,dashed] (8,\yPosA-\windowHeight-0.1)   -- (8,\yPosB);

    \draw[color=gray,dashed] (10,\yAxis)  -- (10,\yPos+0.3);
    \draw[color=gray,dashed] (10,\yPos-\windowHeight-0.1)   -- (10,\yPosB);

    \draw[color=gray,dashed] (12,\yAxis)   -- (12,\yPosA+0.3);
    \draw[color=gray,dashed] (12,\yPosA-\windowHeight-0.1)   -- (12,\yPosB);

    \draw[color=gray,dashed] (14,\yAxis)  -- (14,\yPos+0.3);
    \draw[color=gray,dashed] (14,\yPos-\windowHeight-0.1)   -- (14,\yPosB);

    \draw[color=gray,dashed] (16,\yAxis)   -- (16,\yPosA+0.3);
    \draw[color=gray,dashed] (16,\yPosA-\windowHeight-0.1)   -- (16,\yPosB);

	\draw[color=gray,dashed] (18,\yAxis) -- (18,\yPosB);

    \node at (\xAxisOne-0.8, \yPos+0.5) {\large $\mathbf{W}(\mathbf{a}, \mathbf{c})$}; 
 	\draw [dashed]
	(\xAxisOne,\yPosA+0.45) -- (\xAxisOne,\yPosB+0.2)   -- 
	(\xAxisTwo,\yPosB+0.2)  -- (\xAxisTwo, \yPosA+0.45) -- cycle;

  \end{tikzpicture}}

\caption{Lineage-Aware Temporal Windows $\mathbf{W}(\mathbf{a}, \mathbf{c})$}
  \label{fig:windowsAC}
\end{figure*}

  The result of all TP set operations includes facts whose
  probability is computed over maximal intervals, i.e., intervals
  during which the same input tuples are valid. The computation of
  such intervals in temporal databases is performed by adjusting the
  intervals of each input relation based on the tuples of the other
  input relation that are valid. Combining the adjusted results to
  identify the intervals when, for example, tuples of both relations
  are valid~\cite{DignosBG14}, and concatenating their lineages for
  probability computation~\cite{DyllaMT13, DignosBG14} must be
  performed with joins.  In this section, we introduce the
  \emph{lineage-aware temporal window}, a novel mechanism that
  directly associates candidate output intervals with the lineage
  expressions of the valid input tuples of both relations. We show
  that a window contains all the information to produce an output
  tuple of a TP set operation $op^\kat{Tp}$, and that the set of all
  windows is a common core based on which all set operations can be
  computed using simple filtering and lineage-concatenation functions.

  A lineage-aware temporal window has schema ($F$, $T$,
  $\mathtt{\lambda_r}$, $\mathtt{\lambda_s}$). $F$ is a fact included
  in tuples over interval $T$. $\mathtt{\lambda_r}$ and
  $\mathtt{\lambda_s}$ are the lineage expressions of the input tuples
  of the left input relation $\mathbf{r}$ and the right input relation
  $\mathbf{s}$, respectively, which are valid over
  $[\mathtt{winTs}, \mathtt{winTe})$ and include $F$. \\

\begin{definition} \label{def:line} {\it 
(Lineage-Aware Windows) Let ${\bf r}$ and ${\bf s}$ be TP
relations with schema ($F$, $\lambda$, $T$, $p$). The set
of lineage-aware windows ${\bf W}({\bf r} , {\bf s})$ of
${\bf r}$ with respect to ${\bf s}$ with schema 
($F$, $T$, $\lambda_r$, $\lambda_s$) is defined as follows:}
  \begin{align*}
    \tilde{w} \in \mathbf{W} \Longleftrightarrow\ 
    & \forall t \in \tilde{w}.T ( \ 
	 (\lambda^{\mathbf{r}, \tilde{w}.F}_{t} \neq \mathtt{null} \ \lor \
      \lambda^{\mathbf{s}, \tilde{w}.F}_{t} \neq \mathtt{null}) \ \land \\ 
    & \hspace*{1.5cm}
     (\tilde{w}.\lambda_r = \lambda^{\mathbf{r}, \tilde{w}.F}_{t} \land \ 
      \tilde{w}.\lambda_s = \lambda^{\mathbf{s}, \tilde{w}.F}_{t})\ )
     \ \land \\      
    & \forall t' \notin \tilde{w}.T 
     (\tilde{w}.\lambda_r \neq \lambda^{\mathbf{r}, \tilde{w}.F}_{t} \lor \ 
      \tilde{w}.\lambda_s \neq \lambda^{\mathbf{s}, \tilde{w}.F}_{t})
    \\[2pt]
  \end{align*}
\end{definition}

For a window $\tilde{w}$ to be created over $\tilde{w}.T$, at least a
tuple of one of the input relations must be valid (Line 1). Each window
$\tilde{w}$ in ${\bf W}({\bf r}, {\bf s})$ spans over the interval or a
subinterval of a tuple $r$ in ${\bf r}$ or a tuple $s$ in ${\bf s}$
that include the fact $\tilde{w}.F$ and as stated in the second line of
the definition these tuples will determine $\tilde{w}.\lambda_r$ and
$\tilde{w}.\lambda_s$ respectively.  Finally, according to line 3 of
Definition~\ref{def:line}, the interval of window $\tilde{w}$ is a
maximal subinterval of an input tuple. In other words, at every time
point outside the $\tilde{w}.T$, either an input tuple that was valid
over $\tilde{w}.T$ stops being valid or an input tuple that was not
valid over $\tilde{w}.T$ starts being valid.

\begin{example}
{
  In Fig.~\ref{fig:windowsAC}, the TP relations ${\bf a}$ and
  ${\bf c}$ of Fig.~\ref{fig:tpdb} are illustrated along with the
  lineage-aware temporal windows of these two relations.  Different
  colors are used for different facts: green for \bsq{milk}, red for
  \bsq{chips}, and blue for \bsq{dates}.  A rectangle represents a
  window, filled in the color of the tuples including the
  corresponding fact. The window $w_1$ = (\bsq{milk}, [1,2), $c_1$,
  $\mathtt{null}$) is colored green since it includes the fact
  $w_1.F$ = \bsq{milk}. It indicates that, over interval [1,2), fact
  \bsq{milk} is included in tuple $c_1$ of relation ${\bf c}$
  ($w_1.\lambda_r$ = $c_1$) but in no tuple of relation
  ${\bf a}$ ($w_1.\lambda_s$ = $\mathtt{null}$). The window
  $w_1$ only spans the maximal interval [1,2), since at time point
  $t=2$, tuple $a_1$ starts being valid and thus, there is a change
  in the tuples of the two relations that are valid at $t=2$ and
  include fact \bsq{milk}.}
\end{example}

\smallskip
\begin{theorem}
\label{thm:mapping} {\em Let ${\bf r}$ and ${\bf s}$ be TP
relations with schema ($F$, $\lambda$, $T$, $p$), $op^\kat{Tp}$ a
TP set operation, and ${\bf W}({\bf r} , {\bf s})$ the lineage-aware
windows of ${\bf r}$ and ${\bf s}$. Given the output of the TP
set-operation ${\bf r} \ op^\kat{Tp} \ {\bf s}$, there exists a
window $w$ in $W$ that contains all the necessary information to
produce a tuple $u$ in ${\bf r} \ op^\kat{Tp} \ {\bf s}$.}
\end{theorem}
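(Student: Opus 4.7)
The plan is to show that for any output tuple $u$ of $\mathbf{r}\ op^\kat{Tp}\ \mathbf{s}$, we can exhibit a unique window $w \in \mathbf{W}(\mathbf{r}, \mathbf{s})$ from which $u$ is recoverable via a simple filter plus one of the lineage-concatenation functions of Table~\ref{tab:lineageFunctions}. First I would fix an arbitrary $u \in \mathbf{r}\ op^\kat{Tp}\ \mathbf{s}$ and invoke the appropriate clause of Def.~\ref{def:tpSetOpDef}: for every $t \in u.T$, $u.\lambda$ is equivalent to $g(\lambda^{\mathbf{r}, u.F}_t, \lambda^{\mathbf{s}, u.F}_t)$, where $g$ is \textbf{or}, \textbf{and}, or \textbf{andNot} according to $op^\kat{Tp}$.

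The key lemma to establish is that both $\lambda^{\mathbf{r}, u.F}_t$ and $\lambda^{\mathbf{s}, u.F}_t$ are \emph{constant} over $t \in u.T$. Since $\mathbf{r}$ and $\mathbf{s}$ are duplicate-free, each $\lambda^{\mathbf{r}, u.F}_t$ is either $\mathtt{null}$ or the atomic identifier of the unique tuple in $\mathbf{r}$ containing fact $u.F$ at time $t$; the same holds for $\mathbf{s}$. Using the footnote convention that equivalence is tested syntactically on independent atomic identifiers, the fact that $u.\lambda$ is a fixed expression forces the two arguments of $g$ to be constant across $u.T$; call them $\lambda_r^{\ast}$ and $\lambda_s^{\ast}$. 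Now define $w = (u.F,\ u.T,\ \lambda_r^{\ast},\ \lambda_s^{\ast})$ and verify the three clauses of Def.~\ref{def:line}: the non-null disjunct in clause~1 is inherited from the well-definedness of $g$ in Def.~\ref{def:tpSetOpDef}; clause~2 holds by construction; and the maximality clause~3 follows from the third line of Def.~\ref{def:tpSetOpDef}, since any $t' \notin u.T$ at which both $\lambda^{\mathbf{r}, u.F}_{t'} = \lambda_r^{\ast}$ and $\lambda^{\mathbf{s}, u.F}_{t'} = \lambda_s^{\ast}$ would contradict the fact that $u.T$ is exactly the set of time points at which $g$ yields $u.\lambda$.

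Having produced $w$, I would then show how to recover $u$ from $w$ alone. The attributes $u.F$ and $u.T$ are immediately read off from $w.F$ and $w.T$. For TP union, $u.\lambda = \textbf{or}(w.\lambda_r, w.\lambda_s)$. For TP intersection, we first retain only those windows with $w.\lambda_r \neq \mathtt{null}$ and $w.\lambda_s \neq \mathtt{null}$, then set $u.\lambda = \textbf{and}(w.\lambda_r, w.\lambda_s)$. For TP difference, we retain only windows with $w.\lambda_r \neq \mathtt{null}$ and set $u.\lambda = \textbf{andNot}(w.\lambda_r, w.\lambda_s)$. In every case, the resulting $u.\lambda$ determines $u.p$ via the probabilistic valuation described in Section~\ref{sec:preliminaries}. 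No additional information from $\mathbf{r}$ or $\mathbf{s}$ is consulted.

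The main obstacle I anticipate is the constancy argument: formally justifying that equivalence of the outer expression $g(\cdot,\cdot)$ at every $t \in u.T$ forces the inner atomic lineages to be the same syntactic identifier, rather than merely equivalent under some semantic relation. I would handle this by appealing to (i) the duplicate-free assumption, which ensures each $\lambda^{\mathbf{r}, u.F}_t$ is atomic or null, and (ii) the independence of tuple identifiers as Boolean random variables, which makes two distinct identifiers logically inequivalent. Everything else — the maximality of $w.T$ matching $u.T$, and the one-to-one correspondence between TP set operations and the filter/concatenation rules above — reduces to a direct comparison of Def.~\ref{def:tpSetOpDef} with Def.~\ref{def:line}.
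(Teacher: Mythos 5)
Your proposal is correct and reaches the same conclusion as the paper, but it is organized differently. The paper's proof is a double proof by contradiction: it first assumes no window with $u$'s fact, interval and component lineages exists and derives a contradiction with $u$ being a valid output tuple, and then assumes two such windows exist and shows they coincide. You instead construct the candidate window $w = (u.F,\, u.T,\, \lambda_r^{\ast},\, \lambda_s^{\ast})$ directly and verify the three clauses of Definition~\ref{def:line} one by one, which is arguably cleaner. The genuine added value of your route is that you isolate and name the step the paper only implicitly assumes: that $\lambda^{\mathbf{r}, u.F}_{t}$ and $\lambda^{\mathbf{s}, u.F}_{t}$ are \emph{constant} over $t \in u.T$ (the paper jumps straight from ``exactly one tuple of each relation is valid at each $t$'' to ``the window has interval $u.T$'', which silently requires that it is the \emph{same} tuple throughout). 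You also explicitly spell out the recovery of $u$ from $w$ via the filters and concatenation functions, which the paper largely defers to Theorem~\ref{thm:reduction}.

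One imprecision worth fixing: your justification of the constancy lemma asserts that duplicate-freeness makes each $\lambda^{\mathbf{r}, u.F}_{t}$ ``atomic or null.'' That is true only for base relations; for derived TP relations (which the theorem must cover, since set operations compose) the lineage of the unique tuple valid at $t$ can be a compound formula such as $a_1 \lor b_1$ in Fig.~\ref{fig:latwABC}. Duplicate-freeness still gives you a single well-defined lineage per fact and time point, and the syntactic-comparison convention of the footnote to Definition~\ref{def:change}, together with the fixed parenthesization of the concatenation functions in Table~\ref{tab:lineageFunctions}, still lets you conclude that equality of the concatenated expressions forces equality of the components (modulo the degenerate \textbf{or} case where one argument is $\mathtt{null}$, which neither your argument nor the paper's handles explicitly). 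So the argument survives, but the appeal to atomicity should be replaced by an appeal to uniqueness of the contributing tuple plus syntactic comparison.
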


\begin{proof}
  We assume that $op^\kat{Tp}$ is a TP set-intersection
  ($\cap^\kat{Tp}$) and $u$ is an output tuple in
  ${\bf r} \ \cap^\kat{Tp} \ {\bf s}$. According to the definition of
  this operation and since, at each time point, only one tuple of each
  relation can include a fact, at each time point in $u.T$, there is
  exactly one tuple of ${\bf r}$ and one ${\bf s}$ valid and include
  $u.F$. Each window in ${\bf W}({\bf r}, {\bf s})$ records, for each
  fact $F$ and time point $t$, the tuples of each relation that
  include $F$ at $t$. Thus, windows are only created over time points
  when there is at least one valid input tuple. In order for $u$ to
  map to at least one window $w \in {\bf W}$, there must exist a
  window $w$ with the same fact ($u.F = w.F$) and interval
  ($u.T = w.T$) as $u$, and for which it holds that
  $w.\lambda_r = \lambda^{\mathbf{r}, u.F}_{t}$ and
  $w.\lambda_s = \lambda^{\mathbf{s}, u.F}_{t}$. Assuming that there
  is no such window, i.e., assuming that one of the above mentioned
  conditions is not satisfied, we conclude that there are
  no valid tuples including $u.F$ or the interval $u.T$ is not maximal.
  This contradicts our initial assumption of $u$ being a valid output
  tuple and of exactly one tuple of ${\bf r}$ and one ${\bf s}$ being
  valid over $u.T$ and including $u.F$. Consequently, there is at least
  one window $w \in {\bf W}$ to which we can map $u$.  In turn, we
  assume that $u$ maps to two windows $w_1$ and $w_2$ of ${\bf W}$. 
  This means that $u$ has the same fact and interval with both $w_1$
  and $w_2$ and that $w_1.\lambda_r = \lambda^{\mathbf{r}, u.F}_{t} =
  w_2.\lambda_r$ and $w_1.\lambda_s = \lambda^{\mathbf{s}, u.F}_{t} =
  w_2.\lambda_s$. Consequently, window $w_1$ coincides with $w_2$, and
  this proves that there is exactly one window $w \in {\bf W}$ that
  contains all the information needed to produce an output tuple $u$
  for TP set-intersection.  Similarly, we can prove that the same
  holds for an output tuple of any TP set operation.
\end{proof}

The flexibility of \emph{lineage-aware temporal windows} relies on two
characteristics: the lineages of valid tuples of each input relation
are directly associated with a maximal interval, and they are
separately recorded.  These two characteristics allow for an efficient
computation of the output tuples by using simple filtering conditions
and lineage-concatenating functions instead of the additional joins
performed in related approaches~\cite{DyllaMT13, DignosBG14}. Given a
TP set operation, $\mathtt{\lambda_r}$ and $\mathtt{\lambda_s}$ can be
used to determine whether fact $F$ and interval $[\mathtt{winTs}$,
$\mathtt{winTe})$ yield an output tuple.  If this is the case,
$\mathtt{\lambda_r}$ and $\mathtt{\lambda_s}$ are combined to the
lineage expression of this output tuple.

\begin{theorem}
\label{thm:reduction} {\em  Let ${\bf r}$ and ${\bf s}$ be TP
relations with schema ($F$, $\lambda$, $T$, $p$), $op^\kat{Tp}$
a TP set operation, and ${\bf W}({\bf r} , {\bf s})$ the set
of lineage-aware windows of ${\bf r}$ and ${\bf s}$. Given the
filtering conditions $\lambda_{filter}$ in Table~\ref{tab:filters}
and the lineage-concatenating functions $\lambda_{function}$ of
Definition~\ref{def:tpSetOpDef}, the computation of $op^\kat{Tp}$
is reduced to:

\begin{equation} {\bf r} \ op^\kat{Tp} \ {\bf s} \ = \ \pi_{F, T,
    \lambda_{function}(\lambda_r, \lambda_s)}(
  \sigma_{\lambda_{filter}} ({\bf W}({\bf r}, {\bf s})))
\end{equation}

\begin{table}[!h]\centering
  \caption{Definition of filtering conditions.}
  \label{tab:filters}
  \begin{tabular}{ M{0.4in} | M{1.2in} | M{0.8in}  @{}m{0pt}@{}}
	\hline
      $op^\kat{Tp}$ & $\lambda_{filter}$ & $\lambda_{function}$ & \\ [0.3cm]
      \hline
      ${\bf r} \ \cap^\kat{Tp} \ {\bf s}$ &
      $\lambda_r \neq \mathtt{null} \ \land \ \lambda_s \neq \mathtt{null}$ & 
      $\textit{\textbf{and}}(\lambda_r,\lambda_s)$ & \\ [0.3cm]
      ${\bf r} \ -^\kat{Tp} \ {\bf s}$ & 
      $\lambda_r \neq \mathtt{null}$ &
      $\textit{\textbf{andNot}}(\lambda_r,\lambda_s)$ &  \\ [0.3cm]
      ${\bf r} \ \cup^\kat{Tp} \ {\bf s}$ &
      $\lambda_r \neq \mathtt{null} \ \lor \ \lambda_s \neq \mathtt{null}$ &
      $\textit{\textbf{or}}(\lambda_r,\lambda_s)$ & \\ [0.3cm]
     \hline
    \end{tabular}
\end{table}
\vspace*{0.3cm}
}
\end{theorem}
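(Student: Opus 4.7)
The plan is to prove the set equality ${\bf r} \ op^\kat{Tp} \ {\bf s} = \pi_{F, T, \lambda_{function}(\lambda_r, \lambda_s)}(\sigma_{\lambda_{filter}}({\bf W}({\bf r}, {\bf s})))$ by establishing both inclusions, handling the three TP set operations uniformly by cross-referencing Definition~\ref{def:tpSetOpDef} with Table~\ref{tab:filters} and Table~\ref{tab:lineageFunctions}. Since Theorem~\ref{thm:mapping} has already done the heavy lifting of relating output tuples to windows, the argument reduces to checking that the filter and projection on the right-hand side faithfully implement the existential/non-existential conditions of Definition~\ref{def:tpSetOpDef}.

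For the forward inclusion ($\subseteq$), I would take any output tuple $\tilde{r} \in {\bf r} \ op^\kat{Tp} \ {\bf s}$ and invoke Theorem~\ref{thm:mapping} to obtain the unique window $w \in {\bf W}({\bf r}, {\bf s})$ with $w.F = \tilde{r}.F$, $w.T = \tilde{r}.T$, and $w.\lambda_r = \lambda^{\mathbf{r}, \tilde{r}.F}_{t}$, $w.\lambda_s = \lambda^{\mathbf{s}, \tilde{r}.F}_{t}$ for every $t \in \tilde{r}.T$. Then I would (i) read off that $w$ satisfies $\lambda_{filter}$ by comparing the non-null conditions on the first line of each case of Definition~\ref{def:tpSetOpDef} against the filter column of Table~\ref{tab:filters}, and (ii) observe that $\lambda_{function}(w.\lambda_r, w.\lambda_s) \equiv \tilde{r}.\lambda$ since the second line of Definition~\ref{def:tpSetOpDef} uses exactly the concatenation functions of Table~\ref{tab:lineageFunctions} applied to those same lineages.

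For the reverse inclusion ($\supseteq$), I would take a window $w \in {\bf W}({\bf r}, {\bf s})$ that survives $\lambda_{filter}$, define $\tilde{r} = (w.F, w.T, \lambda_{function}(w.\lambda_r, w.\lambda_s))$, and verify all three conditions of Definition~\ref{def:tpSetOpDef} for $\tilde{r}$: the first line follows from the surviving filter condition together with the first line of Definition~\ref{def:line}; the second line holds by construction since $\lambda_{function}$ is applied exactly to the lineages recorded in $w$; the third line (maximality of $\tilde{r}.T$ with respect to output lineage) is inherited from the third line of Definition~\ref{def:line}, which forces a change in $\lambda_r$ or $\lambda_s$ at every boundary of $w.T$.

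The main obstacle is the last step: showing that any change of $(w.\lambda_r, w.\lambda_s)$ at a window boundary propagates to a syntactically distinct output lineage under $\lambda_{function}$ (recall that equivalence is checked syntactically, per the footnote after Definition~\ref{def:change}). For $\textbf{and}$ this is immediate, but for $\textbf{andNot}$ and $\textbf{or}$ the $\mathtt{null}$-collapsing cases of Table~\ref{tab:lineageFunctions} need attention: e.g.\ two adjacent windows with $(\lambda_r, \mathtt{null})$ and $(\lambda_r, \lambda_s)$ map under $\textbf{or}$ to $\lambda_r$ and $\lambda_r \lor \lambda_s$ respectively, which are syntactically distinct, whereas the degenerate case in which both $w.\lambda_r$ and $w.\lambda_s$ would remain identical across a boundary is impossible by Definition~\ref{def:line}. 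A short per-operation case analysis closes this gap and, together with the two inclusions, completes the proof.
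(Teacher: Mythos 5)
Your proposal is correct and follows essentially the same route as the paper: both arguments reduce the claim to a direct comparison of the filter column of Table~\ref{tab:filters} and the concatenation functions of Table~\ref{tab:lineageFunctions} against the three lines of Definition~\ref{def:tpSetOpDef}, using the window definition (Definition~\ref{def:line}) to supply the per-time-point lineages. The differences are ones of completeness rather than strategy: the paper's proof only works out the inclusion ``RHS $\subseteq$ LHS'' by contradiction, and only for $\cap^\kat{Tp}$, asserting the converse and the other two operations ``similarly,'' whereas you prove both inclusions, explicitly route the forward direction through Theorem~\ref{thm:mapping}, and---most usefully---isolate the one step the paper glosses over, namely that the $\mathtt{null}$-collapsing branches of \textbf{andNot} and \textbf{or} could in principle map two adjacent windows with different $(\lambda_r,\lambda_s)$ pairs to syntactically identical output lineages, which would break the maximality condition in the third line of Definition~\ref{def:tpSetOpDef}. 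Your per-operation case analysis closing that gap is a genuine improvement over the paper's own argument, so keep it.
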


\begin{proof}
  We assume that $op^\kat{Tp}$ is a TP set-intersection
  ($\cap^\kat{Tp}$), and a tuple $u$ that is produced by the algebraic
  expression $\pi_{F, T, \textit{\textbf{and}}(\lambda_r, \lambda_s)}
  (\sigma_{\lambda_r \neq \mathtt{null}\, \land\, \lambda_s \neq
  \mathtt{null}} ({\bf W}({\bf r}, {\bf s})))$. As a result, $u$ has
  been produced from a window in ${\bf W}({\bf r} , {\bf s})$ for
  which $w.\lambda_r \neq \mathtt{null}$ and $w.\lambda_s \neq
  \mathtt{null}$.  Also, $u.\lambda = and(w.\lambda_r, w.\lambda_s)$.
  Assuming that $u \notin {\bf r} \ \cap^\kat{Tp} \ {\bf s}$ means that
  one of the conditions in Def.~\ref{def:tpSetOpDef} for TP
  set-intersection is not satisfied. This is not possible, since $u$
  has been produced based on a window $w$ and thus for all time points
  in $u.T$ or equivalently in $w.T$, $\lambda^{\mathbf{r}, u.F}_{t}
  \neq \mathtt{null}$, $\lambda^{\mathbf{s}, u.F}_{t} \neq
  \mathtt{null}$ and $u.\lambda = and(\lambda^{\mathbf{r}, u.F}_{t},
  \lambda^{\mathbf{s}, u.F}_{t})$. Similarly, the contradiction can be
  shown for the time points outside $u.T$ and it can be shown that all 
  tuples in ${\bf r} \ \cap^\kat{Tp} \ {\bf s}$ are created based on
  the algebraic expression $\pi_{F, T, \lambda_{function}(\lambda_r,
  \lambda_s)} (\sigma_{\lambda_{filter}} ({\bf W}({\bf r}, {\bf s})))$.
  We can prove that the same holds for an output tuple of any TP set
  operation.
\end{proof}

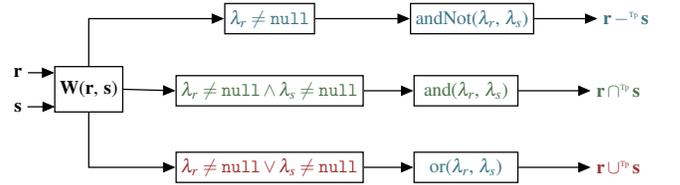
\begin{figure}[!h]
\tikzset{%
  algebra/.style    = {draw, thick, rectangle, minimum height = 2em,
    minimum width = 2em},
}
\tikzset{%
  lineage/.style    = {draw, thick, rectangle, minimum height = 2em,
    minimum width = 2.3cm},
}
\tikzset{%
  filter/.style    = {draw, thick, rectangle, minimum height = 2em,
    minimum width = 2.75cm},
}
\tikzset{%
  block/.style    = {draw, thick, rectangle, minimum height = 3em,
    minimum width = 3em},
}
\centering
\scalebox{0.6}{
\begin{tikzpicture}[auto, thick, node distance=2cm, >=triangle 45]
\draw
	node at (0,0)[right=-3mm] (relR) {\large $\mathbf{r}$}
	node [below = 0.3cm of relR] (relS) {\large $\mathbf{s}$}


	node [above right = 0.1cm and 0.8cm of relS.south, block] (relNR)
		 {\large {\bf W}($\mathbf{r}$, $\mathbf{s}$)}
 	node [right = 0.9cm of relR.west] (relNR1) {}
	node [right = 0.9cm of relS.west] (relNR2) {}

	node [above right = 0.7cm and 3cm of relNR.north, algebra] (setD)
	{\large \color{c2}{$\lambda_r \neq \mathtt{null}$}}
	
	node [below = 0.9cm of setD, algebra] (setI)
	{\large \color{c1}{$\lambda_r \neq \mathtt{null} \land \lambda_s \neq \mathtt{null}$}}

	node [below = 1cm of setI, algebra] (setU) {\large \color{c3}{$\lambda_r \neq \mathtt{null} \lor \lambda_s \neq \mathtt{null}$}}
	
	node [right = 1.1cm of setI, lineage] (setIL) {\large \color{c1}{and($\lambda_r$, $\lambda_s$)}}
	node [right = 2.1cm of setD, lineage] (setDL) {\large \color{c2}{andNot($\lambda_r$, $\lambda_s$)}}
	node [right = 1.1cm of setU, lineage] (setUL) {\large \color{c2}{or($\lambda_r$, $\lambda_s$)}}

	node [right =1.6cm of setUL] (resU) {\large \color{c3}{\large $\mathbf{r} \cup^\kat{Tp} \mathbf{s}$}}
	node [right =1.6cm of setIL] (resI) {\large \color{c1}{\large $\mathbf{r}\cap^\kat{Tp} \mathbf{s}$}}
	node [right =1.4cm of setDL] (resD) {\large \color{c2}{\large $\mathbf{r} -^\kat{Tp} \mathbf{s}$}};

\draw[->] (relR.east) -- (relNR1.center);
\draw[->] (relS.east) -- (relNR2.center);

\draw[->] (relNR.north) |- (setD.west);
\draw[->] (relNR.east) -- (setI.west);
\draw[->] (relNR.south) |- (setU.west);

\draw[->] (setI)  -- (setIL);
\draw[->] (setD) -- (setDL);
\draw[->] (setU) -- (setUL);

\draw[->] (setIL) -- (resI);
\draw[->] (setDL) -- (resD);
\draw[->] (setUL) -- (resU);

\end{tikzpicture}}
%
\caption{TP set operations using lineage-aware temporal windows.}
\label{fig:tpW}
\end{figure}

In Theorem~\ref{thm:reduction}, we reduce the computation of a TP set
operation ${\bf r} \ op^\kat{Tp} \ {\bf s}$ to the application of a
conventional projection and selection on the lineage-aware temporal
windows of ${\bf r}$ and ${\bf s}$. The filtering condition in the
selection as well as the lineage concatenating-function used in the
projection are directly derived from the definition of TP set
operations (Definition~\ref{def:tpSetOpDef}). The computation process
is illustrated in Fig.~\ref{fig:tpW}. In comparison to existing
temporal or probabilistic approaches used for set operations
(cf.\ Fig.~\ref{fig:temporalDBs} and Fig.~\ref{fig:probabilisticDB}),
the set of lineage-aware temporal windows constitutes a computational
core that only needs to be computed once and does not suffer from the quadratic complexity of previous approaches, as shown in Section~\ref{sec:sweepingIUE}.

\section{Lineage-Aware Window Advancer}
\label{sec:sweepingIUE}
In this section, we present the {\em lineage-aware window-advancer}
(LAWA), an algorithm that produces all lineage-aware temporal
windows of two TP relations. Each lineage-aware temporal window $w$
in ${\bf W}({\bf r}, {\bf s})$ records the lineage expression of the
tuple of each input relation that is valid over $w.T$ and that
includes $w.F$. Since the interval of each window is maximal, a new
window should be created when there is a change in the tuples of the
input relations that are valid and include a given fact. Such a
change only takes place when an input tuples starts or stops being
valid, i.e., at the starting and ending points of input intervals,
and this observation directly points to the use of a sweeping
technique.

\begin{center}
\begin{algorithm2e}[!htbp]
\small 
$(\mathtt{prevWinTe, currFact, rValid, sValid, r, s}) = \mathtt{status}$\;
\BlankLine
\If {$\mathtt{rValid}$ = $\mathtt{null}$ $\wedge$ $\mathtt{sValid}$ = $\mathtt{null}$} {  \label{line:newFactA}
    \If (\tcp*[f]{Case 1}){$\mathtt{r}=\mathtt{null} \wedge \mathtt{s}=\mathtt{null}$ } 
    {\label{line:bothNull}  
		\Return ($\mathtt{null}, \mathtt{null}$) 
	}
	

    \ElseIf (\tcp*[f]{Case 2}) {$\mathtt{r}=\mathtt{null} \wedge \mathtt{s} \neq \mathtt{null}$}
    {\label{line:oneNullA}
      $\mathtt{winTs}$ = $\mathtt{s.Ts}$;
      $\mathtt{currFact}$ = $\mathtt{s.F}$;
    } \ElseIf (\tcp*[f]{Case 3}){$\mathtt{r} \neq \mathtt{null} \wedge \mathtt{s}=\mathtt{null}$} {
      $\mathtt{winTs}$ = $\mathtt{r.Ts}$;
      $\mathtt{currFact}$ = $\mathtt{r.F}$;
\label{line:oneNullB}
    } \Else {
      \If  {$\mathtt{r}.F = \mathtt{currFact} \wedge
          \mathtt{s}.F \neq \mathtt{currFact}$} {\label{line:diffFactA}
           $\mathtt{winTs}$ = $\mathtt{r.Ts}$ \tcp*[f]{Case 4}} 
      \If {$\mathtt{r}.F \neq \mathtt{currFact} \wedge
          \mathtt{s}.F = \mathtt{currFact}$} {
           $\mathtt{winTs}$ = $\mathtt{s.Ts}$ \label{line:diffFactB}
           \tcp*[f]{Case 5}}
        \ElseIf (\tcp*[f]{Cases 6, 7}) {$\mathtt{r}.\mathtt{Ts} <  \mathtt{s}.\mathtt{Ts}$} {  \label{line:sameFactA}
      $\mathtt{winTs}$ = $\mathtt{r.Ts}$;
      $\mathtt{currFact}$ = $\mathtt{r.F}$;}
       \Else  { 
           $\mathtt{winTs}$ = $\mathtt{s.Ts}$;
	       $\mathtt{currFact}$ = $\mathtt{s.F}$;\label{line:sameFactB} 
	    } 
    }
}  \label{line:newFactB} 

  \lElse (\tcp*[f]{Case 8}){ $\mathtt{winTs}$ =  $\mathtt{prevWinTe}$   \label{line:adjacent} }
\BlankLine
\BlankLine

\If {$\mathtt{r} \neq \mathtt{null} \wedge \mathtt{r}.F = \mathtt{currFact} 
    \wedge \mathtt{r.Ts} = \mathtt{winTs}$  \label{line:rsValidA}} {
      $\mathtt{rValid}$ = $\mathtt{r}$;
      $\mathtt{r}$ = getNext($\mathtt{r}$);
}
\If {$\mathtt{s} \neq \mathtt{null} \wedge \mathtt{s}.F = \mathtt{currFact} 
    \wedge \mathtt{s.Ts} = \mathtt{winTs}$} {
      $\mathtt{sValid}$ = $\mathtt{s}$;
      $\mathtt{s}$ = getNext($\mathtt{s}$);
}\label{line:rsValidB}
\BlankLine
\BlankLine

$\mathtt{winTe}$ = min(minTs($\mathtt{r}$, $\mathtt{s}$), minTe($\mathtt{rValid}$, $\mathtt{sValid}$))\; \label{line:winTe}

\BlankLine
$\mathtt{\lambda_r}$ = $\mathtt{null}$; $\mathtt{\lambda_s}$ = $\mathtt{null}$; $\mathtt{window}$ = $\mathtt{null}$\; 
\BlankLine

\lIf {$\mathtt{rValid}$ $\neq$ $\mathtt{null}$  \label{line:lambdaA}}
{$\mathtt{\lambda_r}$ = $\mathtt{rValid}$.$\lambda$}
\lIf {$\mathtt{sValid}$ $\neq$ $\mathtt{null}$  \label{line:lambdaB}}
{$\mathtt{\lambda_s}$ = $\mathtt{sValid}$.$\lambda$}

\BlankLine
$\mathtt{window}$ = ($\mathtt{currFact}$, $\mathtt{winTs}$, $\mathtt{winTe}$, $\mathtt{\lambda_r}$ , $\mathtt{\lambda_s}$) \label{line:window} \;

\BlankLine
\BlankLine

 \lIf {$\mathtt{rValid}$ $\neq$ $\mathtt{null} \wedge \mathtt{rValid}$.$\mathtt{Te}$=$\mathtt{winTe}$}
 {$\mathtt{rValid}$ = $\mathtt{null}$}

 \lIf {$\mathtt{sValid}$ $\neq$ $\mathtt{null} \wedge \mathtt{sValid}$.$\mathtt{Te}$=$\mathtt{winTe}$}
 {$\mathtt{sValid}$ = $\mathtt{null}$}
\BlankLine

$\mathtt{prevWinTe}$=$\mathtt{winTe}$\; 
$\mathtt{status}$ = ($\mathtt{rValid, sValid, r, s, currFact, prevWinTe}$)\;
\BlankLine
\Return ($\mathtt{window}, \mathtt{status}$)\;
\caption{LAWA($\mathtt{status}$)}
\label{algo:lawa}
\end{algorithm2e}
\end{center}

In our approach, to produce all lineage-aware temporal windows, we
introduce LAWA, a sweeping algorithm we describe in Algorithm
\ref{algo:lawa}. Traditionally, sweeping algorithms use a vertical
sweepline, and they determine the output tuples based on the input
tuples that intersect with this sweepline \cite{Arge1998, 
PlatovICDE16}.  This works well for TP set intersection. However,
for TP set difference and set union, there are cases when the
interval of an output tuple is not determined only by the tuples
that intersect with the sweepline.  In order to handle such cases,
we use a \textit{sweeping window}.  The left and right boundaries
of the window correspond to the start and end points of a maximal
interval that is associated with a potential output interval.

LAWA processes the tuples of two duplicate-free TP relations
$\mathbf{r}$ and $\mathbf{s}$ with schema ($F$, $\lambda$, $T$, $p$)
that are sorted by their facts and starting points of their
intervals. It produces lineage-aware temporal windows whose left
($\mathtt{winTs}$) and right ($\mathtt{winTe}$) boundaries are
computed during a sweep of the start ($\mathtt{Ts}$) and end
($\mathtt{Te}$) points of the tuples.  The left boundary
$\mathtt{winTs}_i$ of a window $i$ is greater or equal to
$\mathtt{winTe}_{i-1}$ of the previous window. Its right boundary
$\mathtt{winTe}_{i}$ is the smallest among the end points of the
tuples expected to overlap with this window, i.e., tuples with
$\mathtt{Ts} \leq \mathtt{winTs}$ and $\mathtt{Te} >
\mathtt{winTs}$, and the start points of the tuples of the two
relations to be processed next.

The input of LAWA is a structure ($\mathtt{status}$) with the
necessary status information: the right boundary of the last
candidate window ($\mathtt{prevWinTe}$), the fact that is
currently being processed ($\mathtt{currFact}$), the current
tuples of $\mathbf{r}$ ($\mathtt{rValid}$) and $\mathbf{s}$
($\mathtt{sValid}$) that are valid over the sweeping window
$[\mathtt{winTs}, \mathtt{winTe})$, and the next tuples of
relations $\mathbf{r}$ ($\mathtt{r}$) and $\mathbf{s}$
($\mathtt{s}$).  All variables are initialized to $\mathtt{null}$
except for $\mathtt{r}$ and $\mathtt{s}$ that are initialized
to the first tuples of the corresponding relations. The value of
$\mathtt{prevWinTe}$ is initialized to $-1$.

\input{7a_lawa_figure.tex}

Initially, the left boundary $\mathtt{winTs}$ of the new window
is determined, and the cases considered are described in
Fig.~\ref{fig:algoCases}. If at least one tuple is valid
(Fig.~\ref{fig:still_valid}), the new window is adjacent to the
previous one, with $\mathtt{winTs}$ = $\mathtt{prevWinTe}$
(Case 8, Line~\ref{line:adjacent}). Otherwise, $\mathtt{winTs}$, and
potentially $\mathtt{currFact}$, are determined by the new tuples.
Five possible scenarios exist: (a) both relations have been scanned
(Case 1, Line \ref{line:bothNull}), (b) one of the two relations has
already been scanned (Cases 2 and 3, Lines \ref{line:oneNullA}--
\ref{line:oneNullB}), (c) there are available tuples from both
$\mathbf{r}$ and $\mathbf{s}$, but only one includes the same
fact as $\mathtt{currFact}$ (Cases 4 and 5,
Lines\ref{line:diffFactA}--\ref{line:diffFactB}), (d) there are
available tuples from both $\mathbf{r}$ and $\mathbf{s}$ and they
either both include different facts from $\mathtt{currFact}$ or
the same fact as $\mathtt{currFact}$, making two starting points
as candidates for $windTs$ (Cases 6 and 7,
Lines \ref{line:sameFactA}--\ref{line:sameFactB}).

Since the input relations are duplicate-free, i.e., no two tuples of
the same relation can include the same fact and be valid at the same
time point, $\mathtt{rValid}$ and $\mathtt{sValid}$ correspond to
exactly one input tuple each.  If $\mathtt{rValid}$ and
$\mathtt{sValid}$ are not $\mathtt{null}$, they correspond to tuples 
that were also overlapping with the previous window.  Otherwise,
they need to be updated to $\mathtt{r}$ or $\mathtt{s}$ if the
latter include a fact equal to $\mathtt{currFact}$ and have a start
point equal to $\mathtt{winTs}$ (Lines
\ref{line:rsValidA}--\ref{line:rsValidB}). The right boundary
$\mathtt{winTe}$ is updated to the minimum time point among the end
points of $\mathtt{rValid}$ and $\mathtt{sValid}$ and the current
start points of $\mathtt{r}$ and $\mathtt{s}$, i.e., the next tuples
to be processed (Line \ref{line:winTe}).  Here, the tuples
$\mathtt{r}$ and $\mathtt{s}$ must be considered because the start
point of an unprocessed tuple marks a change in the tuples that are
valid over that interval.


After $\lambda_r$ and $\lambda_s$ are extracted from
$\mathtt{rValid}$ and $\mathtt{sValid}$
(Lines \ref{line:lambdaA}--\ref{line:lambdaB}), all the information
for the creation of a lineage-aware temporal window is recorded
(Line \ref{line:window}). $\mathtt{rValid}$ and $\mathtt{sValid}$
are updated for the next call of LAWA based on whether the tuples
they correspond to are still valid outside the window, i.e., when
the end points of these tuples are larger than $\mathtt{winTe}$. 
Finally, LAWA also returns its $\mathtt{status}$, which is used in
the implementation of the actual TP set operations.

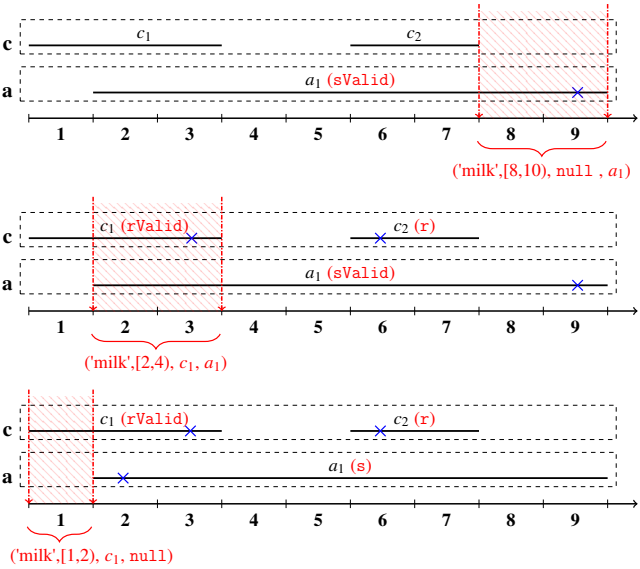
\begin{figure}[!htbp]
\centering
\scalebox{0.57} {
\tikzset{cross/.style={cross out, draw=blue, thick, minimum size=7pt, inner sep=0pt, outer sep=0pt},
cross/.default={1pt}}
\begin{tikzpicture}   

 \pgfmathsetmacro{\xAxisOne}{-0.2}
 \pgfmathsetmacro{\xAxisTwo}{13.7}
 \pgfmathsetmacro{\relationwindowHeight}{1}
\pgfmathsetmacro{\gap}{1.8}


\pgfmathsetmacro{\yAxis}{0}
\pgfmathsetmacro{\yPosA}{\yAxis  + 0.6}
\pgfmathsetmacro{\yPosB}{\yAxis  + 2*0.7 + 0.3}
\pgfmathsetmacro{\yPosC}{\yAxis  + 3*0.7 + 0.3}

 \draw (0,\yAxis) [->, line width = 1pt]
    -- coordinate (x axis mid) (14.2,\yAxis);

\pgfmathsetmacro{\shift}{0.5}
\foreach \j in {1,...,10}{
     \pgfmathsetmacro{\divRes}{int(\j/2)}
     \pgfmathsetmacro{\modRes}{1-(\j-\divRes*2)}
     \pgfmathsetmacro{\xPos}{\j-1+(\divRes-\modRes*\shift)}
     \draw (\xPos , \yAxis + 0.1)  --  (\xPos , \yAxis-0.1) node[anchor=north,font=\relsize{2}] {};
}

\pgfmathsetmacro{\shift}{0.5}
\foreach \j in {1,...,9}{
     \pgfmathsetmacro{\divRes}{int(\j/2)}
     \pgfmathsetmacro{\modRes}{1-(\j-\divRes*2)}
     \pgfmathsetmacro{\xPos}{\j-1+(\divRes-\modRes*\shift)}
     \draw (\xPos+0.75 , \yAxis-0.1) node[anchor=north,font=\relsize{1}] {\bf \j};
}

\draw [line width=1.2]
(0,   \yPosB) -- (4.5, \yPosB)
node[pos=0.6,above=-1pt]
{\large $c_1$\ {\color{red}($\mathtt{rValid}$)}};
\draw (4.47-.7,\yPosB) node[cross] {};
\draw [line width=1.2]
(7.5, \yPosB) -- (10.5,\yPosB)
node[pos=.5,above=-1pt]
{\large $c_2$\ {\color{red} ($\mathtt{r}$)}}; 
\draw (7.5+.7,\yPosB) node[cross] {};

\pgfmathsetmacro{\relationYAxis}{\yPosB}
\node at (\xAxisOne-0.3, \relationYAxis) {\Large $\mathbf{c}$}; 
\pgfmathsetmacro{\relationYAxis}{\relationYAxis-0.2}
\draw [dashed]
(\xAxisOne,\relationYAxis) -- (\xAxisOne,\relationYAxis + \relationwindowHeight - 0.2) -- (\xAxisTwo,\relationYAxis + \relationwindowHeight - 0.2) -- (\xAxisTwo, \relationYAxis + 0)  -- cycle;

\draw [line width=1.2]
(1.5, \yPosA) -- (13.5,\yPosA)
node[pos=.5,above=-1pt]
{\large $a_1$\ {\color{red} ($\mathtt{s}$)}}; 
\draw (1.5+.7,\yPosA) node[cross] {};

\pgfmathsetmacro{\relationYAxis}{\yPosA}
\node at (\xAxisOne-0.3, \relationYAxis) {\Large $\mathbf{a}$}; 
\pgfmathsetmacro{\relationYAxis}{\relationYAxis-0.2}
\draw [dashed]
(\xAxisOne,\relationYAxis) -- (\xAxisOne,\relationYAxis + \relationwindowHeight - 0.2) -- (\xAxisTwo,\relationYAxis + \relationwindowHeight - 0.2) -- (\xAxisTwo, \relationYAxis + 0)  -- cycle;


\pgfmathsetmacro{\winTs}{0}
\pgfmathsetmacro{\winTe}{1.5}
\pgfmathsetmacro{\windowHeight}{2.7}

\draw [<-,line width=1,densely dashdotted,red]
(\winTs, \yAxis) -- (\winTs,\yAxis+\windowHeight)
node[pos=1.2]
{};

\draw [<-,line width=1,densely dashdotted,red]
(\winTe, \yAxis) -- (\winTe,\yAxis +\windowHeight)
node[pos=1.2]
{};

\draw [ultra thick, draw=none, pattern color=red, pattern=north west lines, opacity=0.4]
(\winTs,\yAxis) -- (\winTs,\yAxis + \windowHeight - 0.2) -- (\winTe,\yAxis + \windowHeight - 0.2) -- (\winTe, \yAxis + 0)  -- cycle;

\draw [thick,red,decorate,
decoration={brace,amplitude=10pt,mirror},
xshift=-1pt,yshift=-1pt]
(\winTs,\yAxis-0.5) -- (\winTe,\yAxis-0.5)
node [red,pos=1,yshift=-0.7cm] 
{\large (\bsq{milk},[1,2), $c_1$, $\mathtt{null}$)};


\pgfmathsetmacro{\yAxis}{\yAxis +\windowHeight + \gap}
\pgfmathsetmacro{\yPosA}{\yAxis  + 0.6}
\pgfmathsetmacro{\yPosB}{\yAxis  + 2*0.7 + 0.3}
\pgfmathsetmacro{\yPosC}{\yAxis  + 3*0.7 + 0.3}

 \draw (0,\yAxis) [->, line width = 1pt]
    -- coordinate (x axis mid) (14.2,\yAxis);

\pgfmathsetmacro{\shift}{0.5}
\foreach \j in {1,...,10}{
     \pgfmathsetmacro{\divRes}{int(\j/2)}
     \pgfmathsetmacro{\modRes}{1-(\j-\divRes*2)}
     \pgfmathsetmacro{\xPos}{\j-1+(\divRes-\modRes*\shift)}
     \draw (\xPos , \yAxis + 0.1)  --  (\xPos , \yAxis-0.1) node[anchor=north,font=\relsize{2}] {};
}

\pgfmathsetmacro{\shift}{0.5}
\foreach \j in {1,...,9}{
     \pgfmathsetmacro{\divRes}{int(\j/2)}
     \pgfmathsetmacro{\modRes}{1-(\j-\divRes*2)}
     \pgfmathsetmacro{\xPos}{\j-1+(\divRes-\modRes*\shift)}
     \draw (\xPos+0.75 , \yAxis-0.1) node[anchor=north,font=\relsize{1}] {\bf \j};
}

\draw [line width=1.2]
(0,   \yPosB) -- (4.5, \yPosB)
node[pos=0.6,above=-1pt]
{\large $c_1$\ {\color{red}($\mathtt{rValid}$)}};
\draw (4.5-.7,\yPosB) node[cross] {};

\draw [line width=1.2]
(7.5, \yPosB) -- (10.5,\yPosB)
node[pos=.5,above=-1pt]
{\large $c_2$\ {\color{red} ($\mathtt{r}$)}}; 
\draw (7.5+.7,\yPosB) node[cross] {};

\pgfmathsetmacro{\relationYAxis}{\yPosB}
\node at (\xAxisOne-0.3, \relationYAxis) {\Large $\mathbf{c}$}; 
\pgfmathsetmacro{\relationYAxis}{\relationYAxis-0.2}
\draw [dashed]
(\xAxisOne,\relationYAxis) -- (\xAxisOne,\relationYAxis + \relationwindowHeight - 0.2) -- (\xAxisTwo,\relationYAxis + \relationwindowHeight - 0.2) -- (\xAxisTwo, \relationYAxis + 0)  -- cycle;

\draw [line width=1.2]
(1.5, \yPosA) -- (13.5,\yPosA)
node[pos=.5,above=-1pt]
{\large $a_1$\ {\color{red} ($\mathtt{sValid}$)}}; 
\draw (13.5-.7,\yPosA) node[cross] {};

\pgfmathsetmacro{\relationYAxis}{\yPosA}
\node at (\xAxisOne-0.3, \relationYAxis) {\Large $\mathbf{a}$}; 
\pgfmathsetmacro{\relationYAxis}{\relationYAxis-0.2}
\draw [dashed]
(\xAxisOne,\relationYAxis) -- (\xAxisOne,\relationYAxis + \relationwindowHeight - 0.2) -- (\xAxisTwo,\relationYAxis + \relationwindowHeight - 0.2) -- (\xAxisTwo, \relationYAxis + 0)  -- cycle;


\pgfmathsetmacro{\winTs}{1.5}
\pgfmathsetmacro{\winTe}{4.5}

\draw [<-,line width=1,densely dashdotted,red]
(\winTs, \yAxis) -- (\winTs,\yAxis+\windowHeight)
node[pos=1.2]
{};

\draw [<-,line width=1,densely dashdotted,red]
(\winTe, \yAxis) -- (\winTe,\yAxis +\windowHeight)
node[pos=1.2]
{};

\draw [ultra thick, draw=none, pattern color=red, pattern=north west lines, opacity=0.4]
(\winTs,\yAxis) -- (\winTs,\yAxis + \windowHeight - 0.2) -- (\winTe,\yAxis + \windowHeight - 0.2) -- (\winTe, \yAxis + 0)  -- cycle;

\draw [thick,red,decorate,
decoration={brace,amplitude=10pt,mirror},
xshift=-1pt,yshift=-1pt]
(\winTs,\yAxis-0.5) -- (\winTe,\yAxis-0.5)
node [red,midway,yshift=-0.7cm] 
{\large (\bsq{milk},[2,4), $c_1$, $a_1$)};


\pgfmathsetmacro{\yAxis}{\yAxis +\windowHeight + \gap}
\pgfmathsetmacro{\yPosA}{\yAxis  + 0.6}
\pgfmathsetmacro{\yPosB}{\yAxis  + 2*0.7 + 0.3}
\pgfmathsetmacro{\yPosC}{\yAxis  + 3*0.7 + 0.3}

 \draw (0,\yAxis) [->, line width = 1pt]
    -- coordinate (x axis mid) (14.2,\yAxis);

\pgfmathsetmacro{\shift}{0.5}
\foreach \j in {1,...,10}{
     \pgfmathsetmacro{\divRes}{int(\j/2)}
     \pgfmathsetmacro{\modRes}{1-(\j-\divRes*2)}
     \pgfmathsetmacro{\xPos}{\j-1+(\divRes-\modRes*\shift)}
     \draw (\xPos , \yAxis + 0.1)  --  (\xPos , \yAxis-0.1) node[anchor=north,font=\relsize{2}] {};
}

\pgfmathsetmacro{\shift}{0.5}
\foreach \j in {1,...,9}{
     \pgfmathsetmacro{\divRes}{int(\j/2)}
     \pgfmathsetmacro{\modRes}{1-(\j-\divRes*2)}
     \pgfmathsetmacro{\xPos}{\j-1+(\divRes-\modRes*\shift)}
     \draw (\xPos+0.75 , \yAxis-0.1) node[anchor=north,font=\relsize{1}] {\bf \j};
}

\draw [line width=1.2]
(0,   \yPosB) -- (4.5, \yPosB)
node[pos=0.6,above=-1pt]
{\large $c_1$};

\draw [line width=1.2]
(7.5, \yPosB) -- (10.5,\yPosB)
node[pos=.5,above=-1pt]
{\large $c_2$}; 

\pgfmathsetmacro{\relationYAxis}{\yPosB}
\node at (\xAxisOne-0.3, \relationYAxis) {\Large $\mathbf{c}$}; 
\pgfmathsetmacro{\relationYAxis}{\relationYAxis-0.2}
\draw [dashed]
(\xAxisOne,\relationYAxis) -- (\xAxisOne,\relationYAxis + \relationwindowHeight - 0.2) -- (\xAxisTwo,\relationYAxis + \relationwindowHeight - 0.2) -- (\xAxisTwo, \relationYAxis + 0)  -- cycle;

\draw [line width=1.2]
(1.5, \yPosA) -- (13.5,\yPosA)
node[pos=.5,above=-1pt]
{\large $a_1$\ {\color{red} ($\mathtt{sValid}$)}}; 
\draw (13.5-.7,\yPosA) node[cross] {};

\pgfmathsetmacro{\relationYAxis}{\yPosA}
\node at (\xAxisOne-0.3, \relationYAxis) {\Large $\mathbf{a}$}; 
\pgfmathsetmacro{\relationYAxis}{\relationYAxis-0.2}
\draw [dashed]
(\xAxisOne,\relationYAxis) -- (\xAxisOne,\relationYAxis + \relationwindowHeight - 0.2) -- (\xAxisTwo,\relationYAxis + \relationwindowHeight - 0.2) -- (\xAxisTwo, \relationYAxis + 0)  -- cycle;


\pgfmathsetmacro{\winTs}{10.5}
\pgfmathsetmacro{\winTe}{13.5}

\draw [<-,line width=1,densely dashdotted,red]
(\winTs, \yAxis) -- (\winTs,\yAxis+\windowHeight)
node[pos=1.2]
{};

\draw [<-,line width=1,densely dashdotted,red]
(\winTe, \yAxis) -- (\winTe,\yAxis +\windowHeight)
node[pos=1.2]
{};

\draw [ultra thick, draw=none, pattern color=red, pattern=north west lines, opacity=0.4]
(\winTs,\yAxis) -- (\winTs,\yAxis + \windowHeight - 0.2) -- (\winTe,\yAxis + \windowHeight - 0.2) -- (\winTe, \yAxis + 0)  -- cycle;

\draw [thick,red,decorate,
decoration={brace,amplitude=10pt,mirror},
xshift=-1pt,yshift=-1pt]
(\winTs,\yAxis-0.5) -- (\winTe,\yAxis-0.5)
node [red,midway,yshift=-0.7cm] 
{\large (\bsq{milk},[8,10), $\mathtt{null}$ , $a_1$)};
\end{tikzpicture}}
\caption{Three calls of LAWA for the input relations $\mathbf{c}$ and $\mathbf{a}$.}
\label{fig:algoRun}
\end{figure}

\begin{example}
In Fig.~\ref{fig:algoRun}, we illustrate three calls of LAWA with
the left and right relations being $\mathbf{c}$ and $\mathbf{a}$ of
Fig.\ref{fig:tpdbRelations}, respectively. Before the first call,
the input relations have been sorted by their facts and start
points. The time points used to determine the right boundary of a
window are annotated with a blue cross. In the first call of LAWA,
illustrated at the bottom, the left and right boundary of the window
are set to $\mathtt{winTs} = 1$ and $\mathtt{winTe} = 2$,
respectively. After $\mathtt{winTs}$ is determined, the only tuple
valid is $\mathtt{rValid} = c_1$. Thus, given that there is no valid
tuple in $\mathbf{a}$ yet, $\mathtt{winTe}$ is set to the start
point of $a_1$, i.e., the next tuple of $\mathbf{a}$ to be
processed. This time point is smaller than the end point
$\mathtt{Te} = 4$ of $\mathtt{rValid}$ or the start point
$\mathtt{Ts} = 6$ of the upcoming tuple of $\mathbf{c}$ ($c_2$).
In the second call of LAWA, illustrated in the middle, the left
boundary of the next window to be examined is equal to the right
boundary of the previous window, i.e., $\mathtt{winTs} = 2$, given
that the fact (\bsq{milk}) is still being processed. The tuples
valid after time point $t=2$ are $\mathtt{rValid} = c_1$ and
$\mathtt{sValid} = a_1$.  The right boundary of the window is the
minimum of $\mathtt{rValid}.\mathtt{Te} = 4$,
$\mathtt{sValid}.\mathtt{Te} = 10$ and $c_2.\mathtt{Ts} = 6$, and 
thus $\mathtt{winTe} = 4$. A similar pattern goes on until the last
call of LAWA, illustrated on the top of Fig.~\ref{fig:algoRun},
where $\mathtt{winTs} = 8$ and $\mathtt{winTe} = 10$. Then,
$\mathtt{rValid}$ and $\mathtt{sValid}$ are set to $\mathtt{null}$
and no further windows are produced.
\end{example}

\section{Basic TP Set Algorithms}
\label{sec:basicIUE}
In this section, we implement all TP-set operations by exploiting
the flexibility of {\em lineage-aware temporal windows} that enable
finalizing output lineages  and filtering out output intervals when
they are produced, thus avoiding redundant computations that occur
when these two steps are decoupled~\cite{DyllaMT13, DignosTODS16}. 
Based on Theorem~\ref{thm:reduction}, we reduce the implementation
of TP set operations into a four-step process
(Fig.~\ref{fig:algorithmSketch}). The sorting step is
a prerequisite for the creation of windows using LAWA. When a
window is created, a lineage-based filter ($\lambda_{filter}$) is
directly applied. The $\lambda_{filter}$ is different for each TP set
operation.  In contrast to previous works of either temporal or
probabilistic set operations, this step involves no application of
additional algebraic operations, no tuple replication and no redundant
interval comparisons. After the filtering step, the final lineage
expression of an output tuple is created by applying the
lineage-concatenating function ($\lambda_{function}$) of the respective
TP set operation (Def.~\ref{def:tpSetOpDef}) on $\mathtt{\lambda_r}$
and $\mathtt{\lambda_s}$.

\begin{figure}[!htbp]
\centering
\scalebox{0.8}{
\tikzstyle{block} = [draw, rectangle, 
    minimum height=3em, minimum width=3em]

\tikzstyle{input} = [coordinate]
\tikzstyle{output} = [coordinate]

\begin{tikzpicture}[node distance=2.2cm]

    \node [input] at (0,0) (rs)  {};
    \node [block, right = 1.5cm of rs, text width=1cm, align=center] (sort)
    {sort};
    \node [block, right of =sort, text width=1cm, align=center] (allInfo)
    {LAWA};

    \node [block, right of=allInfo,  text width=1cm, align=center] (filter) {$\lambda_{filter}$};
    \node [block, right of=filter,  text width=1.2cm, align=center] (lfunction) {$\lambda_{function}$};

    \node [output, right=1cm of lfunction] (output) {};

    \draw [->] (rs) -- node[above=0.05cm] {$\mathbf{r},\mathbf{s}, op$} (sort.west);
    \draw [->] (sort.east) -- node {} (allInfo.west);
    \draw [->] (allInfo.east) -- node {} (filter.west);
    \draw [->] (filter) -- node {} (lfunction);
    \draw [->] (lfunction) -- (output);
    
	\node [left = 0.3cm of lfunction] (y) {};
	\node [right = 0.3cm of lfunction] (y3) {};

	\node [right = 0.3cm of sort] (y2) {};
	\node [below = 0.6cm of y] (y1) {};

    \draw [->] (y.center) -- (y1.center);
    \draw [->] (y3.center) |- (y1.center);
    \draw [->] (y1.center) -| (y2.center);
\end{tikzpicture}}
\caption{Process overview.}
\label{fig:algorithmSketch}
\end{figure}
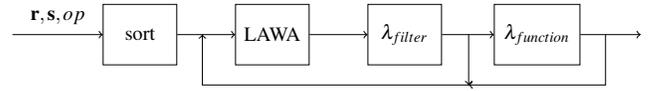

The algorithms \emph{Intersect}($\mathbf{r}$, $\mathbf{s}$),
\emph{Union}($\mathbf{r}$, $\mathbf{s}$) and \emph{Except}($\mathbf{r}$,
$\mathbf{s}$) correspond to $\mathbf{r}\cap^\kat{Tp}\mathbf{s}$,
$\mathbf{r}\cup^\kat{Tp}\mathbf{s}$ and $\mathbf{r}-^\kat{Tp}\mathbf{s}$, 
respectively. In all algorithms, input relations are initially sorted based 
on their facts $F$ and start points $\mathtt{Ts}$ (Line~\ref{line:sort})
when the status of LAWA is initialized. As long as the terminating condition
(Line~\ref{line:terminating}) is satisfied, LAWA passes through all start
and end points in a smaller-to-larger fashion and produces candidate
windows (Line~\ref{line:candidates}). The windows produced by LAWA are
filtered based on the lineages of the tuples that are valid during the 
interval it covers (Line~\ref{line:filtering}). The filter used for each
operation, as well as the terminating condition and the
lineage-concatenating function, directly stem from the definitions of the
operation. For example, in the case of set difference
$\mathbf{r}-^\kat{Tp}\mathbf{s}$,  windows are produced as long as
there are tuples in the outer relation (i.e., while $\mathtt{r}
\neq \mathtt{null}$). The interval of a lineage-aware temporal window
corresponds to an output tuple only if there is a tuple of the outer 
relation that is valid over $[\mathtt{winTs}, \mathtt{winTe})$
(i.e., when $\mathtt{\lambda_r} \neq \mathtt{null}$).

For \emph{Union}($\mathbf{r}$, $\mathbf{s}$) and
\emph{Except}($\mathbf{r}$, $\mathbf{s}$), when the while-loop
terminates, there might still be one more window, corresponding to the
subinterval of the last valid tuple of $\mathbf{r}$
($\mathtt{rValid}$) or the last valid tuple of $\mathbf{s}$
($\mathtt{sValid}$). Thus, LAWA is called one more time
(Line~\ref{line:extraCall}).

{\setlength{\algomargin}{0pt} 
\begin{algorithm2e}[!htbp]\small 
  \emph{sort}($\mathbf{r}$\{$F,\mathtt{Ts}$\});  \label{line:sort}
  \emph{sort}($\mathbf{s}$\{$F,\mathtt{Ts}$\})\;
  $\mathtt{status} = (\mathtt{-1,null,null,null,fetchRow(\mathbf{r}),fetchRow(\mathbf{s})})$\;
  \While{$\mathtt{status.r} \neq \mathtt{null} \wedge \mathtt{status.s} \neq \mathtt{null}$}{
    \label{line:terminating}
    $(\mathtt{w}, \mathtt{status})$  = LAWA($\mathtt{status}$)\;  \label{line:candidates}
    \If{$\mathtt{w.\lambda_r} \neq \mathtt{null}$ $\wedge$
      $\mathtt{w.\lambda_s} \neq \mathtt{null}$} {
      \label{line:filtering}
        $\mathtt{o}$ =
        $\mathtt{o}$ $\cup$
        \{($F$, \textit{\textbf{and}}($\mathtt{w.\lambda_r}$, $\mathtt{w.\lambda_s}$),
                                     [$\mathtt{w.winTs}$, $\mathtt{w.winTe}$))\}\;
    }
  }
  \Return $\mathtt{o}$\;
  \caption{\emph{Intersect}($\mathbf{r}$, $\mathbf{s}$)}
  \label{fig:intersect}
\end{algorithm2e}
}


{\setlength{\algomargin}{0pt} 
\begin{algorithm2e}[!htbp] \small 
  \emph{sort}($\mathbf{r}$\{$F,\mathtt{Ts}$\});
  \emph{sort}($\mathbf{s}$\{$F,\mathtt{Ts}$\})\;
  $\mathtt{status} = (\mathtt{-1,null,null,null,fetchRow(\mathbf{r}),fetchRow(\mathbf{s})})$\;
  \While{$\mathtt{status.r} \neq \mathtt{null} \vee \mathtt{status.s} \neq \mathtt{null}$}{
    $(\mathtt{w, status})$  = LAWA($\mathtt{status}$)\;
    \If{$\mathtt{w.\lambda_r} \neq \mathtt{null}$ $\vee$
        $\mathtt{w.\lambda_s} \neq \mathtt{null}$} {
          \hspace*{-0.3cm}
          $\mathtt{o}$ = $\mathtt{o}$ $\cup$ \{($\mathtt{w}.F$,
          \textit{\textbf{or}}($\mathtt{w.\lambda_r}$, $\mathtt{w.\lambda_s}$),
          [$\mathtt{w.winTs}$, $\mathtt{w.winTe}$))\}\;
  	}
  }
  \If{$\mathtt{status.rValid} \neq \mathtt{null} \vee
       \mathtt{status.sValid} \neq \mathtt{null}$}{
     $(\mathtt{w, status})$  = LAWA($\mathtt{status}$)\;  \label{line:extraCall}
     $\mathtt{o}$ = $\mathtt{o}$ $\cup$ \{($\mathtt{w}.F$,
     \textit{\textbf{or}}($\mathtt{w.\lambda_r}$, $\mathtt{w.\lambda_s}$),
       [$\mathtt{w.winTs}$, $\mathtt{w.winTe}$))\}\;
  }
  \Return $\mathtt{o}$\;
  \caption{\emph{Union}($\mathbf{r}$, $\mathbf{s}$)}
  \label{fig:union}
\end{algorithm2e}
}


{\setlength{\algomargin}{0pt} 
\begin{algorithm2e}[!htbp]
\small 
\emph{sort}($\mathbf{r}$\{$F,\mathtt{Ts}$\});
\emph{sort}($\mathbf{s}$\{$F,\mathtt{Ts}$\})\;
$\mathtt{status} = (\mathtt{-1, null, null, null, fetchRow(\mathbf{r}), fetchRow(\mathbf{s})})$\;
\While{$\mathtt{status.r} \neq \mathtt{null}$}{
	$(\mathtt{w}, \mathtt{status})$  = LAWA($\mathtt{status}$)\;
	\If{$\mathtt{w.\lambda_r} \neq \mathtt{null}$} {
          \hspace*{-0.3cm}
          $\mathtt{o}$ = $\mathtt{o}$ $\cup$ \{($\mathtt{w.F}$,
          \textit{\textbf{andNot}}($\mathtt{w.\lambda_r}$,$\mathtt{w.\lambda_s}$),
          [$\mathtt{w.winTs}$, $\mathtt{w.winTe}$))\}\;
  	 }
}
\If{$\mathtt{status.rValid} \neq \mathtt{null}$}{
	$(\mathtt{w}, \mathtt{status})$  = LAWA($\mathtt{status}$)\;  
		 $\mathtt{o}$ = $\mathtt{o}$ $\cup$ \{($\mathtt{w.F}$, \textit{\textbf{andNot}}($\mathtt{w.\lambda_r}$, $\mathtt{w.\lambda_s}$), [$\mathtt{w.winTs}$, $\mathtt{w.winTe}$))\}\;
}
\Return $\mathtt{o}$\;
\caption{\emph{Except}($\mathbf{r}$, $\mathbf{s}$)}
\label{fig:except}
\end{algorithm2e}

\begin{example}
  In Fig.~\ref{fig:processSetDif}, we illustrate the computation of
  set difference
  $\sigma_\kat{F = \bsq{milk}}(\mathbf{c}) -^\kat{TP} \sigma_\kat{F =
    \bsq{milk}} (\mathbf{a})$ for relations $\mathbf{c}$ and
  $\mathbf{a}$ in Fig.~\ref{fig:tpdbRelations}.  The first candidate
  window $[1,2)$ has $\mathtt{\lambda_s} = \mathtt{null}$ and
  $\mathtt{\lambda_r} = c_1$.  For set difference the current window
  yields a result tuple, since, over interval $[1,2)$, the fact
  (\bsq{milk}) is included in a tuple of the left input relation
  $\mathbf{c}$ with lineage $\mathtt{\lambda_s} = c_1$.  In contrast,
  the candidate (\bsq{milk}, $[4,6)$, $\mathtt{null}$, $a_1$) is
  rejected since (\bsq{milk}) is not included in a tuple of the left
  input relation $\mathbf{c}$ over $[4,6)$.
\end{example}

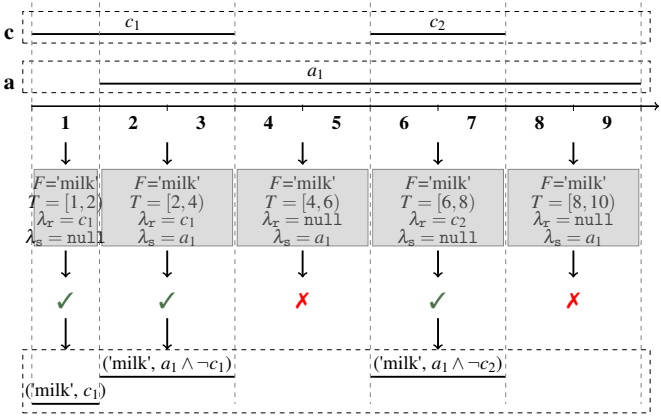
\begin{figure}[!htbp]
\scalebox{0.6} {
\begin{tikzpicture}   
   \draw (0,0) [->, line width = 1pt]-- coordinate (x axis mid) (13.9,0);
  
\pgfmathsetmacro{\shift}{0.5}
\foreach \j in {1,...,10}{
     \pgfmathsetmacro{\divRes}{int(\j/2)}
     \pgfmathsetmacro{\modRes}{1-(\j-\divRes*2)}
     \pgfmathsetmacro{\xPos}{\j-1+(\divRes-\modRes*\shift)}
     \draw (\xPos , 1pt)  --  (\xPos ,-3pt) node[anchor=north,font=\relsize{2}] {};
}

\pgfmathsetmacro{\shift}{0.5}
\foreach \j in {1,...,9}{
     \pgfmathsetmacro{\divRes}{int(\j/2)}
     \pgfmathsetmacro{\modRes}{1-(\j-\divRes*2)}
     \pgfmathsetmacro{\xPos}{\j-1+(\divRes-\modRes*\shift)}
     \draw (\xPos+0.75 , -3pt) node[anchor=north,font=\relsize{1}] {\bf \j};
}

 \pgfmathsetmacro{\xAxisOne}{-0.2}
 \pgfmathsetmacro{\xAxisTwo}{13.7}
 \pgfmathsetmacro{\windowHeight}{0.9}
 
    \pgfmathsetmacro{\yAxis}{0.5}
 \draw [line width=1.2] (1.5, \yAxis) -- (13.5,\yAxis) node[pos=.4,above=-1pt]{\large $a_1$};
  \node at (\xAxisOne-0.3, \yAxis) {\Large $\mathbf{a}$}; 
    \pgfmathsetmacro{\yAxis}{\yAxis-0.2}
 \draw [dashed]
(\xAxisOne,\yAxis) -- (\xAxisOne,\yAxis + \windowHeight - 0.2) -- (\xAxisTwo,\yAxis + \windowHeight - 0.2) -- (\xAxisTwo, \yAxis + 0)  -- cycle;

   \pgfmathsetmacro{\yAxis}{1.6}
 \draw [line width=1.2] (0,\yAxis) -- (4.5, \yAxis) node[pos=0.5,above=-1pt]{\large $c_1$};
 \draw [line width=1.2] (7.5, \yAxis) -- (10.5,\yAxis) node[pos=.5,above=-1pt]{\large $c_2$}; 
  \node at (\xAxisOne-0.3, \yAxis) {\Large $\mathbf{c}$}; 
  \pgfmathsetmacro{\yAxis}{\yAxis-0.2}
 \draw [dashed]
(\xAxisOne,\yAxis) -- (\xAxisOne,\yAxis + \windowHeight - 0.2) -- (\xAxisTwo,\yAxis + \windowHeight - 0.2) -- (\xAxisTwo, \yAxis + 0)  -- cycle;
 
    \pgfmathsetmacro{\yPosA}{-0.8}
    \pgfmathsetmacro{\yPosB}{\yPosA-0.5}
    \draw [line width=1.2, ->]   (0.75, \yPosA) --   (0.75,\yPosB) node[pos=.5,above=-1pt]{}; 
	\draw [line width=1.2, ->]      (3, \yPosA) --      (3,\yPosB) node[pos=.5,above=-1pt]{}; 
	\draw [line width=1.2, ->]      (6, \yPosA) --      (6,\yPosB) node[pos=.5,above=-1pt]{}; 
	\draw [line width=1.2, ->]      (9, \yPosA) --      (9,\yPosB) node[pos=.5,above=-1pt]{}; 
	\draw [line width=1.2, ->]     (12, \yPosA) --     (12,\yPosB) node[pos=.5,above=-1pt]{}; 
    
    \pgfmathsetmacro{\yPos}{\yPosB-0.4}
    \node at (0.75, \yPos) {\large $F$=\bsq{milk}}; 
    \node at (0.75, \yPos-0.45) {\large $T = [1,2)$}; 
    \node at (0.75, \yPos-0.8) {\large $\mathtt{\lambda_r} = c_1 $}; 
    \node at (0.75, \yPos-1.2) {\large $\mathtt{\lambda_s} = \mathtt{null} $}; 
    \node at (3, \yPos) {\large $F$=\bsq{milk}}; 
    \node at (3, \yPos-0.45)  {\large $T = [2,4)$}; 
    \node at (3, \yPos-0.8) {\large $\mathtt{\lambda_r} = c_1$}; 
    \node at (3, \yPos-1.2) {\large $\mathtt{\lambda_s} = a_1 $}; 
    \node at (6, \yPos) {\large $F$=\bsq{milk}}; 
    \node at (6, \yPos-0.45) {\large $T = [4,6)$}; 
    \node at (6, \yPos-0.8) {\large $\mathtt{\lambda_r} = \mathtt{null}$}; 
    \node at (6, \yPos-1.2) {\large $\mathtt{\lambda_s} = a_1 $}; 
    \node at (9, \yPos) {\large $F$=\bsq{milk}};    
    \node at (9, \yPos-0.45) {\large $T = [6,8)$}; 
    \node at (9, \yPos-0.8)  {\large $\mathtt{\lambda_r} = c_2 $}; 
    \node at (9, \yPos-1.2) {\large $\mathtt{\lambda_s} = \mathtt{null} $};     
    \node at (12, \yPos) {\large $F$=\bsq{milk}}; 
    \node at (12, \yPos-0.45)  {\large $T = [8,10)$}; 
     \node at (12, \yPos-0.8) {\large $\mathtt{\lambda_r} = \mathtt{null}$}; 
     \node at (12, \yPos-1.2) {\large $\mathtt{\lambda_s} = a_1 $};     
     
    \pgfmathsetmacro{\yPosA}{\yPos-1.2-0.3}
    \pgfmathsetmacro{\yPosB}{\yPosA-0.6}
    \draw [line width=1.2, ->]   (0.75, \yPosA) --   (0.75,\yPosB) node[pos=.5,above=-1pt]{}; 
	\draw [line width=1.2, ->]      (3, \yPosA) --      (3,\yPosB) node[pos=.5,above=-1pt]{}; 
	\draw [line width=1.2, ->]      (6, \yPosA) --      (6,\yPosB) node[pos=.5,above=-1pt]{}; 
	\draw [line width=1.2, ->]      (9, \yPosA) --      (9,\yPosB) node[pos=.5,above=-1pt]{}; 
	\draw [line width=1.2, ->]     (12, \yPosA) --     (12,\yPosB) node[pos=.5,above=-1pt]{};

  	\draw [fill=grey, opacity = 0.4] (0+0.05,\yPos+0.3) -- (0+0.05,\yPosA+0.1) --
  	(1.5-0.05,\yPosA+0.1) -- (1.5-0.05, \yPos + 0.3)  -- cycle;
  	\draw [fill=grey, opacity = 0.4] (1.5+0.05,\yPos+0.3) -- (1.5+0.05,\yPosA+0.1) --
  	(4.5-0.05,\yPosA+0.1) -- (4.5-0.05, \yPos + 0.3)  -- cycle;
  	\draw [fill=grey, opacity = 0.4] (4.5+0.05,\yPos+0.3) -- (4.5+0.05,\yPosA+0.1) --
  	(7.5-0.05,\yPosA+0.1) -- (7.5-0.05, \yPos + 0.3)  -- cycle;
  	\draw [fill=grey, opacity = 0.4] (7.5+0.05,\yPos+0.3) -- (7.5+0.05,\yPosA+0.1) --
  	(10.5-0.05,\yPosA+0.1) -- (10.5-0.05, \yPos + 0.3)  -- cycle;
  	\draw [fill=grey, opacity = 0.4] (10.5+0.05,\yPos+0.3) -- (10.5+0.05,\yPosA+0.1) --
  	(13.5-0.05,\yPosA+0.1) -- (13.5-0.05, \yPos + 0.3)  -- cycle;

    \pgfmathsetmacro{\yPos}{\yPosB-0.5}
    \node at (0.75, \yPos) [font=\relsize{2}] {\color{c1}{\cmark}}; 
    \node at (3, \yPos) [font=\relsize{2}]  {\color{c1}{\cmark}}; 
    \node at (6, \yPos) [font=\relsize{2}] {\color{red}{\xmark}}; 
     \node at (9, \yPos) [font=\relsize{2}] {\color{c1}{\cmark}}; 
    \node at (12, \yPos) [font=\relsize{2}] {\color{red}{\xmark}}; 
    
    \pgfmathsetmacro{\yPosA}{\yPos-0.4}
    \pgfmathsetmacro{\yPosB}{\yPosA-0.7}
    \draw [line width=1.2, ->]   (0.75, \yPosA) --   (0.75,\yPosB) node[pos=.5,above=-1pt]{}; 
	\draw [line width=1.2, ->]      (3, \yPosA) --      (3,\yPosB) node[pos=.5,above=-1pt]{}; 
	\draw [line width=1.2, ->]      (9, \yPosA) --      (9,\yPosB) node[pos=.5,above=-1pt]{}; 

    \pgfmathsetmacro{\yPos}{\yPosB-0.6}
    \draw [line width=1.2]       (0,\yPos-0.6) -- (1.5,\yPos-0.6) node[pos=.5,above=-1pt]{\large (\bsq{milk}, $c_1$)}; 
    \draw [line width=1.2] (1.5, \yPos) -- (4.5,\yPos) node[pos=.5,above=-1pt]{\large (\bsq{milk}, $a_1 \land \neg c_1$)}; 
	\draw [line width=1.2] (7.5, \yPos) -- (10.5,\yPos) node[pos=.5,above=-1pt]{\large (\bsq{milk}, $a_1 \land \neg c_2$)}; 

   \pgfmathsetmacro{\yAxis}{2.3}
   \pgfmathsetmacro{\yPos}{\yPos-0.6}
   \draw[color=gray,dashed] (0,\yAxis) -- (0,\yPos);
   \draw[color=gray,dashed] (1.5,\yAxis)   -- (1.5,\yPos);
   \draw[color=gray,dashed] (4.5,\yAxis)   -- (4.5,\yPos);
   \draw[color=gray,dashed] (7.5,\yAxis)   -- (7.5,\yPos);
   \draw[color=gray,dashed] (10.5,\yAxis) -- (10.5,\yPos);
    \draw[color=gray,dashed] (13.5,\yAxis) -- (13.5,\yPos);

       \pgfmathsetmacro{\yAxis}{\yPos-0.2}
        \pgfmathsetmacro{\windowHeight}{1.6}
 \draw [dashed]
(\xAxisOne,\yAxis) -- (\xAxisOne,\yAxis + \windowHeight - 0.2) -- (\xAxisTwo,\yAxis + \windowHeight - 0.2) -- (\xAxisTwo, \yAxis + 0)  -- cycle;

   
\end{tikzpicture}}
\caption{$\sigma_\kat{ F = \bsq{milk}}(\mathbf{c}) -^\kat{TP}
  \sigma_\kat{ F = \bsq{milk}}(\mathbf{a})$}
\label{fig:processSetDif}
\end{figure}

\smallskip
\noindent{\bf Time and Space Complexity:} The time complexity of all TP
set operations is determined by the complexity of the blocks presented
in Fig.~\ref{fig:algorithmSketch}. Sorting has complexity
$O(|\mathbf{r}| \log | \mathbf{r}| + |\mathbf{s}| \log |\mathbf{s}|)$
if it is comparison-based. A variant of counting-based sorting could
also be used~\cite{KaufmannTI} (which is the case if $\Omega^T$ fits
into main-memory), and in this case the corresponding complexity is even
linear. After sorting, LAWA will sweep over all tuples in the
sorted input relations $\mathbf{r}$ and $\mathbf{s}$, accessing two
input tuples at a time to determine the next window.

\begin{proposition}
  \label{prop:windowNumber} {\em Let $\mathbf{r}$, $\mathbf{s}$ be two
    duplicate-free temporal-probabilistic relations.  The upper bound
    of the number of windows produced by the window advancer is
    $n_r + n_s - f_d$ where $n_r$, $n_s$ are the number of start and
    end points in $\mathbf{r}$ and $\mathbf{s}$, and $f_d$ is number
    of distinct facts in these relations.}
\end{proposition}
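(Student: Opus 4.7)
The plan is to bound the windows fact-by-fact. Since LAWA processes the sorted inputs so that windows sharing the same fact form a contiguous run, writing $W(f)$ for the number of windows whose $F$-attribute equals $f$, the total count is $\sum_{f}W(f)$, summed over the $f_d$ distinct facts occurring in $\mathbf{r}\cup\mathbf{s}$. It therefore suffices to establish the per-fact inequality $W(f)\le n_{r,f}+n_{s,f}-1$, where $n_{r,f}$ and $n_{s,f}$ denote the numbers of endpoints of tuples with fact $f$ in $\mathbf{r}$ and $\mathbf{s}$, respectively; summing over $f$ and using $\sum_f n_{r,f}=n_r$ and $\sum_f n_{s,f}=n_s$ will then give exactly $n_r+n_s-f_d$.

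For the per-fact bound I plan to use an endpoint-partition argument. Duplicate-freeness forces tuples of the same relation sharing fact $f$ to have pairwise disjoint intervals, so the multiset of $n_{r,f}+n_{s,f}$ endpoints partitions the time axis into at most $n_{r,f}+n_{s,f}-1$ bounded sub-intervals. In the interior of each sub-interval no tuple starts or ends, hence the set of valid tuples with fact $f$ is constant there. I will then show that every window LAWA emits for fact $f$ is contained in exactly one such sub-interval: both $\mathtt{winTs}$ (Lines~\ref{line:newFactA}--\ref{line:adjacent}) and $\mathtt{winTe}$ (Line~\ref{line:winTe}) are always selected from this endpoint set, and $\mathtt{prevWinTe}$ advances monotonically, so successive windows for $f$ are non-overlapping and use pairs of consecutive endpoints as boundaries. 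This gives an injection from windows into sub-intervals, yielding $W(f)\le n_{r,f}+n_{s,f}-1$.

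The main obstacle is the case analysis of Algorithm~\ref{algo:lawa} underlying this injection. In Cases~2--7 (Lines~\ref{line:oneNullA}--\ref{line:sameFactB}) several candidates compete for $\mathtt{winTs}$, so one must argue that the chosen value is always an endpoint of some tuple with the current fact, and that Case~8 (Line~\ref{line:adjacent}), which reuses $\mathtt{prevWinTe}$, never reopens a sub-interval already consumed by an earlier window. A minor secondary check is that transitioning from one fact to the next does not accidentally create a window straddling two facts, which is ensured by the update of $\mathtt{currFact}$ in Cases~2--7. Once these monotonicity and coverage properties are verified, the endpoint-partition argument closes the proof and yields the global upper bound $n_r+n_s-f_d$.
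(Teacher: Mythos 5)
Your proof is correct and follows essentially the same counting as the paper, which states Proposition~\ref{prop:windowNumber} without an explicit proof but already contains the per-fact version of your bound (at most $2n-1$ output intervals for $n$ input intervals sharing a single fact) inside the proof of Theorem~\ref{thm:readOnce}. Your fact-wise decomposition, the observation that each emitted window has consecutive endpoints of fact-$f$ tuples as its boundaries (so windows inject into the at most $n_{r,f}+n_{s,f}-1$ endpoint gaps), and the summation $\sum_f (n_{r,f}+n_{s,f}-1) = n_r+n_s-f_d$ yield exactly the stated bound.
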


By Proposition~\ref{prop:windowNumber}, the number of candidate
windows considered by the algorithm is linear in the number of time
intervals, and thus to the size of the input relations. Thus, LAWA has
a time complexity of $O(|\mathbf{r}| + |\mathbf{s}|)$, given that $|\mathbf{r}|$ and
$|\mathbf{s}|$ are the numbers of tuples in the input relations
$\mathbf{r}$ and $\mathbf{s}$, respectively.
Moreover, the filtering and lineage-concatenation step for each candidate output
tuple is performed in $O(1)$.  Thus, the overall time complexity for computing
TP set operations is
$O(|\mathbf{r}| \log | \mathbf{r}| + |\mathbf{s}| \log |\mathbf{s}|)$,
but may even be reduced to $O(|\mathbf{r}| + |\mathbf{s}|)$ if counting-based sorting is applicable.
The use of {\em lineage-aware temporal windows} not only avoids the
use for time-consuming additional operations for the filtering and
lineage-concatenation steps, but also allows them to be performed directly at
the time a window is created. That is, no intermediate buffers need to
be maintained (apart from very few pointers), and thus the space
complexity of all TP set operators is constant.

\section{Experimental Evaluation}
\label{sec:exper-eval}
In this section, we evaluate LAWA in comparison to both temporal and
temporal-probabilistic approaches that can be used for the computation
of TP set operations.  We perform experiments with real datasets as
well as with synthetic datasets in which we vary (i) the number of
facts in the input relations and (ii) the percentage of tuples whose
intervals overlap.  In all experiments, our approach empirically
scales according to the bounds we provide in
Section~\ref{sec:basicIUE}.  LAWA is the only scalable approach that
can be used for the computation of all three TP set operations,
outperforming all state-of-the-art approaches for input relations of
more than 10M tuples.  In contrast to existing techniques, LAWA is
robust, i.e., its performance behaves in a predictable manner with
respect to the aforementioned characteristics of the datasets.

\begin{figure*}[ht]\centering
  \begin{subfigure}[b]{0.3\linewidth}\centering
    \scalebox{0.5}{
      \begin{tikzpicture}
        \begin{axis}[scale only axis,
                     height=5cm,
                     width=1.7\linewidth,
                     xlabel=Number of Input Tuples {[K]},
                     ylabel=Runtime {[ms]},
                     label style={font=\large},
                     legend style={font=\large},
                     ymax=1*10^6,
                     tick label style={font=\large},
                     mark size=4pt,
                     xmin = 20,
                     xmax = 200]
          \addplot[color=red,mark=x,mark size=6pt]
            file[skip first]{graphs/synthetic_intersection_small_wal.tsv};
          \addplot[color=blue,mark=+,mark size=6pt]
            file [skip first]{graphs/synthetic_intersection_small_oip.tsv};
          \addplot[color=green,mark=*]
            file [skip first]{graphs/synthetic_intersection_small_ti.tsv};
          \addplot[color=cyan,mark=triangle*]
            file[skip first]{graphs/synthetic_intersection_small_tpdb.tsv};
          \addplot[color=black,mark=o]
            file [skip first]{graphs/synthetic_intersection_small_norm.tsv};
          \legend{LAWA,OIP, TI,TPDB,NORM}
        \end{axis}
      \end{tikzpicture}
    }
    \caption{Set Intersection}
    \label{fig:syntheticIntersectionSmall}
  \end{subfigure}
  \qquad
  \begin{subfigure}[b]{0.28\linewidth} \centering
    \scalebox{0.5}{
      \begin{tikzpicture}
        \begin{axis}[scale only axis,
	             height=5cm,
	             legend style={font=\large},
	             width=1.7\linewidth,
                     mark size=4pt,
                     label style={font=\Large},
                     xlabel=Number of Input Tuples {[K]},
                     ylabel=Runtime {[ms]},
                     tick label style={font=\large} ,
		     ymax=3.25*10^6,
		     xmin = 20,
		     xmax = 200]
          \addplot[color=red,mark=x,mark size=6pt]
            file [skip first] {graphs/synthetic_difference_small_wal.tsv};
          \addplot[color=black,mark=o]
            file [skip first]{graphs/synthetic_difference_small_norm.tsv};			
          \legend{LAWA,NORM}
	\end{axis}
      \end{tikzpicture}
    }
    \caption{Set Difference}
    \label{fig:syntheticDifferenceSmall}
  \end{subfigure}
  \qquad
  \begin{subfigure}[b]{0.28\linewidth}\centering
    \scalebox{0.5}{
      \begin{tikzpicture}
        \begin{axis}[scale only axis,
	             height=5cm,
	             width=1.7\linewidth,
                     mark size=4pt,
	             label style={font=\Large},
	             legend style={font=\large},
		     xlabel=Number of Input Tuples {[K]},
		     ylabel=Runtime {[ms]},
                     tick label style={font=\large} ,
		     ymax=27*10^4,
		     xmin = 20,
		     xmax = 200]		
          \addplot[color=red,mark=x,mark size=6pt]
            file[skip first]{graphs/synthetic_union_small_wal.tsv};
          \addplot[color=cyan,mark=triangle*]
            file[skip first]{graphs/synthetic_union_small_tpdb.tsv};
          \addplot[color=black,mark=o]
            file[skip first]{graphs/synthetic_union_small_norm.tsv};
	  \legend{LAWA,TPDB,NORM}
	\end{axis}
      \end{tikzpicture}
    }
    \caption{Set Union}
  \label{fig:syntheticUnionSmall}
  \end{subfigure}
  \caption{Synthetic Dataset [20K--200K]}
  \vspace*{-0.3cm}
  \label{fig:syntheticAllSmall}
\end{figure*}
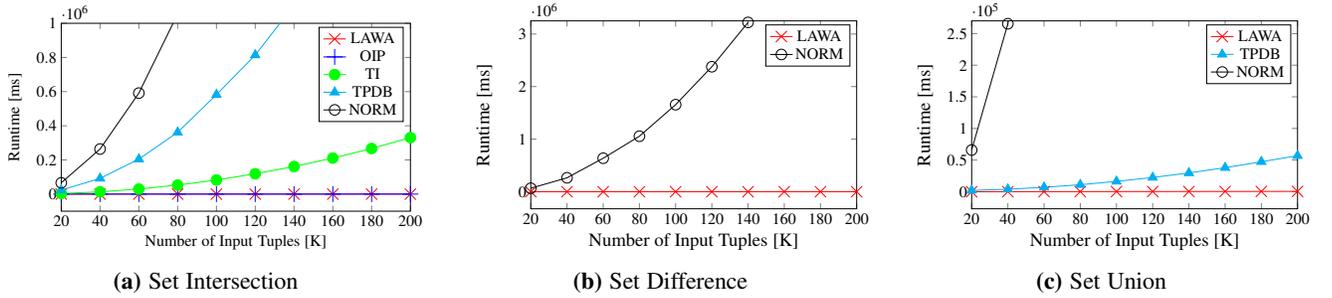

\subsection{Experimental Setup}

All of the following experiments were deployed on a 2xIntel(R) Xeon(R)
CPU E5-24400 @2.40GHz machine with 64GB main memory, running CentOS
6.7. LAWA has been implemented in C++ \footnote{http://www.ifi.uzh.ch/en/dbtg/research/tpset.html}, and all experiments were
performed in main-memory.  No indexes were used.  In cases where
PostgreSQL implementations were used, the maximum memory for sorting
as well as for shared buffers was set to 1GB.

\begin{table}[!htpb]
\caption{Approach Overview}
\label{tab:approachOverview}
\centering
\scalebox{0.9}{
\begin{tabular}{ | M{2cm} !{\VRule[1.5pt]} M{1cm} | M{1cm} | M{1cm}  | @{}m{0cm}@{}}
\hline
\textbf{Approach} &
$\textbf{r} \cup^\kat{Tp} \textbf{s}$ &
$\textbf{r} -^\kat{Tp} \textbf{s}$ &
$\textbf{r} \cap^\kat{Tp} \textbf{s}$ \\ [0.1cm] \specialrule{1.2pt}{0pt}{0pt}
LAWA  & \color{c1}{\cmark} & \color{c1}{\cmark} &
	    \color{c1}{\cmark} & \\ [0.12cm] \hline 
NORM  & \color{c1}{\cmark} & \color{c1}{\cmark} &
		\color{c1}{\cmark} & \\ [0.12cm] \hline 
TPDB  & \color{c1}{\cmark} & \color{c3}{\xmark} &
		\color{c1}{\cmark} & \\ [0.12cm] \hline 
OIP	  & \color{c3}{\xmark} & \color{c3}{\xmark} &
		\color{c1}{\cmark} & \\ [0.12cm] \hline 
TI	  & \color{c3}{\xmark} & \color{c3}{\xmark} &
		\color{c1}{\cmark} & \\ [0.12cm] \hline 
\end{tabular}}
\end{table}

The TP set operations that different approaches can compute are
presented in Table~\ref{tab:approachOverview}. Set difference is the
least-supported operation, followed by set union and set
intersection. Set intersection is the most-sup\-ported operation among
the available systems, since it can be reduced to an interval join
with an equality condition on the non-temporal
attributes. Specifically, we compare our implementation of TP set
operations using LAWA against:

\smallskip
\noindent{\bf Temporal-Probabilistic Database
  (TPDB)~\cite{DyllaMT13}:}
The implementation of TPDB is an application connected with a DBMS and
consists of three stages.  The first stage parses Datalog rules with
temporal predicates and translates them to SQL queries. The second
stage executes the SQL queries in the DBMS. Base relations are stored
in the DBMS, while lineage is kept as an internal data structure in
main-memory. The third stage focuses on lineage processing by
processing the base tuples with their Boolean connectives. We use the
authors' original implementation, connected to PostgreSQL 9.4.3.

\smallskip
\noindent \textbf{Normalize (NORM)~\cite{DignosTODS16}}: The
\emph{Normalize} operator is implemented in the kernel of PostgreSQL
by modifying its parser, executor and optimizer. We migrated the
authors' implementation to PostgreSQL 9.4.3 for a fair comparison. 
To support TP set operations, we introduced reduction rules that are
proper combinations of the temporal and probabilistic reduction 
rules (cf.\ \cite{DignosTODS16,FinkO11}) and we illustrate them in
Fig.~\ref{fig:tpNORM}.

\begin{figure}[!h]
\tikzset{%
  algebra/.style    = {draw, thick, rectangle, minimum height = 2em,
    minimum width = 2em},
}
\tikzset{%
  lineage/.style    = {draw, thick, rectangle, minimum height = 2em,
    minimum width = 2.3cm},
}
\tikzset{%
  filter/.style    = {draw, thick, rectangle, minimum height = 2em,
    minimum width = 2.75cm},
}
\tikzset{%
  block/.style    = {draw, thick, rectangle, minimum height = 2em,
    minimum width = 2em},
}
\centering
\scalebox{0.6}{
\begin{tikzpicture}[auto, thick, node distance=2cm, >=triangle 45]
\draw
	node at (0,0)[right=-3mm] (relR) {\large $\mathbf{r}$}
	node [right= 0.5cm of relR] (relR_line1) {}

	node [above right = 0.3cm and 0.8cm of relR_line1, block] (relNR)
		 {\large {\bf N} ($\mathbf{r}$, $\mathbf{s}$)}
 	node [above = 0.05cm of relNR.east] (relNR1) {}
	node [below = 0.05cm of relNR.east] (relNR2) {}
 	node [above = 0.05cm of relNR.west] (relNR3) {}
	node [below = 0.05cm of relNR.west] (relNR4) {}
	
	node [below = 0.2cm of relR] (relS) {\large $\mathbf{s}$}
	node [right= 0.25cm of relS] (relS_line1) {}

	node [below right = 0.3cm and 1.05cm of relS_line1, block] (relNS) 
		 {\large {\bf N} ($\mathbf{s}$, $\mathbf{r}$)}
 	node [above = 0.05cm of relNS.east] (relNS1) {}
	node [below = 0.05cm of relNS.east] (relNS2) {}
 	node [above = 0.05cm of relNS.west] (relNS3) {}
	node [below = 0.05cm of relNS.west] (relNS4) {}
	
	node [above right = 0.3cm and 3cm of relNR.north, algebra] (setD)
	{\large \color{c2}{$\LJoin$}}
	node [above = 0.05cm of setD.west] (setD1) {}
	node [below left = 0.05cm and 1.2cm of setD.west] (setD11) {}
	node [below = 0.05cm of setD.west] (setD2) {}
	
	node [below = 1.3cm of setD, algebra] (setI)
	{\large \color{c1}{$\bowtie$}}
	node [above left = 0.05cm and 0.5cm of setI.west] (setI11) {}
	node [below left = 0.05cm and 0.5cm of setI.west] (setI22) {}
	node [above = 0.05cm of setI.west] (setI1) {}
	node [below = 0.05cm of setI.west] (setI2) {}

	node [below = 1.3cm of setI, algebra] (setU) {\large \color{c3}{$\cup$}}
	node [above = 0.05cm of setU.west] (setU1) {}
	node [left = 0.75cm of setU1] (setU11) {}
	node [below = 0.05cm of setU.west] (setU2) {}
	
	node [right = 0.6cm of setI, lineage] (setIL) {\large \color{c1}{and($\lambda_r$, $\lambda_s$)}}
	node [right = 0.6cm of setD, lineage] (setDL) {\large \color{c2}{andNot($\lambda_r$, $\lambda_s$)}}
	node [right = 0.6cm of setU, lineage,white] (setUL) {}
	
	node [right = 0.1cm of setUL, lineage] (dupU) {\large $\vartheta$\color{c3}{or($\lambda$)}}

	node [right =0.8cm of dupU] (resU) {\large \color{c3}{\large $\mathbf{r} \cup^\kat{Tp} \mathbf{s}$}}
	node [right =3.2cm of setIL] (resI) {\large \color{c1}{\large $\mathbf{r}\cap^\kat{Tp} \mathbf{s}$}}
	node [right =2.7cm of setDL] (resD) {\large \color{c2}{\large $\mathbf{r} -^\kat{Tp} \mathbf{s}$}};

\draw[->] (relR.north) |- (relNR3.center);
\draw[-] (relR.east) -| (relS_line1.center);
\draw[->] (relS_line1.center) |- (relNS3.center);

\draw[->] (relS.south) |- (relNS4.center);
\draw[-] (relS.east) -| (relR_line1.center);
\draw[->] (relR_line1.center) |- (relNR4.center);

\draw[->] (relNR.north) |- (setD1.center);
\draw[-] (relNS1.center) -| (setD11.center);
\draw[->] (setD11.center) |- (setD2.center);

\draw[->] (relNS.south) |- (setU2.center);

\draw[-] (relNS2.center) -| (setI22.south);
\draw[->] (setI22.south) |- (setI2.center);
\draw[-] (relNR1.center) -| (setI11.north);
\draw[->] (setI11.north) |- (setI1.center);

\draw[-] (relNR2.center) -| (setU11.center);
\draw[->] (setU11.center) |- (setU1.center);

\draw[->] (setI)  -- (setIL);
\draw[->] (setD) -- (setDL);
\draw[->] (setU) -- (dupU);

\draw[->] (setIL) -- (resI);
\draw[->] (setDL) -- (resD);
\draw[->] (dupU) -- (resU);

\end{tikzpicture}}
\vspace{0.1cm}
\caption{TP set operations using \emph{NORM}. The approach adopted is a
combination of the processes described in Fig.~\ref{fig:temporalDBs} and Fig.~\ref{fig:probabilisticDB}}
\label{fig:tpNORM}
\end{figure}
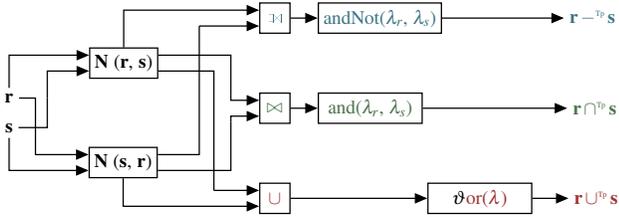

\smallskip
\noindent \textbf{Timeline Index (TI)~\cite{KaufmannTI}:}
This approach was used, in its original implementation, for the
computation of TP set intersection, by applying a temporal join with
an additional condition on the non-temporal attributes as well as
the lineage-concatenating function {\bf\emph{and}} (see Table.~\ref{tab:lineageFunctions}).

\smallskip
\noindent\textbf{Overlap Interval Partition Join
  (OIP)~\cite{DignosBG14}}:
This approach is designed for overlap joins but does not support an
additional filtering condition. For our experimental evaluation, we
extended the authors' implementation, so that an equality condition on
the non-temporal attributes of the tuples can be applied.  In order to
use OIP to compute set intersection, we first split each input
relation into groups based on the facts included in each tuple.  We
then applied the OIP partitioning and join over each of these groups
and merged the results.

\subsection{Synthetic Dataset}

The parameters that we consider to populate a relation of our dataset
are: (a) the length of the tuples' intervals, (b) the maximum time
distance between two tuples that are consecutive and include the same
fact, and (c) the number of different facts included in tuples of the
relation. Assume all tuples of relations $\mathbf{r}$ and $\mathbf{s}$
have the same fact $f$. We define the \emph{overlapping factor} of $f$
as the number of maximal subintervals during which a tuple from
$\mathbf{r}$ and $\mathbf{s}$ overlap, divided by the total number of
maximal subintervals.  Its value thus ranges in $[0,1]$. The higher
the value of this metric, the more pairs of input tuples form output
tuples, and therefore the more we stress-test the performance of the
various approaches for TP set operations. According to
Definition~\ref{def:tpSetOpDef}, overlapping time points are relevant
for all set operations, whereas time points for which a fact is only
included in the left input relation are only relevant for TP set
difference.

\smallskip
\noindent\textbf{1. Runtime.}
In the first setting, we fix the input tuples of all datasets to a
single fact. We fix the overlapping factor to 0.6, and we randomly
select the length of the intervals and the distance between two
consecutive intervals in $[0, 3]$. We then systematically increase the
number of input tuples. In Fig.~\ref{fig:syntheticAllSmall} and
Fig.~\ref{fig:syntheticAllLarge}, we illustrate the performance of all
the approaches for the computation of TP set operations for smaller
datasets with up to 100K tuples and for larger datasets with up to 50M
tuples, respectively.

\smallskip
\noindent\textbf{Smaller Datasets [20K--200K]:}
In Fig.~\ref{fig:syntheticAllSmall}, the datasets range from 20K to
200K tuples. Fig.~\ref{fig:syntheticIntersectionSmall} focuses on TP
set intersection. The runtimes of LAWA and OIP hardly increase
for the small datasets.  Both outperform NORM, TI and TPDB by a large
margin. OIP is specifically designed for the computation of an overlap
join, to which TP set intersection is reduced. NORM exhibits poor
performance even if the number of input tuples is only 50K. In this
approach, regardless of the operation, the two input relations need to
first be normalized, such that, in their adjusted versions, the
intervals would be either equal or disjoint.  The most expensive part
of the normalization of a relation $\mathbf{r}$ using relation
$\mathbf{s}$ is an outer join that uses inequality conditions on the
start and end points to guarantee an overlap of the
intervals. Although an additional inner join is applied in the case of
TP set intersection, the performance of NORM suffers because of the
outer join.  Since all tuples include the same fact, but not all of
them overlap, such a join has quadratic complexity~\cite{Khayyat2015}.

In TPDB, queries are expressed using Datalog. Each rule may contain a
conjunction of literals over the arithmetic predicates $=^T$, $\neq^T$
and $\leq^T$. In order to express TP set intersection, we use 6
reduction rules, one for each overlap relationship defined by
Allen~\cite{Allen1983}. TPDB then translates each rule to an inner
join that is submitted to PostgreSQL. Although there is an equality
condition on the non-temporal attributes, it is not used in the cases
examined in Fig.~\ref{fig:syntheticAllSmall} where all the tuples
include the same fact. Thus, the joins are only based on the
inequality conditions and perform a larger number of comparisons. TPDB
is slower than the other approaches, but it is still faster than NORM,
because the latter has to adjust each relation.

Although TI is faster than NORM and TPDB, it is one of the slowest
approaches for set intersection. The index allows for the avoidance of
redundant comparisons related to the interval overlap condition, and
its creation cost is a small percentage of its runtime. Given the
indexes of the input relations, TI performs a merge-join on them and
produces ($r_{id}$, $s_{id}$) pairs. In order to form the output
tuples, the input tuples corresponding to each pair need to be
retrieved. Given the value of the overlapping factor and the existence
of only one fact, a higher number of joined pairs is produced and thus
a higher number of lookups is required. OIP splits the tuples of each
input relation into partitions, based on the start/end points of their
interval and its duration. Consequently, it offers a mechanism that
performs interval comparisons between tuples only if their partitions
overlap. If the partitions overlap, OIP performs a nested loop between
the tuples of the two relations. As the overlapping factor is 0.6,
which indicates that most of the pairs produced in the nested loop
will indeed be output pairs, OIP has a very small percentage of false
hits. Although OIP is tailored for an overlap join, for datasets of up
to $200K$ tuples LAWA's performance is competitive, being on average
30 ms slower.

In the case of TP set difference, as illustrated in
Fig.~\ref{fig:syntheticDifferenceSmall}, LAWA clearly outperforms
NORM, for the same reasons as for TP set intersection.
Fig.~\ref{fig:syntheticUnionSmall} compares LAWA with NORM and TPDB
during the computation of TP set union. LAWA has the lowest runtime,
whereas NORM has the highest one, being 5 orders of magnitude slower
than LAWA. The window that sweeps over all the input tuples in LAWA
makes no false hits in this case, since all of the subintervals that
the window defines correspond to output intervals. NORM no longer
requires a join but a union after the relations have been
normalized. However, as in all the previous operations, NORM's
performance is hindered by the computation of the timestamp
adjustment. TPDB can also compute TP set union by using a deduction
rule that corresponds to a conventional union instead of joins, and
thus its performance is significantly better in comparison to TP set
intersection.

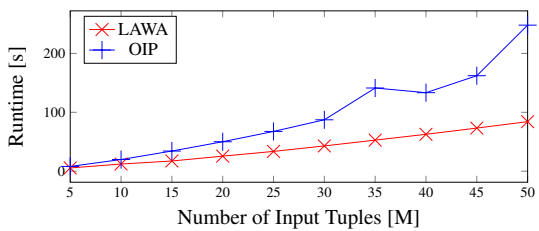
\begin{figure}[ht] \centering
  \scalebox{0.57}{
    \begin{tikzpicture}
      \begin{axis}[label style={font=\large},
                   scale only axis,
                   height=4cm,
                   mark size=4pt,
                   label style={font=\Large},
                   width=1.2\linewidth,
                   xlabel=Number of Input Tuples {[M]},
                   ylabel=Runtime {[s]},
		   legend style={font=\large},
                   legend style={legend pos=north west},
		   xmin = 5,
		   xmax = 50]
        \addplot[color=red,mark=x,mark size=6pt]
          file[skip first]{graphs/synthetic_intersection_large_wal.tsv};
        \addplot[color=blue,mark=+,mark size=6pt]
          file[skip first]{graphs/synthetic_intersection_large_oip.tsv};
	\legend{LAWA,OIP}
      \end{axis}
    \end{tikzpicture}
  }
  \caption{Synthetic Dataset [5M--50M]}
  \label{fig:syntheticAllLarge}
   \vspace{-0.4cm}
\end{figure}

\smallskip
\noindent\textbf{Larger Datasets [5M--50M]:} LAWA is the only
scalable approach that can be used for the computation of all three TP
set operations. In Fig.~\ref{fig:syntheticAllLarge}, we depict the
performance of LAWA for the computation of TP set intersection for
larger datasets. The overlapping factor of the datasets remains fixed
to 0.6, and the dataset sizes vary from 5M to 50M tuples. While OIP is
also considered, the other approaches that were included in
Fig.~\ref{fig:syntheticIntersectionSmall} are not taken into
consideration, since their runtimes were already two to five orders of
magnitude higher when applied on the smaller datasets.  After 30M
tuples, LAWA is at least 2 times faster than OIP and continues to
scale better.  OIP produced a small number of partitions that contain
many tuples each.  Such partitions are likely to overlap and the
nested loop that matches their tuples is computationally expensive.
As far as TP set difference and TP set union are concerned, LAWA has
similar runtime as in the case of TP set intersection and it is the
only scalable approach suitable for their computation within at most
100 seconds.

\smallskip
\noindent\textbf{2. Robustness.}
In this experiment, we show that LAWA is a scalable operator whose
runtime only depends on the size of the dataset and not on its other
characteristics (i.e., neither on the value of the overlapping factor
nor on the number of distinct facts captured by the input tuples).

\begin{table}[htb]
\caption{Dataset Characteristics}
\label{tab:datasets}
\center
\scalebox{0.9}{
\begin{tabular}{ | M{3cm} !{\VRule[1.5pt]} M{0.4cm} | M{0.3cm} |
M{0.3cm}| M{0.3cm}| M{0.3cm} | @{}m{0pt}@{}}

\hline      
      Overlapping Factor & 0.03 & 0.1 & 0.4 & 0.6 & 0.8 & \\ [0.1cm]
      \hline
      Max. Interval Length (R) & 100  & 100 & 50 & 3 & 10 & \\ [0.1cm]
      \hline
      Max. Interval Length (S)  & 3  & 10  & 10  & 3  & 10 & \\ [0.1cm]
      \hline
      Max. Time Distance & \multicolumn{5}{c|}{3} & \\ [0.1cm]
      \hline
\end{tabular}}
\end{table}

In Fig.~\ref{fig:metricInd}, the performance of LAWA for set 
intersection is compared with the one of OIP, which has been
the most competitive approach for datasets where all the tuples
include the same fact. This time, the size of the dataset is fixed
to 30M, and the overlapping factor is assigned to four different
values in $[0,1]$. Table~\ref{tab:datasets} depicts the
\emph{overlapping factor} of the datasets as well as their
\emph{maximum interval lengths} (in terms of the number of time points).
The runtime of OIP increases as the overlapping metric
increases. The reason is that the higher the overlapping factor,
the more tuples occur in a partition and the nested loop performed in
each partition is very time consuming. On the other hand, only minor
variations are observed in the runtime of LAWA for the different
values of the overlapping factor, thus demonstrating that the performance
of LAWA is not negatively affected by interval-related
characteristics of the dataset.

\begin{figure}[!htbp]
\center
\begin{subfigure}[b]{\linewidth}
  \centering
    \scalebox{0.57}{
		\begin{tikzpicture}
		\begin{axis}[
		legend columns=-1, 
		label style={font=\Large},
       tick label style={font=\large} ,
				legend style={nodes={scale=0.7, transform shape, font=\large},legend pos=north west },
		xlabel=Overlapping Factor, 
		ylabel=Runtime {[s]},
		scale only axis, 
	height=4cm,
	width=1.2\linewidth,
		bar width = 10pt,
	    xtick={0.03,0.1,0.4,0.6,0.8},
       symbolic x coords={0.03,0.1,0.4,0.6,0.8},
       ybar=2*\pgflinewidth,
       ymax=300,
       enlargelimits=0.2
]
			\addplot[color=red, fill=red, area legend] coordinates  {(0.4, 205.60455) (0.6, 191.86334) (0.8, 213.6621) (0.03,	187.78823) (0.1, 192.92347)}; 
			\addplot[color=blue, fill=blue, area legend] coordinates  {(0.8, 300.725) (0.1, 110.581984) (0.6, 284.39178) (0.03, 86.33033) (0.4, 152.53466)}; 
		\legend{LAWA,OIP}
		\end{axis}
    	\end{tikzpicture}} 
    	\caption{Performance for varying overlapping factors.}
  \label{fig:metricInd}
\end{subfigure}


\begin{subfigure}[b]{\linewidth}
  \centering
    \scalebox{0.57}{
		\begin{tikzpicture}
		\begin{axis}[
		legend columns=-1, 
		label style={font=\Large},
       tick label style={font=\large} ,
		legend style={nodes={scale=0.7, transform shape, font=\large},at={(0.85,0.98)} },
		xlabel=Approaches, 
		ylabel=Runtime {[ms]},
		bar width = 6pt,
		scale only axis, 
	height=4cm,
	width=1.2\linewidth,
       xtick={NORM, LAWA,OIP, TI,TPDB},
       symbolic x coords={NORM, LAWA,OIP, TI,TPDB},
       ybar=2*\pgflinewidth,
       ymax=1*10^6,
       ymode = log,
       enlargelimits=0.2
]
			\addplot +[area legend] coordinates {(NORM, 1330.2607) (LAWA, 281.7925) (OIP, 2377.918) (TI, 25.37) (TPDB, 5057.6665)};
			\addplot +[area legend] coordinates {(NORM, 1611969.8) (LAWA, 293.8) (OIP, 3.224) (TI, 1312.4669) (TPDB, 5057.6665)};
						\addplot +[area legend] coordinates {(NORM, 246397.38) (LAWA, 299.1845) (OIP, 23.836) (TI, 18425.906) (TPDB, 13435)};
			\addplot +[area legend] coordinates {(NORM, 25128.434) (LAWA, 299.19797) (OIP, 42.956) (TI, 39676.266) (TPDB, 87842.336)};
			\addplot  +[area legend] coordinates {(NORM, 3262059.8) (LAWA, 303.87924) (OIP, 60.52) (TI, 118737.31) (TPDB, 815180)};
			
			\legend{30000F, 100F, 10F, 5F, 1F}
		\end{axis}
    	\end{tikzpicture}}
  \caption{Performance for varying numbers of distinct facts.}
  \label{fig:groupInd}
\end{subfigure}
\caption{Robustness Tests}
\label{fig:scaleRob}
\end{figure}
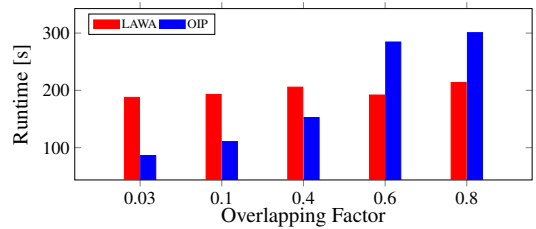
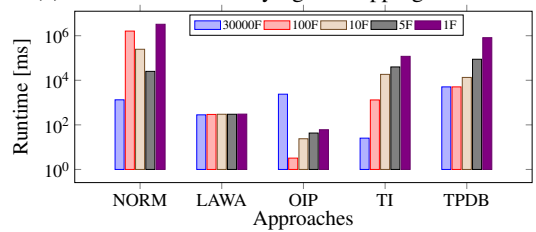

\begin{figure*}[htb] \centering
  \begin{subfigure}[b]{0.3\linewidth}\centering
    \scalebox{0.5}{
      \begin{tikzpicture}
        \begin{axis}[scale only axis,
                     height=5cm,
                     width=1.7\linewidth,
                     xlabel=Number of Input Tuples {[K]},
                     ylabel=Runtime {[ms]},
                     label style={font=\large},
                     legend style={font=\large},
                     ymax=160*10^3,
                     tick label style={font=\large},
                     mark size=4pt,
                     xmin = 20,
                     xmax = 200]
          \addplot[color=red,mark=x,mark size=6pt]
            file [skip first]{graphs/meteo_intersection_small_wal.tsv};
          \addplot[color=blue,mark=+,mark size=6pt]
            file[skip first]{graphs/meteo_intersection_small_oip.tsv};
          \addplot[color=green,mark=*]
            file[skip first] {graphs/meteo_intersection_small_ti.tsv};
          \addplot[color=cyan,mark=triangle*]
            file[skip first] {graphs/meteo_intersection_small_tpdb.tsv};
          \addplot[color=black,mark=o]
            file[skip first] {graphs/meteo_intersection_small_norm.tsv};
          \legend{LAWA,OIP,TI,TPDB,NORM}
	\end{axis}
      \end{tikzpicture}
    }
    \caption{Set Intersection}
    \label{fig:meteoIntersectionSmall}
  \end{subfigure}
  \qquad
  \begin{subfigure}[b]{0.3\linewidth}\centering
    \scalebox{0.5}{
      \begin{tikzpicture}
        \begin{axis}[scale only axis,
                     height=5cm,
                     width=1.7\linewidth,
                     xlabel=Number of Input Tuples {[K]},
                     ylabel=Runtime {[ms]},
                     label style={font=\large},
                     legend style={font=\large},
                     ymax=160*10^3,
                     tick label style={font=\large},
                     mark size=4pt,
                     xmin = 20,
                     xmax = 200]
          \addplot[color=red,mark=x,mark size=6pt]
            file [skip first] {graphs/meteo_difference_small_wal.tsv};
          \addplot[color=black,mark=o]
            file [skip first] {graphs/meteo_difference_small_norm.tsv};
	  \legend{LAWA,NORM}
	\end{axis}
      \end{tikzpicture}
    }
  \caption{Set Difference}
  \label{fig:meteoDifferenceSmall}
  \end{subfigure}%
  \qquad
  \begin{subfigure}[b]{0.3\linewidth} \centering
    \scalebox{0.5}{
      \begin{tikzpicture}
        \begin{axis}[scale only axis,
                     height=5cm,
                     width=1.7\linewidth,
                     xlabel=Number of Input Tuples {[K]},
                     ylabel=Runtime {[ms]},
                     label style={font=\large},
                     legend style={font=\large},
                     ymax=160*10^3,
                     tick label style={font=\large},
                     mark size=4pt,
                     xmin = 20,
                     xmax = 200]
	\addplot[color=red,mark=x,mark size=6pt]
          file[skip first] {graphs/meteo_union_small_wal.tsv};
	\addplot[color=cyan,mark=triangle*]
          file[skip first] {graphs/meteo_union_small_tpdb.tsv};
	\addplot[color=black,mark=o]
          file[skip first] {graphs/meteo_union_small_norm.tsv};
	\legend{LAWA,TPDB,NORM}
	\end{axis}
\end{tikzpicture}}
\caption{Set Union}
\label{fig:meteoUnionSmall}
\end{subfigure}%
 \vspace*{-0.1cm}
\caption{Meteo Swiss Dataset}
\label{fig:meteoAll}
\end{figure*}
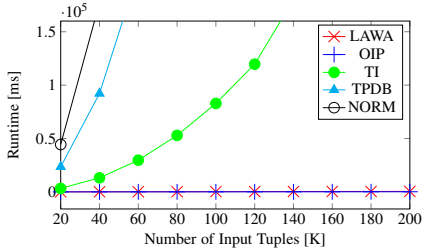
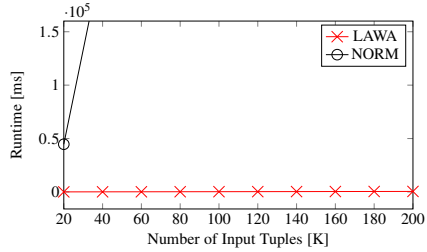
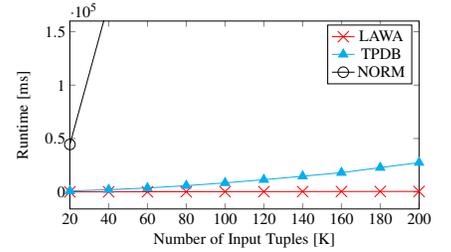

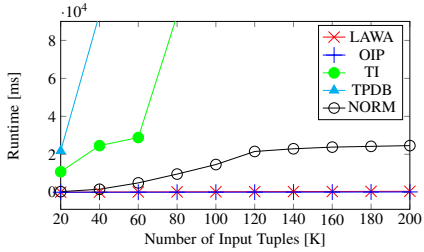
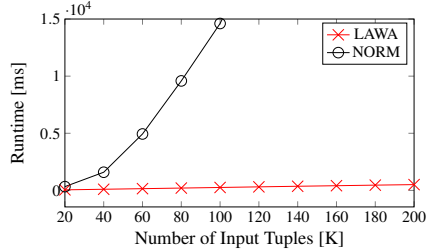
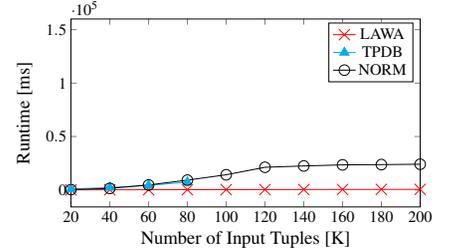
\begin{figure*}[htb]\centering
  \begin{subfigure}[b]{0.3\linewidth}\centering
    \scalebox{0.5}{
      \begin{tikzpicture}
        \begin{axis}[scale only axis,
                     height=5cm,
                     width=1.7\linewidth,
                     xlabel=Number of Input Tuples {[K]},
                     ylabel=Runtime {[ms]},
                     label style={font=\large},
                     legend style={font=\large},
                     ymax=9*10^4,
                     tick label style={font=\large},
                     mark size=4pt,
                     xmin = 20,
                     xmax = 200]
          \addplot[color=red,mark=x,mark size=6pt]
            file [skip first]{graphs/webkit_intersection_small_wal.tsv};
          \addplot[color=blue,mark=+,mark size=6pt]
            file [skip first] {graphs/webkit_intersection_small_oip.tsv};
          \addplot[color=green,mark=*]
            file [skip first] {graphs/webkit_intersection_small_ti.tsv};
          \addplot[color=cyan,mark=triangle*]
            file [skip first] {graphs/webkit_intersection_small_tpdb.tsv};
          \addplot[color=black,mark=o]
            file [skip first] {graphs/webkit_intersection_small_norm.tsv};
	\legend{LAWA,OIP, TI,TPDB,NORM}
      \end{axis}
    \end{tikzpicture}
  }
  \caption{Set Intersection}
  \label{fig:webkitIntersectionSmall}
  \end{subfigure}
\qquad
\begin{subfigure}[b]{0.3\linewidth}
\centering
\scalebox{0.5}{
\begin{tikzpicture}
\begin{axis}[
	scale only axis,
	height=5cm,
	width=1.7\linewidth,
	label style={font=\Large},
	tick label style={font=\large} ,
	xlabel=Number of Input Tuples {[K]},
        mark size=4pt,
	ylabel=Runtime {[ms]},
  	legend style={font=\large},
  	ymax=15*10^3,
  	xmin = 20, 
  	xmax = 200]	
	\addplot[color=red,mark=x,mark size=6pt]
        file [skip first] {graphs/webkit_difference_small_wal.tsv};
	\addplot[color=black,mark=o]
        file [skip first] {graphs/webkit_difference_small_norm.tsv};
	\legend{LAWA,NORM}
\end{axis}
\end{tikzpicture}}
\caption{Set Difference}
\label{fig:webkitDifferenceSmall}
\end{subfigure}%
\qquad
\begin{subfigure}[b]{0.3\linewidth}
\centering
\scalebox{0.5}{
\begin{tikzpicture}
\begin{axis}[
	scale only axis,
	height=5cm,
	width=1.7\linewidth,
	label style={font=\Large},
	tick label style={font=\large} ,
        mark size=4pt,
	xlabel=Number of Input Tuples {[K]},
	ylabel=Runtime {[ms]},
  	legend style={font=\large},
  	ymax=160*10^3,
  	xmin = 20, 
  	xmax = 200]]
 	\addplot[color=red,mark=x,mark size=6pt]
        file [skip first] {graphs/webkit_union_small_wal.tsv};
	\addplot[color=cyan,mark=triangle*]
        file [skip first] {graphs/webkit_union_small_tpdb.tsv};
	\addplot[color=black,mark=o]
        file [skip first] {graphs/webkit_union_small_norm.tsv};
	\legend{LAWA,TPDB,NORM}
	\end{axis}
\end{tikzpicture}}
\caption{Set Union}
\label{fig:webkitUnionSmall}
\end{subfigure}%
\caption{Webkit Dataset}
 \vspace{-0.5cm}
\label{fig:webkitAll}
\end{figure*}


In Fig.~\ref{fig:groupInd}, we show how the number of distinct facts
in the input relations affects the performance of LAWA and all
other approaches during a TP set intersection. The size of the dataset
is set to 60K, so that the runtimes of the approaches are comparable,
and the overlapping metric is set to 0.6. The number of facts is set
to values much less than the size of the dataset, but also to a value
that is equal to half the size of the dataset. The runtime of LAWA 
remains stable as the number of the facts included in the input tuples
decreases, whereas the performance of the other approaches 
deteriorates. OIP is an exception since, if the number of facts
becomes comparable to the number of tuples, it suffers from the
overhead of partitioning the tuples of each fact, performing the
corresponding join and merging the results. Concerning the other
approaches, TI has a better performance than LAWA but only in the case
of 30K facts. This behaviour is expected, since there is a low number
of joined pairs, thus reducing the number of required lookups. NORM's
performance improves as well when the number of facts increases, but
this approach does not scale to datasets with more than 30K tuples.
TPDB, on the other hand, appears to have diminishing improvements. 

\subsection{Real-World Datasets}

In this subsection, we compare the runtimes of TP set operations using
two real-world temporal data\-sets. The main properties of these
datasets are summarized in Table~\ref{tab:realDatasetSummary}. The
Meteo Swiss dataset\footnote{\scriptsize{Federal Office of Meteorology
    and Climatology: \url{http://www.meteoswiss.ch} (2016)}} includes
temperature predictions that have been extracted from the website of
the Swiss Federal Office of Meterology and Climatology.  The
measurements were taken at 80 different meteorological stations in
Switzerland from 2005 to 2015. Measurements are 10 minutes apart and
-- in order to produce intervals -- we merged time points whose
measurements differ by less than 0.1.  The Webkit
dataset\footnote{\scriptsize{The WebKit Open Source
    Project: \url{http://www.webkit.org} (2012)}} \cite{DignosBG14,
  PlatovICDE16, CafagnaB17} records the history of 484K files of the
SVN repository of the Webkit project over a period of 11 years at a
granularity of milliseconds. The valid times indicate the periods when
a file remained unchanged. For both datasets we produced a second
relation by shifting the intervals of the original dataset, without
modifying the lengths of the intervals. The start/end points of the
new relation were randomly chosen, following the distribution of the
original ones.

\begin{table}[htb]
\caption{Real-World Dataset Properties}
\label{tab:realDatasetSummary}
\centering
\scalebox{0.9}{
\begin{tabular}{ |  M{5cm}  !{\VRule[1.5pt]} M{1.2cm} | M{1.2cm}  | @{}m{0cm}@{}}
  \hline
  \textbf{}            &  \textbf{Meteo} &  \textbf{Webkit} & \\  \specialrule{1.2pt}{0pt}{0pt}
  Cardinality     & 10.2M  & 1.5M  &  \\[0.1cm] \hline 
  Time Range      & 347M   & 7M    &   \\ [0.1cm] \hline 
  Min. Duration   & 600    & 0.02  &   \\ [0.1cm]\hline 
  Max. Duration   & 19.3M  & 6M    & \\ [0.1cm]\hline 
  Avg. Duration   & 152M   & 1.7M  &   \\ [0.1cm]\hline 
  Num. of Facts   & 80     & 484K  & \\ [0.1cm]\hline 
  Distinct Points & 545K   & 144K  &   \\ [0.1cm]\hline 
  Max Num. of Tuples (per time point)	& 140 & 369K  &   \\[0.1cm] \hline 
  Avg Num. of Tuples (per time point) & 37   & 21  &   \\ [0.1cm]\hline 
\end{tabular}}
\end{table}

In Fig.~\ref{fig:meteoAll} and Fig.~\ref{fig:webkitAll}, we perform
TP set intersection, difference and union over two equally sized
relations created from random subsets of the initial dataset and
its shifted counterpart, respectively. The runtime of each approach
is based on the number of tuples in the input relations. In all
cases, LAWA has the best performance. All approaches perform similarly
to the synthetic dataset, with the exception of TI and NORM for the
Webkit dataset. In this dataset, the maximum number of tuples starting
or ending at a certain time point is very high, thus negatively
affecting the performance of TI that has to make pairs among all of
the tuples at a time point before it rejects the ones that do not match
the nontemporal condition. Also, the number of facts is much higher
than in the Meteo Swiss Dataset, making NORM significantly faster.


\section{Conclusions}
\label{sec:conclusion}

We proposed a novel data model that---for the first time in
the literature---unifies the two areas of temporal and
probabilistic databases under a sequenced semantics. We
defined and implemented TP set operations, which can be
supported very efficiently for a wide range of queries but
received only very little attention so far. We introduced
the lineage-aware temporal window as a mechanism to
accelerate the computation of TP set operations. Our LAWA
algorithm produces lineage-aware temporal windows that can
be filtered directly by the time of their creation based on
input lineage expressions. Using a generic window-sweeping
technique, LAWA manages to produce all output intervals,
not only for TP set intersection but also for TP set
difference and TP set union, in a scalable and predictable
manner. A thorough experimental evaluation reveals that our
implementation is robust and outperforms comparable approaches
from both temporal and probabilistic databases. As future work,
we intend to investigate both tuple correlations and support
for full relational algebra.

\bibliographystyle{IEEEtran}
\bibliography{IEEEabrv,citation}

\end{document}